\def\mdseries@tt{m}             
\def\BibTeX{{\rm B\kern-.05em{\sc i\kern-.025em b}\kern-.08emT\kern-.1667em\lower.7ex\hbox{E}\kern-.125emX}}
\begin{document}
\sloppy                         
%
\title[Aggregating Votes with Local Differential Privacy]{Aggregating Votes with Local Differential Privacy: \\Usefulness, Soundness vs. Indistinguishability}

%
\author{Shaowei Wang, Jiachun Du}
\email{seawellwang@tencent.com}
\affiliation{
  \institution{Tencent Games}
}
\author{Wei Yang$^*$, Xinrong Diao, Zichun Liu,\ \ \ \ \ \ \ \  Yiwen Nie, Liusheng Huang$^*$, Hongli Xu}
\affiliation{University of Science and Technology of China\\
$^*$ Corresponding Authors}

%
\renewcommand{\shortauthors}{}

%
\begin{abstract}
Voting plays a central role in bringing crowd wisdom to collective decision making, meanwhile data privacy has been a common ethical/legal issue in eliciting preferences from individuals. This work studies the problem of aggregating individual's voting data under the local differential privacy setting, where usefulness and soundness of the aggregated scores are of major concern. One naive approach to the problem is adding Laplace random noises, however, it makes aggregated scores extremely fragile to new types of strategic behaviors tailored to the local privacy setting: data amplification attack and view disguise attack. The data amplification attack means an attacker's manipulation power is amplified by the privacy-preserving procedure when contributing a fraud vote. The view disguise attack happens when an attacker could disguise malicious data as valid private views to manipulate the voting result. 

In this work, after theoretically quantifying the estimation error bound and the manipulating risk bound of the Laplace mechanism, we propose two mechanisms improving the usefulness and soundness simultaneously: the weighted sampling mechanism and the additive mechanism. The former one interprets the score vector as probabilistic data. Compared to the Laplace mechanism for Borda voting rule with $d$ candidates, it reduces the mean squared error bound by half and lowers the maximum magnitude risk bound from $+\infty$ to $O(\frac{d^3}{n\epsilon})$. The latter one randomly outputs a subset of candidates according to their total scores. Its mean squared error bound is optimized from $O(\frac{d^5}{n\epsilon^2})$ to $O(\frac{d^4}{n\epsilon^2})$, and its maximum magnitude risk bound is reduced to $O(\frac{d^2}{n\epsilon})$. Experimental results validate that our proposed approaches averagely reduce estimation error by $50\%$ and are more robust to adversarial attacks.

\end{abstract}

%
%
\begin{CCSXML}
<ccs2012>
<concept>
<concept_id>10002978.10003029.10011150</concept_id>
<concept_desc>Security and privacy~Privacy protections</concept_desc>
<concept_significance>500</concept_significance>
</concept>
</ccs2012>
\end{CCSXML}

\ccsdesc[500]{Security and privacy~Privacy protections}

%
\keywords{privacy, vote, security, aggregation, election}

%

%
\maketitle

\section{Introduction}\label{sec:intro}
Collective decision making is widely adopted by governing organizations and commercial service providers, which benefits from the wisdom of crowd via aggregating individual preferences. For example, during an election for social choice, a profile of ranking data from voters is summarized to determine the final preference ordering over several options; when online service providers are choosing the next movement, millions of users' preference data are aggregated to measure relative popularity between alternative treatments.

The mapping from many individual preferences to a single resulting ordering is called the voting rule. An intuitive and fundamental class of voting rule is positional voting rule, the general idea of which is assigning each candidate a score according to its position/rank in each voter's preference. Examples of positional voting rule include the Borda counts \cite{black1958theory}, plurality voting and Nauru voting \cite{reilly2002social}. Specifically in Borda voting, the $i$-th candidate in a vote scores $d-i$ points, where $d$ is the number of options or candidates. If $4$ voters' preferences over $5$ candidates are:
$$\text{voter}\ 1:\ \ \ A_3\ \ \ A_2\ \ \ A_1\ \ \ A_4\ \ \ A_5;\ \ \ \ \ \ \ \ \ \ \text{voter}\ 2:\ \ \ A_2\ \ \ A_3\ \ \ A_5\ \ \ A_4\ \ \ A_1;$$
$$\text{voter}\ 3:\ \ \ A_5\ \ \ A_2\ \ \ A_3\ \ \ A_4\ \ \ A_1;\ \ \ \ \ \ \ \ \ \ \text{voter}\ 4:\ \ \ A_1\ \ \ A_2\ \ \ A_5\ \ \ A_3\ \ \ A_4,$$
following the rule of Borda counts, the candidate $A_3$ scores $10$ points, and the candidate $A_2$ wins the voting with highest $13$ points. 


Privacy is a basic requirement in secured voting systems \cite{fujioka1992practical}, as providing secrecy of votes could avoid leaking personal preferences and help to elicit honest responses, especially when voting on sensitive topics. Privacy threats not only come from parties outside the voting system, but also come from the administrator, the counter or other voters who might want to infer certain voters' votes. Secure multi-party computation \cite{chaum1988multiparty} alleviates these problems by securely aggregating scores, but is still fragile to collusion between the counter and other voters, and may have efficiency issues for large-scale online voting systems involving millions of voters. Differential privacy \cite{dwork2011differential} could also be employed for privacy preserving voting, which ensures distinguishability of results no matter any single vote presents or not in the voting profile. However, it relies on the existence of a trustful data curator/counter for all voters. 

Another paradigm to privacy preserving voting is local differential privacy, which sanitizes the vote locally and independently on the voter's side, and ensures up to $\exp(\epsilon)$ distinguishability on outputting probabilities no matter what the true vote a voter holds. Local privacy has the advantages of being information-theoretically rigid, computationally efficient and operationally flexible. The voter has full controllability during the privacy preserving procedure without the trust of any parties, the counter/administrator is also tolerable to voters' unsynchronized opt-out, withdrawal and modification actions on votes. These advantages make local differential privacy the best fit for recently enacted General Data Protection Regulation (GDPR \cite{voigt2017eu}) in the European Union, which emphasizes data owner's controllability on contribution, storage, analysis and transfer of their data. 


A straight-forward way to realize local differential privacy for voting data is adding Laplace random noises. After representing votes in a score form as in Table \ref{tab:scoredvote}, Laplace noises of scale $\frac{\epsilon}{\Delta}$ are independently added to each score in a vote, here the $\epsilon$ is privacy level and the $\Delta$ is the maximum absolute difference between any two scored votes. If $d$ is odd, we have $\Delta=\frac{d^2-1}{2}$ for Borda voting. For example, the scored vote $v^{(1)}$ possibly becomes:
$$\tilde{v}^{(1)}=[26.6,\ -45.2,\  6.3,\ -7.3,\ -1.5],$$
after adding Laplace noises of scale $\frac{1.0}{12}$. In order to reach the criterion of local differential privacy, the (unbiased) private view $\tilde{v}$ might far deviate from the true vote $v$. The expected deviation of the private view here is $\mathbb{E}[|\tilde{v}-v|_2^2]=\frac{2d\Delta^2}{\epsilon^2}$.  
This formula indicates preserving privacy comes at the cost of data usefulness, improving which is the central focus in the current local privacy literature  (e.g., in \cite{duchi2013local,kairouz2014extremal,kairouz2016discrete,bun2018heavy}).    
    
\begin{table}[H]
	\setlength{\tabcolsep}{1.0em}
	\renewcommand{\arraystretch}{1.0}
	\caption{An example of $4$ scored votes on $5$ candidates in Borda voting.}
	\label{tab:scoredvote}
	\centering
	\begin{tabular}{c|ccccc}
		 & $A_1$ & $A_2$ & $A_3$ & $A_4$ & $A_5$\\
		\hline
		$v^{(1)}$ & 2 & 3 & 4 & 1 & 0\\
		$v^{(2)}$ & 0 & 4 & 3 & 1 & 2\\
		$v^{(3)}$ & 0 & 3 & 2 & 1 & 4 \\
		$v^{(4)}$ & 4 & 3 & 1 & 0 & 2\\
		\hline
		total score & 6 & 13 & 10 & 3 & 8\\
		average score & 1.5 & 3.25 & 2.5 & 0.75 & 2.0\\
	\end{tabular}
\end{table}

\subsection{Privacy Induced Attacks}

For voting or other data aggregation systems whose results have critical consequences, the soundness of the system against to strategic participants is also fundamental. We observe that local privacy preservation comes at the cost of also the soundness. The power of adversarial attacks (e.g., vote fraud) might be amplified by the privacy preserving procedure, disguising as private views also makes adversaries much easier to manipulate the voting result. 

\textbf{Data amplification attack.}\ \ 
In the case that an adversary could contribute fraud data but is unable to skip the privacy preserving procedure, the effects that the fraud data can have on the result might be amplified due to the intrinsic randomness nature of privacy preservation. A more rigid level of privacy preservation means the private view will have more randomness, and hence have more (maximum or expected) magnitude. Take the Laplace approach as an example, in the non-private setting (e.g., $\epsilon=+\infty$), the magnitude of one vote is $|v|=\sum_{j\in [1,d]}(d-j)=\frac{d(d-1)}{2}=10$,
but the maximum magnitude of a private view $\tilde{v}$ becomes infinite as Laplace random noises are unbounded. The expected magnitude of the private view is a function of privacy level as follows: 
$$\mathbb{E}[|\tilde{v}|] = \frac{\Delta}{\epsilon}\cdot\frac{e^{\epsilon/\Delta}-e^{(1-d)\epsilon/\Delta}}{e^{\epsilon/\Delta}-1}+\frac{d(d-1)}{2}.$$
Although local differential privacy ensures indistinguishability on outputs of any possible votes, hence the constructive power of one vote on the voting result is diminished,  but its deconstructive power is amplified.           
     
\textbf{View disguise attack.}\ \ 
In the case that an adversary has direct control on the private view sent to the aggregator, the adversary will be able to disguise a malicious private view as an ordinary (randomized) one, and thus make constructive/deconstructive changes to the aggregation result. The domain of private views is broader than the true data and grows with the level of privacy, hence an adversary's constructive/deconstructive power becomes larger. For example, in the non-private setting, the domain of a scored vote is bounded by $[0,d-1]^d$, while in the Laplace approach, the domain of the private view is scaled to $[-\infty,+\infty]^d$. Even though we can filter out private views that are extremely unlikely to be observed, the filtered domain $[-\tilde{\Theta}(\frac{1}{\epsilon}),+\tilde{\Theta}(\frac{1}{\epsilon})]$ also grows with the level of privacy preservation (see Section \ref{subsec:soundlaplace} for detail). Compared to selecting a value from the domain of scored votes in the non-private setting,  an adversary is easier to manipulate the election result by selecting a value from the (filtered) domain of private views.

\subsection{Our Contributions}
As a remedy to above issues in the naive approach to local differential private vote aggregation, this work aims to develop novel mechanisms improving the usefulness and soundness. The main content and contributions of this work are summarized as follows:
\begin{itemize}
\item[I.]{We identify soundness issues due to privacy preservation in local private data aggregation systems and categorize them as data amplification/view disguise attack based on an adversary's controllability over the privacy preserving procedure. The data amplification attack captures an adversary's deconstructive power on the aggregation result by contributing fraudulent data. The view disguise attack captures an adversary's constructive/deconstructive power on the aggregation result by directly disguising private views. Formal quantified metrics are defined (in Section \ref{sec:model}) to measure the power of adversarial attacks and the soundness of local private voting systems.}

\item[II.]{We present thorough analyses of Laplace mechanism for local private vote aggregation (in Section \ref{sec:laplace}, partially in former paragraphs), including sensitivity bounds of general positional voting rules, error bounds of estimated votes, and risk bounds under strategic attacks.}

\item[III.]{A novel alternative to the Laplace approach: the weighted sampling mechanism, is proposed for general positional voting rules  (in Section \ref{sec:sampling}). The mechanism samples an option with a probability mass proportional to its score and then applies well-studied local private methods on the categorical option. For Borda counts, this mechanism reduces estimation error bound from $\sim\frac{d^5}{2n\epsilon^2}$ to $\sim\frac{d^5}{4n\epsilon^2}$, and reduces the maximum manipulation risk bound from $+\infty$ to $O(\frac{d^3}{n\epsilon})$, compared to the Laplace mechanism.}



\item[IV.]{ Given that the weighted sampling mechanism works unsatisfactorily in the high privacy regime, we further propose the additive mechanism (in Section \ref{sec:additive}), which samples a subset of candidates according to the summation of their scores. The sampling problem underlying the mechanism is a strict case of the weighted random sampling problem \cite{efraimidis2006weighted}, we provide a recursive algorithm as a solution. The additive mechanism has estimation error bound of $O(\frac{d^4}{n\epsilon^2})$, and expected/maximum manipulation risk bounds both at $O(\frac{d^2}{n\epsilon})$.}

\item[V.]{We discuss the interaction/trade-off between usefulness, soundness, truthfulness and indistinguishability in local private data aggregation (in Section \ref{sec:discussion}). Quantified relation between estimation error bound and manipulation risk bound is built, which shows optimizing usefulness usually benefits soundness.  
}
\item[ VI.]{Experiments on extensive voting scenarios are conducted (i.e., in Section \ref{sec:exp}) to validate the usefulness and soundness of proposed mechanisms. Results demonstrate that estimation errors decrease by $1/2$ and manipulation risks are also significantly reduced, when compared to the existing approaches.}

\end{itemize}    

\section{Related Work}\label{sec:related}
Security requirements in voting systems cover many aspects,  such as privacy, verifiability and soundness. Here we review some representative works on the privacy/anonymity/soundness in the area of electronic voting and computational social choice, then retrospect recent works on privacy preserving data analyses within the differential privacy framework.

\subsection{Security in Voting Systems}
\subsubsection{Privacy/Anonymity}
Since the seminal work of Chaum \cite{chaum1981untraceable}, plenty of cryptographic schemes have contributed to keeping voters or (and) votes secret in electronic voting systems. Schemes based on homomorphic encryption operate on encrypted votes to compute the sum/average of votes without knowing plain-text of votes (e.g., ElGamal encryption in \cite{mu1998anonymous,hirt2010receipt}, Paillier encryption in \cite{ryan2008pret,xia2008analysis}, and \cite{hirt2000efficient,peng2004multiplicative}), hence keeps votes private from the vote counter or adversaries. Further combining cryptographic techniques with anonymous channels (e.g, the mixnet in \cite{park1993efficient,abe1998universally,lee2003providing,bulens2011running}) that randomly shuffle a bundle of messages from voters, votes (or ciphertexts) are then unlinkable to source voters. Having one single vote counter in a voting system is intolerant to failures or attacks, several works hence then multiple authorities secret sharing on decryption key \cite{cramer1997secure} to improve robustness of the system. In case of multiple authorities, the secrecy of every vote could be improved by decomposing them into several parts, each part is then sent to some authorities (e.g, in \cite{cohen1985robust,benaloh1986distributing,cramer1996multi}). Consequently, corrupted authorities less than a threshold number are unable to derive complete information of a vote.

Another line of works that could be employed for privacy preserving voting systems is data perturbation, which uses techniques of generalization (e.g., $k$-anonymity \cite{sweeney2002k} in \cite{chatzikokolakis2008anonymity,zhao2016voting}) and randomization (e.g., Gaussian noise adding \cite{dwork2006our}, randomized response \cite{warner1965randomized} and differential privacy \cite{dwork2011differential}) to hide the exact values of each vote and the voting result. Compared to cryptographic techniques provide computational secrecy and anonymity of votes/voters, data perturbation approaches are usually much more efficient for implementations. Among them, classic privacy notions and techniques like $k$-anonymity and Gaussian noise adding have shown to be risky for adversaries with prior knowledge \cite{kargupta2003privacy,machanavajjhala2006diversity,li2007t}.


\subsubsection{Soundness}
Consider strategic behaviors in voting systems like vote manipulation, fraud and bribery, many works have contributed to finding counter-measures for various voting rules. One approach is putting restrictions on voters' preference. Specifically, works of \cite{dummett1961stability,moulin1980strategy,ephrati1991clarke,ephrati1993multi} show that voting with single peak preference and quasilinear preference is truthful and non-manipulatable. Another approach is ensuring computational hardness of finding constructive/deconstructive manipulation strategies (e.g., in \cite{bartholdi1992hard,conitzer2007elections,procaccia2007junta,faliszewski2009hard}). However, for positional voting rules considered in this work, there exist simple greedy algorithms finding strategic votes that manipulate the result in polynomial time \cite{bartholdi1989computational}. There are also some works propose to introduce randomness to the voting process (e.g., sampling voter at random) for mitigating manipulation attacks,  but the usefulness of the voting result is severely harmed \cite{gibbard1977manipulation}.  As a comparison, this work introduces randomness to votes for the purpose of privacy preserving and demonstrates that local differential privacy helps to defend against vote manipulation but makes the voting result more vulnerable to fraudulent votes, when compared to non-private settings (see Section \ref{subsec:soundandprivate} for discussion).

\subsection{Differential Privacy}
\subsubsection{Differential Private Data Analyses}  
As the state-of-the-art data perturbation notion and technique for databases, differential privacy \cite{dwork2011differential} in the centralized setting ensures information-theoretical  privacy. For numerical outputs such as counts and histogram, injecting Laplace random noises \cite{dwork2006calibrating} is the most popular mechanism (e.g., in \cite{li2010optimizing,hay2010boosting, xu2013differentially}). For categorical outputs such as choosing a winning candidate, exponential mechanism \cite{mcsherry2007mechanism} satisfies differential privacy by randomly selecting a category with a probability according to its utility loss (e.g., in \cite{bhaskar2010discovering,steinke2017tight}). For more complex data analyses and mining tasks (e.g, classification learning, clustering), a sequence combination of Laplace and exponential mechanism needs to be used (e.g., in \cite{chen2011publishing,li2012privbasis}).


Despite the functionality of data privacy preservation, differential privacy also has a close relation to stability \cite{dwork2014algorithmic,jain2015drop}, and could avoid false discovery in scientific experiments \cite{hardt2014preventing,dwork2015private,dwork2018differentially}. Since the outputting results are almost indistinguishable when any single individual's data present or not, the exponential mechanism is shown to be sound to data manipulation and data fraud \cite{mcsherry2007mechanism}. However, due to the discrepancy in soundness performance between unbounded and bounded differential privacy \cite{kifer2011no}, local differential private data aggregation is fragile to data fraud attacks (see Section \ref{subsec:soundandprivate} for further discussion). 


\subsubsection{Local Private Data Aggregation}
When defining neighboring datasets as any pairs of values individuals may hold, differential privacy is preserved in the local setting \cite{duchi2013local} (LDP). Because of the solidness of privacy guarantee and flexibility for deployment, LDP has gained massive attention from both industry and academy. Giant internet service providers are collecting user preference (e.g., browser's homepage) and usage records (e.g., typed words) from their users in the local differential privacy manner, such as Google \cite{erlingsson2014rappor,fanti2016building}, Apple \cite{thakurta2017emoji,thakurta2019learning}, and Microsoft \cite{ding2017collecting}. Research works have explored local private data analyses and modeling tasks on various kinds of data, such as distribution estimation on categorical data \cite{bassily2015local,kairouz2016discrete} and set-valued data \cite{qin2016heavy,wang2018privset}, joint distribution estimation and frequent itemset mining on multidimensional data \cite{zhang2018calm,cormode2018marginal}, mean estimation on numerical data \cite{duchi2013local}. There are also theoretical contributions to give lower error bounds of local private data analyses (e.g., in \cite{duchi2013local,kairouz2014extremal,ullman2018tight}). 


Existing works on local differential privacy focus mostly on the usefulness aspect, some of which may also consider computational and communicational efficiency. This work calls for attention to the soundness aspect in local private data analyses, which is severe in real-world systems (e.g., the RAPPOR of Google \cite{erlingsson2014rappor} and iOS/macOS data collection of Apple \cite{tang2017privacy}) where there are malicious and adversarial clients. 

It's worth noting that for some specific voting rules, such as plurality/k-approval voting (see Section \ref{sec:model}), the scored vote can be directly seen as categorical/set-valued data and then processed with existing approaches (e.g., in \cite{kairouz2016discrete,qin2016heavy,wang2018privset}). This work intends to deal with positional voting with the arbitrary design of score vector.  The local private vote aggregation problem can also be cast as the multi-dimensional mean estimation problem, one approach to which is adding Laplace noises to every score (see detail in Section \ref{sec:laplace}), another is first randomly sampling one (data-independent) candidate without knowing the vote and then adding Laplace noises to the candidate's points (e.g., in \cite{nguyen2016collecting,wang2019collecting}). However, the latter approach assumes independence of each vote and can't obtain an unbiased estimator of a whole vote, hence is beyond discussions of this work.
\section{Preliminaries}\label{sec:model}
This section formally introduces definitions of positional voting, local differential privacy and the model of local private vote aggregation. Usefulness and robustness metrics are also defined in this section. We summarize notations throughout this work in Table \ref{tab:notation}. 

\begin{table}
\renewcommand{\arraystretch}{1.17}
\caption{List of notations.}
\label{tab:notation}
\centering
\begin{tabular}{c|l}
\hline
\bfseries Notation & \bfseries Description\\
\hline
$\mathbf{A}$ & The set of candidates/options\\
$d$ & The number of candidates/options\\
$n$ & The number of voters/participants\\
$\mathbf{w}$ & A voting rule's score vector\\
$v^{(i)}$ & A scored vote of voter $i$ that is a permutation of $\mathbf{w}$\\
$\mathbb{D}_v$ & The set of possible permutations of  $\mathbf{w}$\\
$\tilde{v}^{(i)}$ & An estimator (private view) of the scored vote $v^{(i)}$\\
$\mathbb{D}_{\tilde{v}}$ & The set of all possible private views\\
$\theta$ & Average scores of candidates.\\
$\tilde{\theta}$ & Estimator of average scores.\\
$\epsilon$ & The privacy budget\\
\hline
\end{tabular}
\end{table}

\subsection{Positional Voting}
A vote $\pi$ is a linear ordering over all candidates $\mathbf{A}=\{A_1, A_2, ..., A_d\}$, where the relation $\succ$ between two candidates is the preference of a voter. In positional voting, the $j$-th candidate $\pi_j$ in a vote is assigned by a score of $w_j$. For reasonable positional voting rules, the score vector $\mathbf{w}=\{w_1, w_2,..., w_d\}$ is non-increasing, which means $w_j\geq w_{j+1}$. Examples of score vector for popular positional voting rules with $5$ candidates are as follows:
\begin{itemize}
\item{Borda:\ \ \ \ $\{4,\ 3,\ 2,\ 1,\ 0\}$;}
\item{Nauru:\ \ \ \ $\{1/1,\ 1/2,\ 1/3,\ 1/4,\ 1/5\}$; }
\item{Plurality:\ \ \ \ $\{1,\ 0,\ 0,\ 0,\ 0\}$;}
\item{Anti-plurality:\ \ \ \ $\{1,\ 1,\ 1,\ 1,\ 0\}$;}
\item{k-Approval:\ \ \ \ $\{1,\ 1,\ 0,\ 0,\ 0\}\ \ \ (k=2)$.}
\end{itemize}

For the simplicity of reference, we rewrite the voter $i$'s vote $\pi^{(i)}$ as numerical scores for each candidate: $v^{(i)}=[v^{(i)}_1, v^{(i)}_2,..., v^{(i)}_d]$, where $v^{(i)}_j$ is the score of candidate $A_j$.

\subsection{Local Differential Privacy}
The local differential private notion ensures bounded distinguishability in outputs for any two possible inputs, hence blocks adversaries from inferring much information from outputs. Let $\mathcal{D}_\pi$ denote the domain of votes, which represents all possible orderings over candidates $\mathbf{A}$, let $\mathcal{M}$ denote a randomized mechanism, and $\mathcal{D}_\mathcal{M}$ denote the output domain of the mechanism, Definition \ref{def:ldp} formally defines local differential privacy.

\begin{definition}[$\epsilon$-LDP]\label{def:ldp}
A randomized mechanism $\mathcal{M}$ satisfies local $\epsilon$-differential privacy iff for any possible vote pair $\pi,\pi' \in \mathcal{D}_\pi$, and any possible output $t \in \mathcal{D}_\mathcal{M}$,
$$\mathbb{P}[\mathcal{M}(\pi) = t]\leq \exp(\epsilon)\cdot \mathbb{P}[\mathcal{D}_\mathcal{M}(\pi') = t].$$
\end{definition} 
Here the parameter $\epsilon$ is called the privacy budget, which controls the level of privacy preservation. Practical values for $\epsilon$ range between $[0.01,3.0]$.

\subsection{Aggregation Model}
Consider $n$ voters $N=\{1, 2,..., n\}$, the voter $i$ holds a vote $\pi^{(i)}$ (or a scored vote $v^{(i)}$), for the purpose of privacy preservation, the voter sanitizes $\pi^{(i)}$ to get the private view $\tilde{v}^{(i)}$ by running a $\epsilon$-LDP mechanism $\mathcal{M}$ locally and independently. The private view $\tilde{v}^{(i)}$ from a meaningful mechanism is an estimator of the true scored vote $v^{(i)}$, hence the counter in the voting system could estimate the actual average scores $\theta=\frac{1}{n}\sum v^{(i)}$ as:
\begin{equation}\label{eq:estimator}
\tilde{\theta}=\frac{1}{n}\sum \tilde{v}^{(i)}.
\end{equation}

\begin{figure}[!t]
	\centering
	\includegraphics[width=86mm]{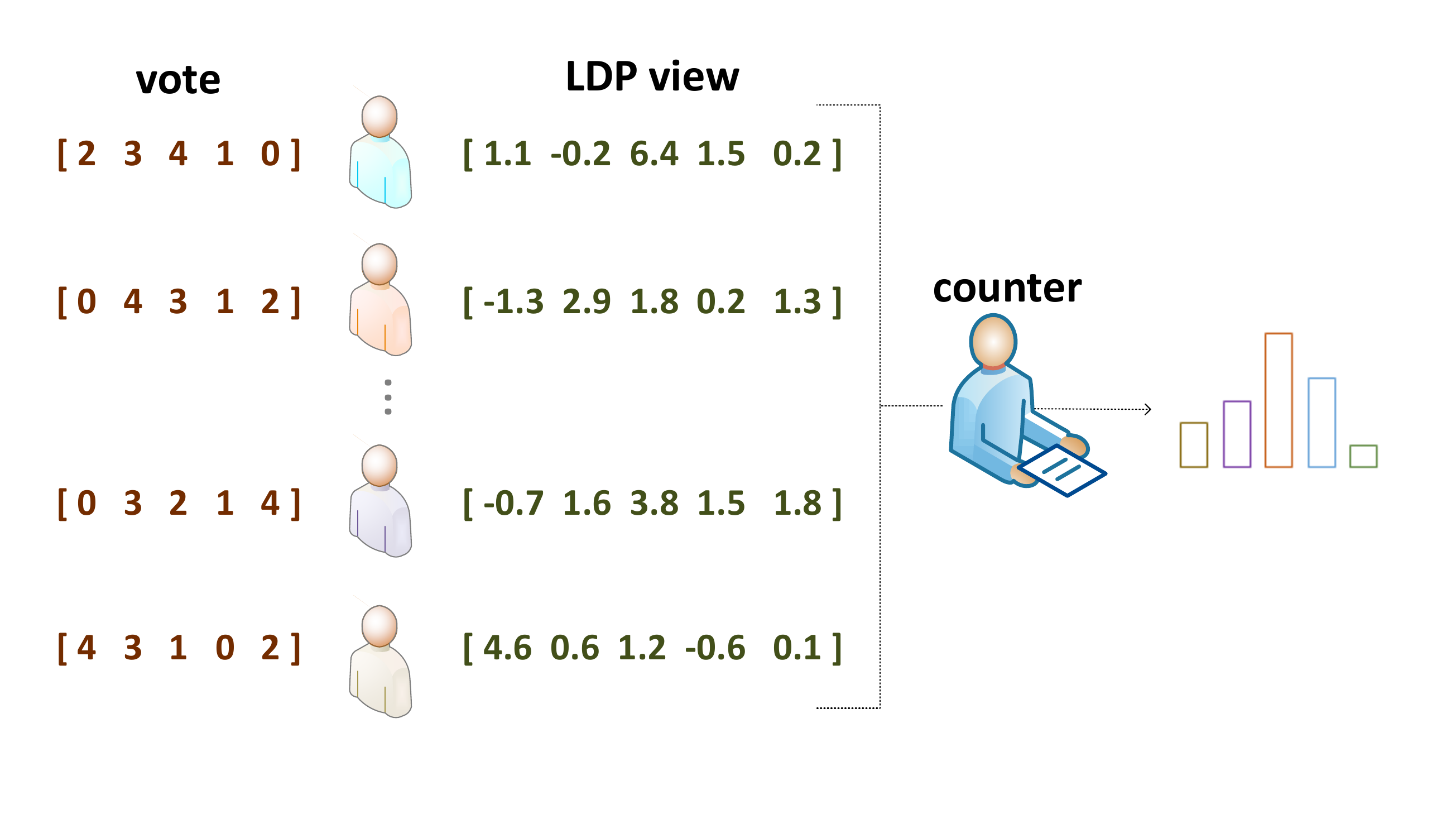}
	\vspace*{-3em}
	\caption{Demonstration of vote aggregation with $\epsilon$-LDP.}
	\label{fig:model}
\end{figure}

Figure \ref{fig:model} demonstrates the above procedures of local private vote aggregation. In adversarial environments, the counter may filter out some potential malicious private views. 

\subsection{Performance Metrics}\label{subsection:metrics}

\ \ \ \ \ \textbf{Usefulness metrics.}\ \ 
Estimators of average scores $\tilde{\theta}$ given by different mechanisms have varied accuracy, here we use three usefulness metrics:
\begin{itemize}
\item{mean squared error: $\text{err}_{\text{MSE}}=\mathbb{E}[|\tilde{\theta}-\theta|_2^2];$}
\item{total variation error: $\text{err}_{\text{TVE}}=\mathbb{E}[|\tilde{\theta}-\theta|_1];$}
\item{maximum absolute error: $\text{err}_{\text{MAE}}=\mathbb{E}[\max_{j\in [1, d]}|\tilde{\theta}_j-\theta_j|].$}
\end{itemize}

Since average scores eventually determine one winning candidate, let $j_{max} = \arg \max_{j\in [1, d]} \theta_{j}$ denote the candidate's index with maximum average score in true average scores $\theta$, and $\tilde{j}_{max}=\arg \max_{j\in [1, d]} \tilde{\theta}_j$ denote the winning candidate's index in the estimated average scores $\tilde{\theta}$, we define following metrics:
\begin{itemize}
\item{accuracy of winner: \ \ $\text{accuracy}_{\text{AOW}}=\mathbb{E}[\tilde{j}_{max} = j_{max}]$;}
\item{loss of winner: \ \ $\text{err}_{\text{LOW}}=\mathbb{E}[\theta_{j_{max}}-\theta_{\tilde{j}_{max}}]$.}
\end{itemize}   

\textbf{Soundness metrics.}\ \   
To measure an adversary's deconstructive power of data amplification attack by contributing one extra vote $v$, we use following metrics:
\begin{itemize}
\item{maximum magnitude: $\text{risk}_{\text{MM}}=\max_{\tilde{v}\in \mathcal{D}_{\tilde{v}}} \frac{|\tilde{v}|_1}{n}$;}
\item{expected magnitude: $\text{risk}_{\text{EM}}=\mathbb{E}[\frac{|\tilde{v}|_1}{n}]$.}
\end{itemize}
These two metrics measure maximum possible/expected absolute difference that one single private view can make to average scores respectively.

For further measuring an adversary's constructive/deconstructive power of view disguise attack by controlling one private view $\tilde{v}$, we define the diameter of the output space $\mathcal{D}_{\tilde{v}}$ that an adversary could choose a value from as follows:
\begin{itemize}
\item{domain diameter: $\text{risk}_{\text{DD}}=\max_{\tilde{v},\tilde{v}'\in \mathcal{D}_{\tilde{v}}} |\tilde{v}-\tilde{v}'|_1$.}
\end{itemize}

\section{Laplace mechanism}\label{sec:laplace}
\subsection{Design}
For numerical values like scored votes, the Laplace mechanism is the most popular approach to (local) differential privacy. The scale of the Laplace noises is calibrated to the privacy budget $\epsilon$ and the sensitivity $\Delta$ of the vote as in Algorithm \ref{alg:laplace}. Lemma \ref{lemma:sensitivity} gives exact bound of the sensitivity $\Delta$ for all positional voting rules, Theorem \ref{theorem:ldplaplace} gives formal $\epsilon$-LDP guarantee.

\begin{lemma}\label{lemma:sensitivity}
For any positional voting rules with non-increasing score vector $\mathbf{w}$, the sensitivity of scored votes is:
$$\Delta=\max_{v,v'\in \mathbb{D}_{v}}|v-v'|_1=\sum_{j\in [1,d]}|w_j-w_{d-j+1}|.$$
\end{lemma}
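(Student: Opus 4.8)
The plan is to recast the sensitivity as a combinatorial assignment problem and then solve that problem by a sign/coefficient argument. Since every scored vote $v\in\mathbb{D}_v$ is a permutation of the fixed multiset of entries of $\mathbf{w}$, a pair $v,v'$ is specified by two bijections from candidates to ranks, and the value $|v-v'|_1=\sum_j|v_j-v'_j|$ depends only on how the entries of the two copies of $\mathbf{w}$ are paired off coordinate-by-coordinate. Reading this pairing as a permutation $\sigma$ of $[d]$, and using that the hypothesis already gives $w_1\ge w_2\ge\cdots\ge w_d$, the quantity to be computed becomes $\Delta=\max_{\sigma}\sum_{j\in[1,d]}|w_j-w_{\sigma(j)}|$, the maximum being over all permutations $\sigma$ of $[d]$. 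First I would verify that every permutation $\sigma$ is realizable by some admissible pair $(v,v')$ and conversely (reorder coordinates so that $v$ is sorted; then $v'$ is an arbitrary permutation of the same multiset), so that nothing is lost in this reduction.

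For the upper bound I would linearize the absolute values. For any $\sigma$ write $|w_j-w_{\sigma(j)}|=\epsilon_j\,(w_j-w_{\sigma(j)})$ with $\epsilon_j\in\{+1,-1\}$ (ties contributing $0$ may be signed arbitrarily), so that $\sum_j|w_j-w_{\sigma(j)}|=\sum_{i}c_i\,w_i$ where $c_i=\epsilon_i-\epsilon_{\sigma^{-1}(i)}$. Two structural facts then do the work: each $c_i\in\{-2,0,2\}$, and $\sum_i c_i=0$, so the multiset of coefficients consists of exactly $k$ copies of $+2$, $k$ copies of $-2$, and $d-2k$ zeros for some $k$. Maximizing the linear form $\sum_i c_i w_i$ under these constraints (a relaxation over all such coefficient vectors, ignoring realizability) is immediate from $w$ being sorted: place the $+2$'s on the $k$ largest entries and the $-2$'s on the $k$ smallest, giving $2\sum_{j=1}^{k}(w_j-w_{d-j+1})$. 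Optimizing over $k$ I would note that each gap $w_j-w_{d-j+1}$ is nonnegative exactly when $j\le d-j+1$, so the best choice is $k=\lfloor d/2\rfloor$, which includes all nonnegative gaps and yields the bound $2\sum_{j=1}^{\lfloor d/2\rfloor}(w_j-w_{d-j+1})$.

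It remains to recognize this bound as the claimed formula and to exhibit a matching pair. Folding the symmetric terms together gives $2\sum_{j=1}^{\lfloor d/2\rfloor}(w_j-w_{d-j+1})=\sum_{j=1}^{d}|w_j-w_{d-j+1}|$, since for $j>\lceil d/2\rceil$ the term $|w_j-w_{d-j+1}|$ duplicates the one at index $d-j+1$ and the middle term (when $d$ is odd) vanishes. For achievability I would take $\sigma$ to be the reversal $j\mapsto d-j+1$, i.e. let $v'$ be $v$ with its entries placed in reverse rank order; this realizes precisely $\sum_j|w_j-w_{d-j+1}|$, so the upper bound is attained and equality holds.

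I expect the main obstacle to be the upper-bound step, specifically arguing cleanly that no permutation can beat the reversal. The sign-linearization makes this tractable by reducing an optimization over $d!$ permutations to an optimization over coefficient vectors with two simple constraints ($c_i\in\{-2,0,2\}$ and $\sum_i c_i=0$); the care needed is in confirming these two constraints genuinely hold for every $\sigma$ and in handling ties in $\mathbf{w}$ (as in plurality or $k$-approval), where the sign $\epsilon_j$ is not unique but the zero contribution of tied coordinates keeps the argument valid.
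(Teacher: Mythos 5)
Your proof is correct, and it reaches the result by a genuinely different route than the paper. The paper also first reduces to $\max_{v}|\mathbf{w}-v|_1$ (asserting this by symmetry, as you do), but then runs a local exchange argument: whenever $j<j'$ and $v_j>v_{j'}$, swapping those two entries cannot decrease $|\mathbf{w}-v|_1$, because $f(x)=|w_j-x|-|w_{j'}-x|$ is non-increasing; iterating these swaps terminates at the reversal of $\mathbf{w}$, which therefore attains the maximum. You instead give a one-shot global upper bound: linearizing the absolute values as $\sum_i c_i w_i$ with $c_i=\epsilon_i-\epsilon_{\sigma^{-1}(i)}$, observing $c_i\in\{-2,0,2\}$ and $\sum_i c_i=0$, and maximizing the relaxed linear form by a rearrangement argument, then matching the bound with the reversal. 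The paper's exchange argument is shorter but leaves implicit both the termination of the swap process at the full reversal and the handling of ties; your relaxation makes the optimality of the reversal transparent (the $+2$'s must sit on the largest entries and the $-2$'s on the smallest), deals with ties explicitly, and cleanly separates the upper bound from achievability — at the cost of introducing the coefficient-vector bookkeeping and the (harmless for an upper bound) fact that not every admissible coefficient vector is realized by a permutation. Both arguments are sound and both correctly identify the maximizing pair as $(\mathbf{w},\text{reverse}(\mathbf{w}))$.
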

\begin{proof}
Given that every $v\in \mathbb{D}_v$ is a permutation of the score vector $\mathbf{w}$, we have $\Delta=\max_{v\in \mathbb{D}_{v}}|w-v|_1$. Consider any scored vote $v$, if there exist two indexes $j,j'$ that $j<j'$ and $v_j>v_{j'}$, we denote  $\overrightarrow{v}$ the scored vote that swapped and only swapped $v_j$ and $v_{j'}$ in the scored vote $v$,  then we have:
\begin{equation*}
\begin{aligned}
&&&|w-v|_1-|w-\overrightarrow{v}|_1&\\
&&=&|w_j-v_j|+|w_{j'}-v_{j'}|-|w_j-v_{j'}|-|w_{j'}-v_{j}|&\\
&&=&(|w_j-v_j|-|w_{j'}-v_{j}|)-(|w_j-v_{j'}|-|w_{j'}-v_{j'})|&\\
&&\leq&0,&
\end{aligned}
\end{equation*}
since $f(x)=|w_j-x|-|w_{j'}-x|$ is a non-increasing function for $x\in \mathbb{R}$ when $w_j>w_{j'}$.
By iteratively swapping values that $v_j>v_{j'}$ ($j<j'$) in any scored vote $v$, when $\overrightarrow{v}$ is the reverse of $\mathbf{w}$, we finally have $|\mathbf{w}-v|_1\leq\sum_{j\in [1,d]}|w_j-w_{d-j+1}|$.  
\end{proof}

\begin{theorem}\label{theorem:ldplaplace}
Algorithm \ref{alg:laplace} satisfies $\epsilon$-LDP .
\end{theorem}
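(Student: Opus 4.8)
The plan is to verify the $\epsilon$-LDP definition directly by bounding the ratio of output densities for any two possible votes. Since Algorithm~\ref{alg:laplace} adds independent Laplace noise of scale $\frac{\Delta}{\epsilon}$ to each coordinate of the scored vote $v$, the output density at a point $t \in \mathbb{R}^d$ factorizes across coordinates. For two scored votes $v, v' \in \mathbb{D}_v$, I would write the ratio $\frac{\mathbb{P}[\mathcal{M}(v)=t]}{\mathbb{P}[\mathcal{M}(v')=t]}$ as a product of per-coordinate Laplace density ratios, so that the log-ratio becomes $\sum_{j\in[1,d]} \frac{\epsilon}{\Delta}\bigl(|t_j - v'_j| - |t_j - v_j|\bigr)$.

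The key steps are then: first, apply the reverse triangle inequality coordinatewise to obtain $|t_j - v'_j| - |t_j - v_j| \leq |v_j - v'_j|$, which bounds the log-ratio by $\frac{\epsilon}{\Delta}\sum_{j\in[1,d]} |v_j - v'_j| = \frac{\epsilon}{\Delta}\,|v - v'|_1$. Second, invoke the sensitivity bound from Lemma~\ref{lemma:sensitivity}, which guarantees $|v - v'|_1 \leq \Delta$ for any pair of votes in $\mathbb{D}_v$. Combining these yields a log-ratio at most $\frac{\epsilon}{\Delta}\cdot\Delta = \epsilon$, hence $\mathbb{P}[\mathcal{M}(v)=t] \leq \exp(\epsilon)\cdot\mathbb{P}[\mathcal{M}(v')=t]$, which is exactly Definition~\ref{def:ldp}.

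I do not anticipate a genuine obstacle here, as this is the standard calibration argument for the Laplace mechanism; the only subtlety worth flagging is notational. Because the definition is stated in terms of orderings $\pi$ while the noise is added to the score representation $v$, I would make explicit at the outset that every ordering $\pi$ corresponds bijectively to a scored vote $v$ that is a permutation of $\mathbf{w}$, so quantifying over $\pi, \pi' \in \mathcal{D}_\pi$ is equivalent to quantifying over $v, v' \in \mathbb{D}_v$. The one place requiring care is ensuring the sensitivity $\Delta$ used to set the noise scale is precisely the $\ell_1$-sensitivity bounded in Lemma~\ref{lemma:sensitivity}; since that lemma already establishes $\max_{v,v'\in\mathbb{D}_v}|v-v'|_1 = \Delta$, the worst-case pair saturates the bound and the privacy guarantee is tight.
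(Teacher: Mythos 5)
Your proposal is correct and follows essentially the same route as the paper's proof: reduce from orderings $\pi$ to scored votes $v$, factorize the Laplace output density so the log-ratio becomes $\frac{\epsilon}{\Delta}\left(|t-v'|_1-|t-v|_1\right)$, and bound this by $\epsilon$ via the triangle inequality and the sensitivity bound of Lemma~\ref{lemma:sensitivity}. The only difference is that you spell out the coordinatewise reverse triangle inequality explicitly, which the paper leaves implicit.
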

\begin{proof}
Because every vote $\pi$ is mapped to one scored vote $v\in \mathbb{D}_v$, to prove $\mathbb{P}[\mathcal{M}(\pi) = t]\leq \exp(\epsilon)\cdot \mathbb{P}[\mathcal{D}_\mathcal{M}(\pi') = t]$ holds for any $\pi$ and $\pi'$, it's enough to show $\text{Pr}[\tilde{v}=t\ |\ v] \leq \exp(\epsilon)\cdot \text{Pr}[\tilde{v}=t\ |\ v']$ holds for any input scored votes $v,v'$ and output $t\in \mathbb{D}_{\tilde{v}}$. By definition of a Laplace random variable $\text{Pr}[Lap(s)=x]=\frac{1}{2s}\exp(-\frac{|x|}{s})$,
we have:
$$\text{Pr}[\tilde{v}=t\ |\ v]=\frac{\epsilon^d}{2^d\Delta^d}\exp(-\frac{\epsilon\cdot |t-v|_1}{\Delta}),$$
hence $\frac{\text{Pr}[\tilde{v}=t\ |\ v]}{\text{Pr}[\tilde{v}=t\ |\ v]}=\exp(\frac{\epsilon\cdot (|t-v'|_1-|t-v|_1)}{\Delta})\leq \exp(\epsilon)$.  
\end{proof}

\begin{algorithm}
    \renewcommand\baselinestretch{1.0}\selectfont
    \caption{Laplace mechanism}
    \label{alg:laplace}
    \begin{algorithmic}[1]
        \Require A scored vote $v\in \mathbb{D}_{v}$, privacy budget $\epsilon$ and the score vector $\mathbf{w}$ of the voting rule.
        \Ensure An unbiased private view $\tilde{v}\in \mathbb{R}^m$ that satisfies $\epsilon$-LDP.
        \State{$\rhd$ Compute sensitivity}
        \State{$\Delta\gets \sum_{j\in [1,d]}|w_j-w_{d-j+1}|$}
        \State{$\rhd$ Randomization by adding Laplace noises}
        \For{$j\gets 1$ \textbf{to} $d$}
            \State{$\tilde{v}_j\gets v_j+Lap(\frac{\Delta}{\epsilon})$}
        \EndFor
        \\\Return{$\tilde{v}=\{\tilde{v}_1,\tilde{v}_2,...,\tilde{v}_d\}$}
    \end{algorithmic}
\end{algorithm}

\subsection{Usefulness Analyses}
The usefulness bound of the average score estimator (refer to Equation \ref{eq:estimator}) given by Laplace mechanism is analyzed in Theorem \ref{theorem:mselaplace}, proof of which is a simple application of Laplace random variables' variance formulation.

\begin{theorem}\label{theorem:mselaplace}
The mean squared error of average score estimator given by the Laplace mechanism in Algorithm \ref{alg:laplace} is:
$$\text{err}_{\text{MSE}}=\frac{2d\cdot(\sum_{j\in [1,d]}|w_j-w_{d-j+1}|)^2 }{n\epsilon^2}.$$ 
\end{theorem}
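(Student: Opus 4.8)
The plan is to reduce the mean squared error to a sum of per-coordinate variances of independent Laplace noises, invoke the known variance of a Laplace random variable, and finally substitute the sensitivity bound from Lemma \ref{lemma:sensitivity}. Since the estimator in Equation \ref{eq:estimator} is a coordinate-wise average of unbiased private views, the whole calculation is driven by the additivity of variances under independence.

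First I would expand the $\ell_2$ error coordinate by coordinate. Writing $\tilde{\theta}_j-\theta_j=\frac{1}{n}\sum_{i=1}^{n}(\tilde{v}^{(i)}_j-v^{(i)}_j)$ and observing from Algorithm \ref{alg:laplace} that each increment $\tilde{v}^{(i)}_j-v^{(i)}_j$ is an independent draw $L^{(i)}_j\sim Lap(\frac{\Delta}{\epsilon})$ with mean zero, the estimator is unbiased in every coordinate, so $\mathbb{E}[(\tilde{\theta}_j-\theta_j)^2]=\text{Var}(\tilde{\theta}_j-\theta_j)$. Because the $n$ noises across voters are mutually independent, variances add: $\text{Var}(\frac{1}{n}\sum_{i}L^{(i)}_j)=\frac{1}{n^2}\sum_{i}\text{Var}(L^{(i)}_j)$. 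Using $\text{Var}(Lap(s))=2s^2$ with $s=\frac{\Delta}{\epsilon}$, each term is $\frac{2\Delta^2}{\epsilon^2}$, giving $\mathbb{E}[(\tilde{\theta}_j-\theta_j)^2]=\frac{2\Delta^2}{n\epsilon^2}$ per coordinate.

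Finally I would sum over the $d$ coordinates, obtaining $\text{err}_{\text{MSE}}=\sum_{j=1}^{d}\frac{2\Delta^2}{n\epsilon^2}=\frac{2d\Delta^2}{n\epsilon^2}$, and substitute $\Delta=\sum_{j\in[1,d]}|w_j-w_{d-j+1}|$ from Lemma \ref{lemma:sensitivity} to recover the claimed expression. There is essentially no obstacle here; the only point requiring care is to use independence twice, both across the $d$ Laplace draws within a single vote and across the $n$ voters, which is exactly what Algorithm \ref{alg:laplace} guarantees and what lets every variance term combine additively.
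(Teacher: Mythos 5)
Your proof is correct and is exactly the argument the paper has in mind: the paper omits the proof, describing it as ``a simple application of Laplace random variables' variance formulation,'' and your coordinate-wise variance decomposition with $\text{Var}(Lap(\Delta/\epsilon))=2\Delta^2/\epsilon^2$, summed over $d$ coordinates and averaged over $n$ independent voters, fills in precisely that calculation.
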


\subsection{Soundness Analyses}\label{subsec:soundlaplace}
Consider the average score estimator $\tilde{\theta}$ given by the Laplace mechanism, its risks under data amplification attack and view disguise attack are presented in Theorem \ref{theorem:riskslaplace}, proof of which is omitted as being almost trivial.
\begin{theorem}\label{theorem:riskslaplace}
The risks of Laplace mechanism under adversarial attacks are:
\begin{equation*}
\begin{aligned}
&&\text{risk}_{\text{MM}}&=+\infty;&\\ 
&&\text{risk}_{\text{EM}}&=\sum_{j\in[1,d]}\frac{\Delta}{\epsilon}\exp(\frac{-|w_j|\epsilon}{\Delta})+|w_j|;&\\
&&\text{risk}_{\text{DD}}&=+\infty.&
\end{aligned}
\end{equation*}
\end{theorem}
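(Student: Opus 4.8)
The plan is to dispatch the two infinite bounds immediately and then concentrate on the one genuine computation, namely $\text{risk}_{\text{EM}}$. First I would observe that under Algorithm~\ref{alg:laplace} every coordinate is perturbed by an independent $Lap(\frac{\Delta}{\epsilon})$ variable, whose support is all of $\mathbb{R}$; consequently the output domain is $\mathcal{D}_{\tilde{v}}=\mathbb{R}^d$. For $\text{risk}_{\text{MM}}=\max_{\tilde{v}\in\mathcal{D}_{\tilde{v}}}\frac{|\tilde{v}|_1}{n}$ one may send any single coordinate to $+\infty$, so the supremum is $+\infty$; likewise for $\text{risk}_{\text{DD}}=\max_{\tilde{v},\tilde{v}'\in\mathcal{D}_{\tilde{v}}}|\tilde{v}-\tilde{v}'|_1$ the two arguments range over an unbounded set and the diameter diverges. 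Neither claim needs any argument beyond the unboundedness of the Laplace support, which is why the paper calls them trivial.

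The substance is $\text{risk}_{\text{EM}}=\mathbb{E}[|\tilde{v}|_1]$, following the formula exactly as stated in the theorem. By linearity of expectation and independence of the coordinate noises, $\mathbb{E}[|\tilde{v}|_1]=\sum_{j\in[1,d]}\mathbb{E}[|v_j+Lap(\frac{\Delta}{\epsilon})|]$, so the whole statement reduces to evaluating, for a fixed real shift $x$ and scale $s$, the quantity $g(x):=\mathbb{E}[|x+L|]$ with $L\sim Lap(s)$. The key step is the closed form $g(x)=|x|+s\,e^{-|x|/s}$. I would first note that $L$ is symmetric, so $g(-x)=g(x)$ and it suffices to treat $x\ge 0$. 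Then I would split the defining integral $\int_{-\infty}^{\infty}|x+l|\frac{1}{2s}e^{-|l|/s}\,dl$ at the two breakpoints $l=-x$ and $l=0$; on each of the three resulting intervals the integrand is a product of a linear term and a one-sided exponential, integrable in closed form by a single integration by parts. Collecting the three pieces, the two $\frac{x}{2}$ contributions sum to $x$, the bare $\pm\frac{s}{2}$ terms cancel, and the two $\frac{s}{2}e^{-x/s}$ halves combine to $s\,e^{-x/s}$, yielding $g(x)=x+s\,e^{-x/s}$ for $x\ge0$ and hence $g(x)=|x|+s\,e^{-|x|/s}$ in general.

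Substituting $s=\frac{\Delta}{\epsilon}$ and $x=v_j$ gives $\mathbb{E}[|v_j+Lap(\frac{\Delta}{\epsilon})|]=|v_j|+\frac{\Delta}{\epsilon}\exp(-\frac{|v_j|\epsilon}{\Delta})$. Since any admissible vote $v$ is a permutation of the score vector $\mathbf{w}$, summing over $j$ replaces the multiset $\{|v_j|\}$ by $\{|w_j|\}$ without changing the total, so $\text{risk}_{\text{EM}}=\sum_{j\in[1,d]}\left(|w_j|+\frac{\Delta}{\epsilon}\exp(-\frac{|w_j|\epsilon}{\Delta})\right)$, exactly the stated expression. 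A by-product worth recording is that this value is permutation-invariant: an adversary cannot inflate the expected magnitude by choosing which vote to submit, since it is pinned down entirely by $\mathbf{w}$ and $\epsilon$. The only mildly delicate point in the whole argument is the bookkeeping of the case split and the sign of $x$ in $g(x)$; everything else is an unbounded-support observation plus linearity of expectation, so I expect that single integral to be the sole real obstacle.
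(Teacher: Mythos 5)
Your proposal is correct and is exactly the argument the paper has in mind (the paper omits the proof as ``almost trivial,'' and your closed form $\mathbb{E}[|x+L|]=|x|+s\,e^{-|x|/s}$ is precisely what reproduces the expected-magnitude formula already quoted in the paper's introduction for the Borda case). The only point worth flagging is a normalization quirk of the paper itself, not of your argument: the metric is defined as $\mathbb{E}[\tfrac{|\tilde v|_1}{n}]$ while the theorem's displayed value carries no $\tfrac{1}{n}$, and you correctly follow the theorem's convention.
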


These bounds show that the maximum possible risk of the Laplace mechanism is infinite and the expected risk grow linearly with $\frac{1}{\epsilon}$. Consequently, imposing a stringent level of privacy harms soundness of the voting result. One possible solution to restrict the unlimited maximum possible risk is filtering out private views that are extreme unlikely observed. For example, we may define the allowable output area (with threshold probability $\beta$) as:
$$\tilde{\mathcal{D}}_p=\{\tilde{v}\ |\ \tilde{v}\in \mathcal{R}^m,\ \text{Pr}[\tilde{v}|v] \geq \beta\ \text{for\ some}\  v\in \mathcal{D}_v\}.$$
For Laplace mechanism, it is equivalent to:
$$\{\tilde{v}\ |\ \tilde{v}\in \mathcal{R}^d,\ |\tilde{v}-v|\leq \frac{\Delta(\log(1/\beta)+d\log(\Delta/\epsilon))}{\epsilon}\ \text{for\ some}\  v\in \mathcal{D}_v\}.$$
As a result, even if we can filter out outliers of private views, the diameter or volume of the allowable output area, which determines maximum possible risks, also grows with $\frac{1}{\epsilon}$.

\section{Weighted Sampling Mechanism}\label{sec:sampling}
\subsection{Design}
Another common technique achieving $\epsilon$-LDP for numerical vectors is selecting one (data-dependent) option according to its corresponding value (e.g., for set-valued data \cite{qin2016heavy}, for probabilistic data \cite{kawamoto2018differentially}), and then sanitizing the selected option. This paradigm transforms the numerical $\epsilon$-LDP problem to well-studied categorical one. 

The naive sampling strategy for a scored vote $v$ would be sampling the candidate $j$ with a probability of $\frac{|v_j|}{|v|_1}$ as in \cite{qin2016heavy,kawamoto2018differentially}. Intuitively, the sampling probability should not be related to the absolute magnitude of $v_j$, since adding a constant value to the vector should not change sampling probabilities. Hence we propose a general and flexible weighted sampling strategy with an intercept value $c$ in Algorithm \ref{alg:sampling}. The  state-of-the-art binary randomized response mechanism \cite{duchi2013local,kairouz2016discrete} is used as the base randomizer for later $\epsilon$-LDP protection on the selected categorical candidate. The weighed sampling masses $\mathbf{m}=\{m_1, m_2, ..., m_d\}$ are the vector of sampling probabilities for each rank, we assume $m_j\geq 0.0$ and $\sum_{j\in[1,d]}m_j=1.0$.       

\begin{algorithm}
    \renewcommand\baselinestretch{1.0}\selectfont
    \caption{Weighted sampling mechanism}
    \label{alg:sampling}
    \begin{algorithmic}[1]
        \Require A vote $\pi$, privacy budget $\epsilon$, score vector $\mathbf{w}$ of the voting rule and weighed sampling masses $\mathbf{m}$.
        \Ensure An unbiased private view $\tilde{v}\in \mathbb{R}^d$ that satisfies $\epsilon$-LDP.
        \State{$\rhd$ Select one rank}
        \State{$r\gets UniformRandom(0.0,1.0)$}
        \State{$j^*\gets 0$}
        \While{$r \geq 0.0$}
            \State{$j^*\gets j^*+1$}
            \State{$r\gets r-m_{j^*}$}
        \EndWhile
        \State{$\rhd$ Randomization by binary randomized response}
        \State{$B\gets \{0\}^d$}
        \State{$B_{\pi_{j^*}}\gets 1$}
        \For{$j\gets 1$ \textbf{to} $d$}
            \State{$r\gets UniformRandom(0.0,1.0)$}
            \If{$r< \frac{1}{\sqrt{\exp(\epsilon)}+1}$}
                \State{$\tilde{B}_j\gets 1-B_j$}
            \EndIf
        \EndFor
        \State{$\rhd$ Derive unbiased scores}
        \For{$j\gets 1$ \textbf{to} $d$}
            \State{$\tilde{v}_j\gets \frac{(\sqrt{\exp(\epsilon)}+1)\cdot\tilde{B}_j-1}{\sqrt{\exp(\epsilon)}-1}\cdot\frac{w_{j^*}-c}{m_{j^*}}+c$}
        \EndFor
        
        \\\Return{$\tilde{v}=\{\tilde{v}_1,\tilde{v}_2,...,\tilde{v}_d\}$}
    \end{algorithmic}
\end{algorithm}

Formal $\epsilon$-LDP guarantee of the algorithm is presented in Theorem \ref{theorem:ldpsampling}, unbiasedness of the private view $\tilde{v}$ to the scored vote $v$ is described in Lemma \ref{lemma:unbiasedsampling} (see Appendix \ref{proof:unbiasedsampling} for proof).

\begin{theorem}\label{theorem:ldpsampling}
Algorithm \ref{alg:sampling} satisfies $\epsilon$-LDP .
\end{theorem}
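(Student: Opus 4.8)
The plan is to show that Algorithm~\ref{alg:sampling} satisfies $\epsilon$-LDP by decomposing the mechanism into two stages and observing that the privacy cost is entirely incurred in the second stage. The first stage selects a rank $j^*$ via the cumulative masses $\mathbf{m}$ and sets the indicator bit $B_{\pi_{j^*}}\gets 1$; crucially, the sampling distribution over $j^*$ depends only on $\mathbf{m}$ and not on the vote, so the only vote-dependent quantity entering the randomizer is which single coordinate of the length-$d$ bit vector $B$ is set to $1$. The second stage is the binary randomized response, which independently flips each bit $B_j$ with probability $q=\frac{1}{\sqrt{\exp(\epsilon)}+1}$ and retains it with probability $1-q=\frac{\sqrt{\exp(\epsilon)}}{\sqrt{\exp(\epsilon)}+1}$. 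I would first note that the final unbiased rescaling step (lines computing $\tilde v_j$) is a deterministic, invertible post-processing map from the flipped bits $\tilde B$ to $\tilde v$, so by the post-processing immunity of differential privacy it suffices to prove the $\epsilon$-LDP bound at the level of the output bit vector $\tilde B\in\{0,1\}^d$.

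First I would fix two arbitrary votes $\pi,\pi'$ and an arbitrary output bit string $t\in\{0,1\}^d$, and write $\mathbb{P}[\tilde B=t\mid \pi]$ by conditioning on the selected rank $j^*$:
\begin{equation*}
\mathbb{P}[\tilde B=t\mid \pi]=\sum_{j^*=1}^{d} m_{j^*}\cdot\mathbb{P}\big[\tilde B=t \,\big|\, B_{\pi_{j^*}}=1,\ B_k=0\ (k\neq \pi_{j^*})\big].
\end{equation*}
Since each of the $d$ bits is flipped independently, the inner conditional probability factorizes. For a fixed ``hot'' coordinate $h=\pi_{j^*}$, every coordinate $k\neq h$ contributes a factor $q$ or $1-q$ depending on whether $t_k=1$ or $t_k=0$, while the hot coordinate $h$ contributes $1-q$ if $t_h=1$ and $q$ if $t_h=0$. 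The key algebraic observation is that changing which single coordinate is hot only changes one factor in this product — replacing a $q$ by a $1-q$ or vice versa — so the ratio between the conditional probability for any two hot coordinates is exactly $\frac{1-q}{q}=\sqrt{\exp(\epsilon)}$ or its reciprocal.

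The main step is then to bound the ratio $\frac{\mathbb{P}[\tilde B=t\mid \pi]}{\mathbb{P}[\tilde B=t\mid \pi']}$. Because both numerator and denominator are the \emph{same} convex combination (with weights $m_{j^*}$) of per-coordinate conditional probabilities, and because each individual conditional probability under $\pi$ differs from the corresponding best/worst case by at most a factor of $\sqrt{\exp(\epsilon)}$, the ratio of the two mixtures is bounded by the worst-case ratio of their summands. Concretely, I would argue that the maximum over $t$ of $\mathbb{P}[\tilde B=t\mid \pi]$ and the minimum of $\mathbb{P}[\tilde B=t\mid \pi']$ differ by the extremal single-bit discrepancy, giving $\frac{\mathbb{P}[\tilde B=t\mid \pi]}{\mathbb{P}[\tilde B=t\mid \pi']}\le\frac{1-q}{q}=\sqrt{\exp(\epsilon)}$ when comparing the randomized response on a one-hot vector. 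The honest accounting is that a one-hot-to-one-hot change moves the hot bit, flipping at most one $1\to0$ and one $0\to1$ in the ``retention pattern,'' so the product ratio is at most $\left(\frac{1-q}{q}\right)^2=\exp(\epsilon)$; substituting $q=\frac{1}{\sqrt{\exp(\epsilon)}+1}$ confirms this equals $\exp(\epsilon)$ exactly, which yields the claim.

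The step I expect to be the main obstacle is handling the mixture over $j^*$ rigorously rather than just the single-coordinate randomized response: one must verify that averaging the per-rank conditional distributions against the \emph{vote-independent} weights $\mathbf{m}$ cannot inflate the privacy loss beyond the per-coordinate bound. The clean way to dispatch this is to note that the weights $m_{j^*}$ are identical for $\pi$ and $\pi'$, so the mixture is a convex combination with matched coefficients; since the bound $\frac{\mathbb{P}[\tilde B=t\mid B\text{ hot at }h]}{\mathbb{P}[\tilde B=t\mid B\text{ hot at }h']}\le\exp(\epsilon)$ holds uniformly over all hot positions $h,h'$, it lifts to the mixtures term-by-term, and therefore the overall $\epsilon$-LDP guarantee follows.
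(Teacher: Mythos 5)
Your proposal is correct and follows essentially the same route as the paper's proof: condition on the data-independent rank selection, reduce to binary randomized response on a one-hot vector, and bound the output-probability ratio by $\bigl(\tfrac{1-q}{q}\bigr)^{2}=\exp(\epsilon)$ since moving the hot coordinate changes at most two factors. You are in fact slightly more explicit than the paper about why the vote-independent mixture over $j^{*}$ preserves the per-rank bound, but the core argument is the same RAPPOR-style computation.
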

\begin{proof}
Note that the probabilistic rank selection sub-process at line $1$ to $7$  in Algorithm \ref{alg:sampling} uses no information of $\pi$, hence consumes no privacy budget. The only step directly uses information of $\pi$ is at line $10$, which maps the rank $j^*$ to an index  $\pi_{j^*}$ of candidates, thus in order to prove the final private view satisfies $\epsilon$-LDP, it's enough to show that for any $\pi_{j^*},\pi'_{j^*} \in [1,d]$, the corresponding probabilities $Pr[\tilde{v}=t]$ have up to $\exp(\epsilon)$ discrepancy. 

Another observation is that $\tilde{v}$ is derived from the vector $\tilde{B}$ without directly using information about $\pi$, hence we only need to prove resulting vectors $\tilde{B}$ is $\epsilon$-LDP for any $\pi_{j^*},\pi'_{j^*} \in [1,d]$. Follow the proof sketch in \cite{erlingsson2014rappor}, we have:
\small
\begin{equation*}
\begin{aligned}
&&\frac{\text{Pr}[\tilde{B}=T|\pi_{j^*}]}{\text{Pr}[\tilde{B}=T|\pi'_{j^*}]}&=\frac{\sqrt{\exp(\epsilon\cdot[T_{\pi_{j^*}}=1]+\epsilon\cdot[T_{\pi'_{j^*}}=0])}}{\sqrt{\exp(\epsilon\cdot[T_{\pi_{j^*}}=0]+\epsilon\cdot[T_{\pi'_{j^*}}=1])}}&\\
&&&\leq \frac{\sqrt{\exp(\epsilon\cdot 2)}}{\sqrt{\exp(\epsilon\cdot 0)}}\ \ \leq \exp(\epsilon).&
\end{aligned}
\end{equation*} 
\normalsize
\end{proof}
\begin{lemma}\label{lemma:unbiasedsampling}
The private view $\tilde{v}$ given by Algorithm \ref{alg:sampling} is an unbiased estimation of the scored vote $v$.
\end{lemma}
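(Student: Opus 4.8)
The plan is to compute $\mathbb{E}[\tilde v_j]$ by separating the two independent sources of randomness in Algorithm~\ref{alg:sampling}: the data-independent rank selection that produces $j^*$ (with $\Pr[j^*=k]=m_k$), and the per-coordinate binary randomized response that produces $\tilde B$ from the clean vector $B$. First I would abbreviate the flip probability as $p=\frac{1}{\sqrt{\exp(\epsilon)}+1}$ and record the conditional means of a flipped bit, namely $\mathbb{E}[\tilde B_j\mid B_j=1]=1-p$ and $\mathbb{E}[\tilde B_j\mid B_j=0]=p$. A direct substitution then establishes the standard randomized-response debiasing identity
\[
\mathbb{E}\!\left[\frac{(\sqrt{\exp(\epsilon)}+1)\,\tilde B_j-1}{\sqrt{\exp(\epsilon)}-1}\,\middle|\,B_j\right]=B_j,
\]
and this is the only place where the particular constants of the rescaling on line~19 are needed.

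Next I would pin down the clean vector $B$. By lines~9--10, $B$ is the one-hot indicator of the candidate occupying rank $j^*$; writing $r_j$ for the rank of candidate $A_j$ (so that $\pi_{r_j}=j$), we have $B_j=\mathbb{1}[j^*=r_j]$, and by the definition of positional scoring $v_j=w_{r_j}$. Conditioning on $j^*$ and applying the debiasing identity coordinatewise, the expected output at coordinate $j$ becomes
\[
\mathbb{E}[\tilde v_j\mid j^*]=\mathbb{1}[j^*=r_j]\cdot\frac{w_{j^*}-c}{m_{j^*}}+c .
\]

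Finally I would take the outer expectation over $j^*$ against the sampling weights $m_k$. The constant $c$ survives because $\sum_{k}m_k=1$, while the indicator isolates the single term $k=r_j$, in which the factor $m_{r_j}$ exactly cancels the $1/m_{r_j}$ scaling introduced on line~19:
\[
\mathbb{E}[\tilde v_j]=\sum_{k=1}^{d}m_k\!\left(\mathbb{1}[k=r_j]\,\frac{w_k-c}{m_k}+c\right)=(w_{r_j}-c)+c=w_{r_j}=v_j .
\]

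I do not anticipate a deep obstacle; once the two randomness layers are separated the computation is routine. The only points requiring care are the bookkeeping identity $v_j=w_{r_j}$, which ensures that the rank-indexed score $w_{j^*}$ appearing in the algorithm is matched against the candidate-indexed quantity $v_j$ being estimated, and the observation that the $1/m_{j^*}$ reweighting is precisely what cancels the sampling probability $m_{r_j}$ of the realised rank, which implicitly relies on $m_{r_j}>0$ for every rank that can be drawn.
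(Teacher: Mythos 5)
Your proof is correct and follows essentially the same route as the paper's: first establish the standard debiasing identity for the binary randomized response so that $\frac{(\sqrt{e^\epsilon}+1)\tilde B_j-1}{\sqrt{e^\epsilon}-1}$ is conditionally unbiased for $B_j$, then average over the data-independent rank selection and observe that the $1/m_{j^*}$ reweighting cancels the sampling probability of the realised rank. Your explicit bookkeeping of $v_j=w_{r_j}$ and the caveat about $m_{r_j}>0$ are welcome clarifications but do not change the argument.
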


\subsection{Usefulness Analyses}
The accuracy/usefulness of the weighted sampling mechanism depends on the choice of its parameters, including the design of the sampling masses $\mathbf{m}$ and the intercept value $c$. Lemma \ref{lemma:msesampling} formulates the score estimator's error bounds as a function of these parameters (see Appendix \ref{proof:msesampling} for proof), the error can be decomposed into two parts, the first part $\frac{1}{n}\sum_{j\in[1,d]}\frac{(w_{j}-c)^2}{m_{j}}$ is the variance due to weighted sampling, the second part $\frac{1}{n}\frac{d\sqrt{e^\epsilon}}{(\sqrt{e^\epsilon}-1)^2}\sum_{j\in[1,d]}\frac{(w_{j}-c)^2}{m_{j}}$ is the variance due to binary randomized response. Further in Theorem \ref{theorem:minmsesampling}, we establish achievable error bounds via choosing optimal sampling and interception parameters.

Compared to error bounds of Laplace mechanism in Theorem \ref{theorem:mselaplace}, when privacy budget is low (e.g., when $e^\epsilon\approx \epsilon+1$), the factor related to the score vector of a voting rule is improved from $(\sum_{j\in [1,d]}|w_j-w_{d-j+1}|)^2$ to $2(\sum_{j\in[1,d]} |w_{j}-w_{\lceil\frac{d}{2}\rceil}|)^2$. For the simplicity of notation, we define: 
$$\Omega_{\mathbf{w}}=\sum_{j\in[1,d]} |w_{j}-w_{\lceil\frac{d}{2}\rceil}|.$$ 

\begin{figure}[!t]
	\centering
	\includegraphics[width=83mm]{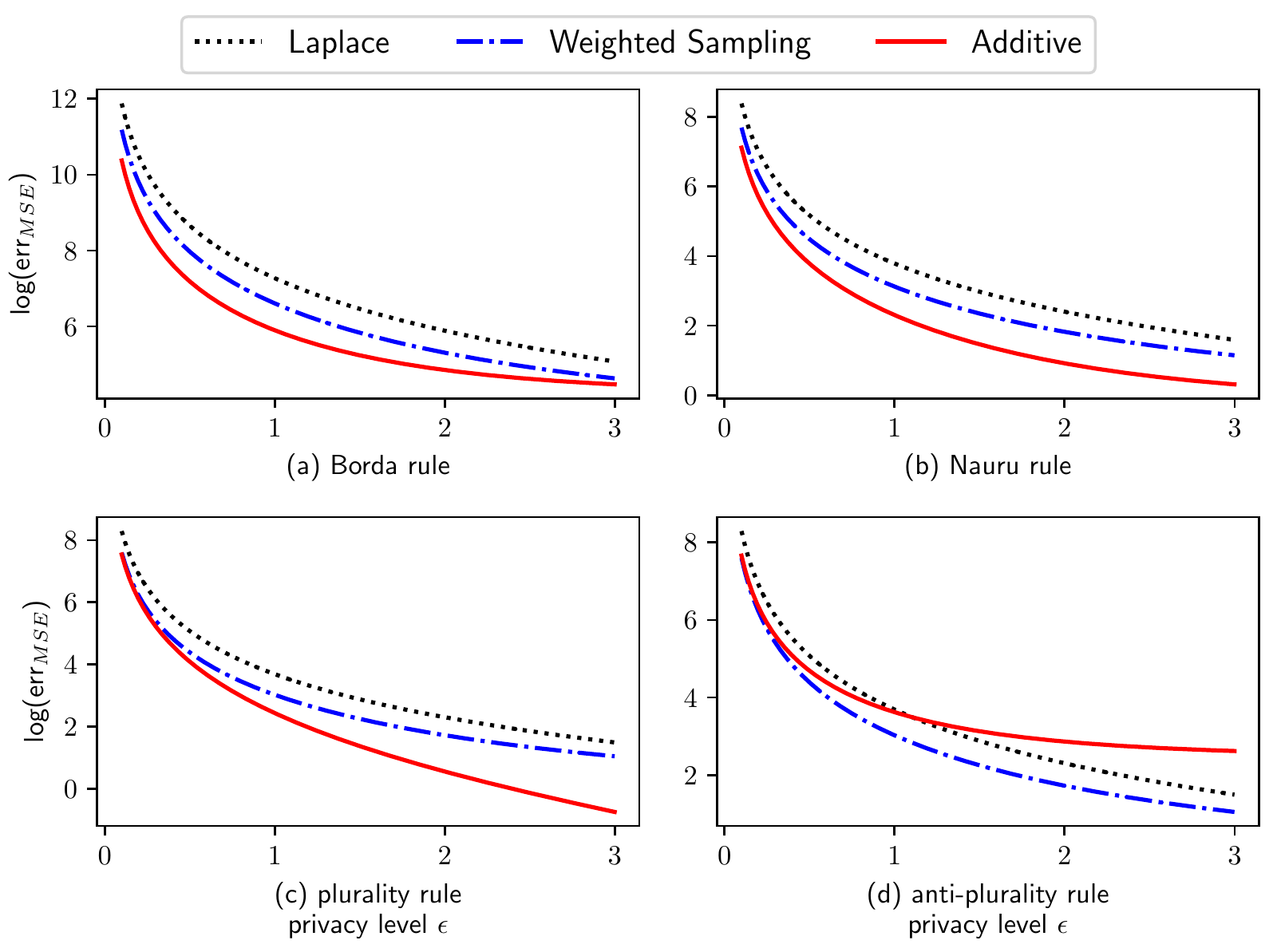}
	\vspace*{-0em}
	\caption{Theoretical mean squared estimation error of Laplace, weighted sampling and additive mechanism with Borda, Nauru, plurality and anti-plurality voting rules over $5$ candidates.}
	\label{fig:themse}
\end{figure}


\begin{lemma}\label{lemma:msesampling}
The mean squared error of average score estimator given by the Laplace mechanism in Algorithm \ref{alg:sampling} with sampling masses $\mathbf{m}$ and intercept value $c$ is:
$$\text{err}_{\text{MSE}}=\frac{1}{n}(1+\frac{d\sqrt{e^\epsilon}}{(\sqrt{e^\epsilon}-1)^2})\sum_{j\in[1,d]}\frac{(w_{j}-c)^2}{m_{j}}.$$
\end{lemma}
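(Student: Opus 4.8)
The plan is to reduce the mean squared error to a single per-voter variance and then split that variance into its two independent sources: the weighted sampling step and the binary randomized response. Since the mechanism is run locally and independently by each voter and, by Lemma~\ref{lemma:unbiasedsampling}, produces an unbiased private view, the aggregate $\tilde\theta=\frac1n\sum_i\tilde v^{(i)}$ is unbiased, so $\text{err}_{\text{MSE}}=\sum_{j\in[1,d]}\text{Var}(\tilde\theta_j)=\frac{1}{n^2}\sum_i\sum_{j\in[1,d]}\text{Var}(\tilde v^{(i)}_j)$. The first thing I would check is that the per-voter total variance $\sum_j\text{Var}(\tilde v_j)$ does not depend on which permutation the voter holds---it will only involve the quantities $(w_r-c)^2/m_r$ summed over ranks $r$, which is permutation-invariant---so all $n$ summands are equal and $\text{err}_{\text{MSE}}=\frac1n\sum_{j}\text{Var}(\tilde v_j)$ for a generic voter.

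Next I would isolate the randomized response. Writing $\hat B_j=\frac{(\sqrt{e^\epsilon}+1)\tilde B_j-1}{\sqrt{e^\epsilon}-1}$, the final debiasing loop of Algorithm~\ref{alg:sampling} reads $\tilde v_j=\hat B_j\cdot\frac{w_{j^*}-c}{m_{j^*}}+c$, so once the sampled rank $j^*$ is fixed the only remaining randomness lives in $\hat B_j$. The key computation is that $\hat B_j$ is an unbiased estimator of the one-hot indicator $B_j=[\pi_{j^*}=j]$ and, crucially, that its variance equals $\frac{\sqrt{e^\epsilon}}{(\sqrt{e^\epsilon}-1)^2}$ regardless of whether $B_j=0$ or $B_j=1$. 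This holds because the flip bit is Bernoulli with parameter $\frac{1}{\sqrt{e^\epsilon}+1}$, whose variance $p(1-p)$ is symmetric in the two outcomes, and the scaling factor $\frac{\sqrt{e^\epsilon}+1}{\sqrt{e^\epsilon}-1}$ converts it to the stated constant. This bit-independence is what makes the randomized-response contribution identical for every candidate.

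I would then apply the law of total variance conditioned on $j^*$, writing $\text{Var}(\tilde v_j)=\mathbb{E}_{j^*}[\text{Var}(\tilde v_j\mid j^*)]+\text{Var}_{j^*}(\mathbb{E}[\tilde v_j\mid j^*])$. By the bit-independence above, $\text{Var}(\tilde v_j\mid j^*)=\frac{\sqrt{e^\epsilon}}{(\sqrt{e^\epsilon}-1)^2}\bigl(\frac{w_{j^*}-c}{m_{j^*}}\bigr)^2$ does not depend on $j$, so taking expectation over $j^*$ and summing over the $d$ candidates yields the factor-$d$ randomized-response term $\frac{d\sqrt{e^\epsilon}}{(\sqrt{e^\epsilon}-1)^2}\sum_r\frac{(w_r-c)^2}{m_r}$. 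For the second term, $\mathbb{E}[\tilde v_j\mid j^*]=B_j\frac{w_{j^*}-c}{m_{j^*}}+c$ is the pure importance-sampling estimate of $v_j$; a direct second-moment calculation, using that candidate $j$ occupies exactly one rank so that $B_j=1$ for a single value of $j^*$, gives $\sum_j\text{Var}_{j^*}(\mathbb{E}[\tilde v_j\mid j^*])=\sum_r\frac{(w_r-c)^2}{m_r}-\sum_r(w_r-c)^2$, the weighted-sampling variance. Adding the two contributions and dividing by $n$ produces the claimed factor $1+\frac{d\sqrt{e^\epsilon}}{(\sqrt{e^\epsilon}-1)^2}$.

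The step I expect to be the crux, and the one most worth stating carefully, is the bit-independence of $\text{Var}(\hat B_j)$ in the second paragraph: it is precisely what decouples the randomized-response variance from the underlying vote and collapses the sum over candidates into the clean factor $d$. I would also flag that the honest sampling term is $\sum_r\frac{(w_r-c)^2}{m_r}-\sum_r(w_r-c)^2$, so the identity in the statement should be read as an upper bound that discards the nonnegative correction $\frac1n\sum_r(w_r-c)^2$; this is tight in the regime of interest (large $d$, stringent privacy) where the sampling masses $m_r$ are small and $\frac{1}{m_r}\gg 1$.
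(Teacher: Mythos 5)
Your proof is correct and follows essentially the same route as the paper's: both reduce $\text{err}_{\text{MSE}}$ to the per-voter variance $\frac{1}{n}\sum_{j}\mathrm{Var}(\tilde v_j)$ and then condition on the sampled rank $j^*$. The only organizational difference is that you invoke the law of total variance, which cleanly separates the randomized-response contribution (your bit-independent variance $\frac{\sqrt{e^\epsilon}}{(\sqrt{e^\epsilon}-1)^2}$ of each debiased bit, which is indeed the crux that produces the factor $d$) from the importance-sampling contribution, whereas the paper computes $\sum_j\mathbb{E}[(\tilde v_{\pi_j})^2]$ directly and subtracts $\sum_j w_j^2$ at the end.

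Your closing caveat is a genuine catch, not a defect of your argument. The exact per-voter variance is $\bigl(1+\frac{d\sqrt{e^\epsilon}}{(\sqrt{e^\epsilon}-1)^2}\bigr)\sum_{j}\frac{(w_j-c)^2}{m_j}-\sum_{j}(w_j-c)^2$, as you compute. The paper's derivation loses the correction term in the last line of its second-moment calculation: when the squares are expanded there, the cross and constant terms sum to $2c\sum_j w_j-dc^2$, which equals $\sum_j w_j^2-\sum_j(w_j-c)^2$ rather than $\sum_j w_j^2$ as written. Hence the displayed formula in Lemma \ref{lemma:msesampling} holds with equality only when every $w_j$ equals $c$, and is otherwise a (valid) upper bound, exactly as you flag. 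The slip is harmless downstream, since Theorem \ref{theorem:minmsesampling} is stated as an inequality and the discarded term $\sum_j(w_j-c)^2$ is dominated by $\sum_j(w_j-c)^2/m_j$, but your version of the identity is the correct one.
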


\begin{theorem}\label{theorem:minmsesampling}
The mean squared error of average score estimator given by the Laplace mechanism in Algorithm \ref{alg:sampling}  is bounded as follows:
$$\min_{\mathbf{m}\in \mathbb{1}, c\in \mathbb{R}}\text{err}_{\text{MSE}}\leq \frac{1}{n}(1+\frac{d\sqrt{e^\epsilon}}{(\sqrt{e^\epsilon}-1)^2})\cdot(\sum_{j\in[1,d]} |w_{j}-w_{\lceil\frac{d}{2}\rceil}|)^2,$$
corresponding sampling masses $\mathbf{m}^*=[\frac{|w_1-c^*|}{\sum_{|m^*-c^*|}}, \frac{|w_2-c^*|}{\sum_{|m^*-c^*|}}, ..., \frac{|w_d-c^*|}{\sum_{|m^*-c^*|}}]$, where the normalization factor $\sum_{|m^*-c^*|}$ is $\sum_{j\in[1,d]}|w_j-c^*|$, and the intercept value $c^*$ is $\text{median} (\mathbf{w})$ or $w_{\lceil\frac{d}{2}\rceil}$ or $w_{\lfloor\frac{d}{2}\rfloor+1}$.  
\end{theorem}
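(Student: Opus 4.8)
The plan is to read off the closed-form expression for $\text{err}_{\text{MSE}}$ from Lemma \ref{lemma:msesampling} and treat the minimization as a nested optimization: first over the sampling masses $\mathbf{m}$ with the intercept $c$ held fixed, then over $c\in\mathbb{R}$. Since the prefactor $\frac{1}{n}(1+\frac{d\sqrt{e^\epsilon}}{(\sqrt{e^\epsilon}-1)^2})$ is independent of both $\mathbf{m}$ and $c$, the whole problem reduces to minimizing the inner sum $\sum_{j\in[1,d]}\frac{(w_j-c)^2}{m_j}$ subject to $m_j\geq 0$ and $\sum_j m_j=1$.

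For the inner minimization over $\mathbf{m}$, I would invoke the Cauchy--Schwarz inequality
\begin{equation*}
\Big(\sum_{j\in[1,d]}|w_j-c|\Big)^2=\Big(\sum_{j\in[1,d]}\frac{|w_j-c|}{\sqrt{m_j}}\cdot\sqrt{m_j}\Big)^2\leq\Big(\sum_{j\in[1,d]}\frac{(w_j-c)^2}{m_j}\Big)\Big(\sum_{j\in[1,d]}m_j\Big),
\end{equation*}
and use $\sum_j m_j=1$ to obtain the lower bound $\big(\sum_j|w_j-c|\big)^2$. Equality holds exactly when the vectors $(|w_j-c|/\sqrt{m_j})_j$ and $(\sqrt{m_j})_j$ are proportional, i.e. when $m_j^*\propto|w_j-c|$; normalizing by $\sum_k|w_k-c|$ recovers the stated optimal masses $\mathbf{m}^*$. (Lagrange multipliers against the simplex constraint give the same stationary point, if one prefers to avoid guessing the equality condition.)

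It then remains to minimize $\big(\sum_j|w_j-c|\big)^2$ over $c$. Because $t\mapsto t^2$ is increasing on $[0,\infty)$, this is equivalent to minimizing the sum of absolute deviations $\sum_j|w_j-c|$, whose minimizer is any median of $\{w_1,\dots,w_d\}$. As the score vector $\mathbf{w}$ is non-increasing, a median is $w_{\lceil d/2\rceil}$ for odd $d$ and any point of $[\,w_{\lfloor d/2\rfloor+1},\,w_{\lceil d/2\rceil}\,]$ for even $d$; in every case $c^*\in\{w_{\lceil d/2\rceil},\,w_{\lfloor d/2\rfloor+1}\}$ attains the minimum. Substituting $c=c^*$ gives $\sum_j|w_j-c^*|=\sum_j|w_j-w_{\lceil d/2\rceil}|=\Omega_{\mathbf{w}}$, and reinserting into the expression of Lemma \ref{lemma:msesampling} produces exactly the claimed bound, with the stated $\mathbf{m}^*$ and $c^*$ realizing it (so the inequality is in fact an equality).

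The computations are routine; the only point needing care is the degenerate behaviour of the optimal masses. At $c=c^*=w_{\lceil d/2\rceil}$ the median rank satisfies $m^*_{\lceil d/2\rceil}=0$, so the term $(w_j-c^*)^2/m_j$ is nominally $0/0$. I would dispose of this by the convention that a rank with zero numerator and zero mass contributes $0$ to the variance, which is consistent both with the limiting form of the Cauchy--Schwarz equality condition and with the operation of Algorithm \ref{alg:sampling}, whose sampling loop never selects a rank of mass $0$ and hence never evaluates the offending ratio in line $20$. With that convention fixed, the bound and its attainment follow by direct substitution.
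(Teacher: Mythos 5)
Your proposal is correct and follows essentially the same route as the paper's proof: reduce to minimizing $\sum_j (w_j-c)^2/m_j$ over the simplex, plug in the optimal masses $m_j^*\propto|w_j-c|$ (which the paper asserts as a closed form and you justify via Cauchy--Schwarz), and then observe that $\min_c(\sum_j|w_j-c|)^2$ is attained at a median of $\mathbf{w}$. The only substantive additions are your explicit equality-condition argument and your handling of the $0/0$ degeneracy when $m^*_{\lceil d/2\rceil}=0$, a point the paper's proof silently passes over.
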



\begin{proof}
The proof follows two steps, the first step (in Lemma \ref{lemma:msesampling}) writes the mean squared error as a function of sampling masses $\mathbf{m}$ and the intercept value $c$, the second step finds optimal parameters by solving following equivalent problem:
\begin{equation*}
\begin{aligned}
&&\min_{\mathbf{m}, c}&\ \ \ \sum_{j\in[1,d]}\frac{(w_{j}-c)^2}{m_{j}}.&\\
&&s.t.&\ \ \ m_j\geq0.0\ \text{for}\ j\in[1,d]&\\
&&&\ \ \ \sum_{j\in[1,d]}m_j=1.&
\end{aligned}
\end{equation*}

When fixing the variable $c$, the sub-problem $\min_{\mathbf{m}} \sum_{j\in[1,d]}\frac{(w_{j}-c)^2}{m_{j}}$ has closed-form solution of $m_j=\frac{|w_j-c|}{\sum_{j\in[1,d]} |w_{j}-c)|}$. Consequently the optimizing problem becomes:
$$\min_c (\sum_{j\in[1,d]} |w_{j}-c|)^2,$$
which is minimized when $c$ is the median value of the vector $\mathbf{w}$ or $w_{\lceil\frac{d}{2}\rceil}$ or $w_{\lfloor\frac{d}{2}\rfloor+1}$. Substitute the optimal parameters of $\mathbf{m}$ and $c$ into the formula in Lemma \ref{lemma:msesampling}, we have:
$$\text{err}_{\text{MSE}}\leq \frac{1}{n}(1+\frac{d\sqrt{e^\epsilon}}{(\sqrt{e^\epsilon}-1)^2})\cdot(\sum_{j\in[1,d]} |w_{j}-\text{median}(\mathbf{w})|)^2.$$
\end{proof}

\subsection{Soundness Analyses}
Consider a private view $\tilde{v}$ from the weighted sampling mechanism, its risks to the voting result are presented in Lemma \ref{lemma:riskssampling} (see Appendix \ref{proof:riskssampling} for proof). Specifically, with parameters for optimal usefulness (see the former subsection), risk bounds of weighted sampling mechanism are presented in Theorem \ref{theorem:risksoptsampling}. Comparing to the Laplace mechanism, the maximum difference risk and the domain diameter risk are shrunk from $+\infty$ to limited values that grow with $\frac{\sqrt{\exp(\epsilon)}}{\sqrt{\exp(\epsilon)}-1}$. Numerical comparison on expected magnitude risks of the Laplace mechanism and weighted sampling mechanism for Borda voting are presented in Figure \ref{fig:theriskem}.


\begin{theorem}\label{theorem:risksoptsampling}
The manipulation risks of weighted sampling mechanism with  intercept value $c=\lceil \frac{d}{2}\rceil$ and sampling masses $\mathbf{m}=\{\frac{|w_1-c|}{\Omega_{\mathbf{w}}}, \frac{|w_2-c|}{\Omega_{\mathbf{w}}}, ..., \frac{|w_d-c|}{\Omega_{\mathbf{w}}} \}$ are:
\small
\begin{equation*}
\begin{aligned}
&&&\text{risk}_{\text{MM}}=\frac{d}{n}\max[{|\frac{-\sqrt{e^\epsilon}\Omega_{\mathbf{w}}}{\sqrt{e^\epsilon}-1}+c|}, {|\frac{\sqrt{e^\epsilon}\Omega_{\mathbf{w}}}{\sqrt{e^\epsilon}-1}+c|}];&\\
&&&\text{risk}_{\text{EM}}=\sum_{j\in[1,d]}\mathbf{m_j}[w_j>c]\frac{(\sqrt{e^\epsilon}+d-1)t'_{+}+(\sqrt{e^\epsilon}(d-1)+1)f'_{+}}{n\cdot(\sqrt{e^\epsilon}+1)}&\\
&&&\ \ \ \ \ \ \ \ \ \ \ +\sum_{j\in[1,d]}\mathbf{m_j}[w_j<c]\frac{(\sqrt{e^\epsilon}+d-1)t'_{-}+(\sqrt{e^\epsilon}(d-1)+1)f'_{-}}{n\cdot(\sqrt{e^\epsilon}+1)};&\\
&&&\text{risk}_{\text{DD}}=\frac{2\sqrt{e^\epsilon}d}{\sqrt{e^\epsilon}-1}\Omega_{\mathbf{w}}.&
\end{aligned}
\end{equation*}
\normalsize
Where the $t'_+$ and $t'_-$ denote $|\frac{\sqrt{e^\epsilon}}{\sqrt{e^\epsilon}-1}\Omega_{\mathbf{w}}+c|$ and $|-\frac{\sqrt{e^\epsilon}}{\sqrt{e^\epsilon}-1}\Omega_{\mathbf{w}}+c|$ respectively, the $f'_+$ and $f'_-$ denote $|\frac{-1}{\sqrt{e^\epsilon}-1}\Omega_{\mathbf{w}}+c|$ and $|\frac{1}{\sqrt{e^\epsilon}-1}\Omega_{\mathbf{w}}+c|$ respectively.
\end{theorem}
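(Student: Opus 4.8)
The plan is to derive each of the three risk quantities directly from the closed form of the private view produced by Algorithm \ref{alg:sampling}, after substituting the stated optimal parameters; this is precisely the specialization of Lemma \ref{lemma:riskssampling} to $c=\lceil d/2\rceil$ and $m_j=|w_j-c|/\Omega_{\mathbf{w}}$. The crucial first simplification is that with these sampling masses the scaling factor in the unbiased-score update collapses: $\frac{w_{j^*}-c}{m_{j^*}}=\frac{(w_{j^*}-c)\,\Omega_{\mathbf{w}}}{|w_{j^*}-c|}=\mathrm{sign}(w_{j^*}-c)\,\Omega_{\mathbf{w}}$, equal to $+\Omega_{\mathbf{w}}$ when $w_{j^*}>c$ and $-\Omega_{\mathbf{w}}$ when $w_{j^*}<c$ (ranks with $w_{j^*}=c$ carry mass $0$ and never occur). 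Substituting this into the final loop of Algorithm \ref{alg:sampling} shows that each coordinate $\tilde v_j$ takes exactly one of four values, fixed by $\mathrm{sign}(w_{j^*}-c)$ and by $\tilde B_j\in\{0,1\}$: in the $w_{j^*}>c$ branch it is $\frac{\sqrt{e^\epsilon}}{\sqrt{e^\epsilon}-1}\Omega_{\mathbf{w}}+c$ (bit $1$) or $\frac{-1}{\sqrt{e^\epsilon}-1}\Omega_{\mathbf{w}}+c$ (bit $0$), with magnitudes $t'_+$ and $f'_+$; in the $w_{j^*}<c$ branch it is $\frac{-\sqrt{e^\epsilon}}{\sqrt{e^\epsilon}-1}\Omega_{\mathbf{w}}+c$ or $\frac{1}{\sqrt{e^\epsilon}-1}\Omega_{\mathbf{w}}+c$, with magnitudes $t'_-$ and $f'_-$. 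This four-value enumeration is the backbone of all three bounds.

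For $\text{risk}_{\text{MM}}$ and $\text{risk}_{\text{DD}}$ I would argue that the extremes over the output domain are attained by a single global rank choice together with a uniform bit pattern. Since the rank $j^*$ is drawn once and shared by all $d$ coordinates, and since the bit-$1$ values carry the dominant factor $\frac{\sqrt{e^\epsilon}}{\sqrt{e^\epsilon}-1}>\frac{1}{\sqrt{e^\epsilon}-1}$, the largest achievable coordinate magnitude is $\max(t'_+,t'_-)$, realized by picking a rank on the appropriate side of $c$ and setting every bit to $1$; all coordinates can reach it at once, giving $\text{risk}_{\text{MM}}=\frac{d}{n}\max(t'_+,t'_-)$. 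For the diameter the relevant extremes are the signed values $\frac{\sqrt{e^\epsilon}}{\sqrt{e^\epsilon}-1}\Omega_{\mathbf{w}}+c$ and $\frac{-\sqrt{e^\epsilon}}{\sqrt{e^\epsilon}-1}\Omega_{\mathbf{w}}+c$, whose gap per coordinate is $\frac{2\sqrt{e^\epsilon}}{\sqrt{e^\epsilon}-1}\Omega_{\mathbf{w}}$; taking one view $\tilde v$ with all coordinates at the upper value and another $\tilde v'$ with all at the lower value (each feasible via one rank and uniform bits) yields $\text{risk}_{\text{DD}}=\frac{2\sqrt{e^\epsilon}d}{\sqrt{e^\epsilon}-1}\Omega_{\mathbf{w}}$.

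For $\text{risk}_{\text{EM}}$ I would compute $\mathbb{E}[|\tilde v|_1]=\sum_{j}\mathbb{E}[|\tilde v_j|]$ by conditioning first on the sampled rank $j^*$ and then on the randomized-response outcome, treating the single selected coordinate $j=\pi_{j^*}$ (baseline bit $B=1$) separately from the other $d-1$ coordinates (baseline $B=0$): the former keeps its bit with probability $\frac{\sqrt{e^\epsilon}}{\sqrt{e^\epsilon}+1}$, whereas the latter flip into $1$ only with probability $\frac{1}{\sqrt{e^\epsilon}+1}$. Weighting the magnitudes $t'_\pm,f'_\pm$ by these probabilities gives, for a rank with $w_{j^*}>c$, the conditional total $\frac{(\sqrt{e^\epsilon}+d-1)t'_++(\sqrt{e^\epsilon}(d-1)+1)f'_+}{\sqrt{e^\epsilon}+1}$, and symmetrically with $t'_-,f'_-$ for $w_{j^*}<c$. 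Averaging over the rank distribution with weights $m_j$ (split by the indicators $[w_j>c]$ and $[w_j<c]$) and dividing by $n$ produces the claimed two-term $\text{risk}_{\text{EM}}$ formula.

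I expect the expected-magnitude step to be the main obstacle, precisely because it is the only place where the two layers of randomness interact and where the asymmetry between the one ``on'' coordinate and the $d-1$ ``off'' coordinates must be tracked; getting the coefficients $\sqrt{e^\epsilon}+d-1$ and $\sqrt{e^\epsilon}(d-1)+1$ right depends entirely on this bookkeeping. A secondary subtlety I would flag for $\text{risk}_{\text{MM}}$ is verifying that $\max(t'_+,t'_-)\ge\max(f'_+,f'_-)$ so that the bit-$1$ values genuinely dominate; this follows from $\frac{\sqrt{e^\epsilon}}{\sqrt{e^\epsilon}-1}>\frac{1}{\sqrt{e^\epsilon}-1}$ together with $c=\lceil d/2\rceil>0$, but it deserves an explicit line.
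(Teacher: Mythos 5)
Your proposal is correct and follows essentially the same route as the paper: it specializes Lemma~\ref{lemma:riskssampling} by observing that $\frac{w_{j^*}-c}{m_{j^*}}=\pm\Omega_{\mathbf{w}}$ under the optimal parameters, enumerates the four resulting coordinate values, and then treats $\text{risk}_{\text{MM}}$, $\text{risk}_{\text{EM}}$ and $\text{risk}_{\text{DD}}$ in the same three-part manner as the paper's proof (including the same observation that the $\sqrt{e^\epsilon}$-weighted values dominate the others). The only difference is that you re-derive the expected-magnitude bookkeeping from the two layers of randomness directly rather than quoting it from the lemma, which is the identical computation.
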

 
\section{Additive Mechanism}\label{sec:additive}
The usefulness/soundness performance of the Laplace mechanism and the weighted sampling mechanism largely depend on the $\Delta_{\mathbf{w}}=\sum_{j\in [1,d]}|w_j-w_{d-j+1}|$ and $\Omega_{\mathbf{w}}=\sum_{j\in[1,d]} |\mathbf{w}_{j}-\mathbf{w}_{\lceil\frac{d}{2}\rceil}|$. When $\mathbf{w}_{\lceil\frac{d}{2}\rceil}$ is relatively close to $\mathbf{w}_d$ (e.g., in $k$-Approval voting and Reciprocal voting), the difference between $\Omega_{\mathbf{w}}$ and $\Delta_{\mathbf{w}}$ will be insignificant. As opposed to the sampling then randomized response paradigm in the weighted sampling mechanism, here we propose an end-to-end approach: the additive mechanism.


\subsection{Design}

Let $\mathcal{C}^k=\{S\ |\ S\subseteq \mathcal{C}\ \text{and}\ |S|=k\}$ denote the set of candidate subsets that have size of $k$, let $\mathbf{w}_{max}^k=\sum_{j\in [1,k]}w_{j}$ denote the maximum total weights of one subset, let $\mathbf{w}_{min}^k=\sum_{j\in [[d-k+1,d]]}w_{j}$ denote the minimum total weights of one subset, the additive mechanism is presented in Definition \ref{def:additive}. 

By its name, the additive mechanism randomly responses with a subset of candidates $S$ with a probability linear to their total scores $\sum_{C_{j'}\in S}v_{j'}$. The mechanism is a novel mutant of the popular exponential mechanism for achieving differential privacy, which usually responses  with a probability proportional to the \textbf{exponential} of candidates' scores. In additive mechanism, the probability is proportional to the \textbf{additive} summary of candidates' scores, which is designed for deriving an unbiased estimation of average scores.

\begin{definition}[Additive Mechanism]\label{def:additive}
In a $\epsilon$-LDP positional voting system, where the candidate is $\mathcal{C}$ and the scored vector is $\mathbf{w}$, take a scored vote $v$ as input, the additive mechanism randomly outputs an $S\in \mathcal{C}^k$ according to following probability design:
$$\text{Pr[S|v]}=\frac{\sum_{C_{j'}\in S}v_{j'}-\mathbf{w}_{min}^k}{\mathbf{w}_{max}^k-\mathbf{w}_{min}^k}\cdot \frac{\exp(\epsilon)-1}{\Phi}+\frac{1}{\Phi},$$
where the normalizer is  $\Phi={d\choose k}\frac{\frac{k}{n}(e^\epsilon-1)\sum_{j\in[1,d]}w_j-e^\epsilon\mathbf{w}_{min}^k+\mathbf{w}_{max}^k}{\mathbf{w}_{max}^k-\mathbf{w}_{min}^k}.$ The estimator of the scored vote $v$ is ($j\in[1,d]$):
$$\tilde{v}_j=a_k\cdot[C_j\in S]-b_k,$$
where $a_k=[\sum_{j'\in[1,d]}w_{j'}(e^\epsilon-1)-\frac{d}{k}e^\epsilon \mathbf{w}_{min}^k+\frac{d}{k}\mathbf{w}_{max}^k]\frac{d-1}{(d-k)(e^\epsilon-1)}$, and $b_k=[\frac{(k-1)(e^\epsilon-1)}{d-1}\sum_{j'\in[1,d]}w_{j'}-e^\epsilon \mathbf{w}_{min}^k+\mathbf{w}_{max}^k]\frac{d-1}{(d-k)(e^\epsilon-1)}$
\end{definition}


After giving formal $\epsilon$-LDP guarantee and unbiasedness guarantee of the additive mechanism in Theorem \ref{theorem:ldpadditive} (see Appendix \ref{proof:ldpadditive} for proof) and Lemma \ref{lemma:unbiasedadditive} respectively, we turn to consider efficient implementation of the additive mechanism, a naive sampling approach would have ${d \choose k}$ computational costs. Actually, selecting a subset of size $k$ from $d$ options in additive mechanism is a stricter case of weighted reservoir sampling \cite{efraimidis2006weighted}, where each option is randomly selected with given marginal weights (probabilities), but no restriction is put on joint probabilities of selected options. We here present a recursive implementation in Algorithm \ref{alg:additive} (see Appendix \ref{appalg:additive}), which decomposes subsets in $\mathcal{C}^k$ into $d-k+1$ groups according to the topmost rank of candidates in a subset, then randomly choose one group and transform to a sub-problem of selecting $k-1$ weighted options. The computational complexity of the recursive algorithm is $O(d\cdot k)$.

\begin{theorem}\label{theorem:ldpadditive}
The additive mechanism satisfies $\epsilon$-LDP.
\end{theorem}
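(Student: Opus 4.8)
The plan is to prove the likelihood-ratio condition of Definition \ref{def:ldp} directly, by showing that the output probability $\text{Pr}[S\mid v]$ of Definition \ref{def:additive} is sandwiched between $\frac{1}{\Phi}$ and $\frac{\exp(\epsilon)}{\Phi}$ for every size-$k$ subset $S\in\mathcal{C}^k$ and every scored vote $v$. Exactly as in the proof of Theorem \ref{theorem:ldplaplace}, since each ballot $\pi$ maps to a unique scored vote $v\in\mathbb{D}_v$, it suffices to establish $\text{Pr}[S\mid v]\le \exp(\epsilon)\cdot\text{Pr}[S\mid v']$ for arbitrary scored votes $v,v'$ and output $S$; the ballot-to-score reduction then delivers the stated $\epsilon$-LDP guarantee.

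First I would abbreviate the total score of a subset as $T_S(v)=\sum_{C_{j'}\in S}v_{j'}$ and record the elementary range bound $\mathbf{w}_{min}^k\le T_S(v)\le \mathbf{w}_{max}^k$. This holds because $v$ is a permutation of $\mathbf{w}$, so any $k$ of its entries sum to at least the $k$ smallest weights (which is $\mathbf{w}_{min}^k$) and at most the $k$ largest weights (which is $\mathbf{w}_{max}^k$); this is a one-line rearrangement argument. Consequently the normalized score $\frac{T_S(v)-\mathbf{w}_{min}^k}{\mathbf{w}_{max}^k-\mathbf{w}_{min}^k}$ lies in $[0,1]$, and feeding the two endpoints into the affine probability formula of Definition \ref{def:additive} gives $\frac{1}{\Phi}\le \text{Pr}[S\mid v]\le \frac{(\exp(\epsilon)-1)+1}{\Phi}=\frac{\exp(\epsilon)}{\Phi}$ for every $v$.

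A point I would emphasize is that the normalizer $\Phi$ is the \emph{same} for all votes, depending only on $\mathbf{w},d,k,n,\epsilon$. The reason is that $\sum_{S\in\mathcal{C}^k}T_S(v)={d-1 \choose k-1}\sum_{j\in[1,d]}w_j$, since each candidate lies in ${d-1 \choose k-1}$ of the size-$k$ subsets and $\sum_j v_j=\sum_j w_j$ is invariant under permutations of $\mathbf{w}$; hence one fixed $\Phi$ makes the probabilities sum to one for every input. With this vote-independence in hand, dividing the upper bound for $v$ by the lower bound for $v'$ yields $\frac{\text{Pr}[S\mid v]}{\text{Pr}[S\mid v']}\le \frac{\exp(\epsilon)/\Phi}{1/\Phi}=\exp(\epsilon)$, which is precisely the required inequality.

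I expect the only genuine work to be the range bound $\mathbf{w}_{min}^k\le T_S(v)\le \mathbf{w}_{max}^k$ and the accompanying check that $\Phi>0$ so that the displayed quantities are bona fide nonnegative probabilities; both reduce to routine rearrangement and counting from the non-increasing, nonnegative structure of $\mathbf{w}$, so the main obstacle is bookkeeping rather than any conceptual difficulty. Structurally this argument mirrors the classical analysis of the exponential mechanism, with the additive subset score playing the role of the exponentiated utility and the crucial simplification being that, unlike a generic data-dependent normalizer, $\Phi$ here is constant across all votes.
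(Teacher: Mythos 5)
Your proposal is correct and follows essentially the same route as the paper's own proof: both establish that $\Phi$ is vote-independent (via the counting identity that each candidate appears in ${d-1 \choose k-1}$ subsets, so the probabilities sum to one), and both bound the likelihood ratio by sandwiching $\text{Pr}[S\mid v]$ between its minimum $\frac{1}{\Phi}$ (attained at $T_S(v)=\mathbf{w}_{min}^k$) and its maximum $\frac{e^\epsilon}{\Phi}$ (attained at $T_S(v)=\mathbf{w}_{max}^k$). The paper phrases this as $\frac{\text{Pr}[S|v]}{\text{Pr}[S|v']}\le \frac{\max_S \text{Pr}[S|v]}{\min_S \text{Pr}[S|v']}$, which is exactly your worst-case ratio argument.
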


\begin{lemma}\label{lemma:unbiasedadditive}
The private view $\tilde{v}$ given by additive mechanism is an unbiased estimation of the scored vote $v$.
\end{lemma}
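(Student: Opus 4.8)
The plan is to exploit the linearity of the estimator. Since $\tilde{v}_j=a_k\cdot[C_j\in S]-b_k$ with $a_k,b_k$ deterministic constants, the expectation factors as
\begin{equation*}
\mathbb{E}[\tilde{v}_j]=a_k\cdot\text{Pr}[C_j\in S]-b_k,
\end{equation*}
so the whole lemma reduces to computing the marginal inclusion probability $\text{Pr}[C_j\in S]=\sum_{S\in\mathcal{C}^k:\,C_j\in S}\text{Pr}[S|v]$ and then checking that $a_k\,\text{Pr}[C_j\in S]-b_k=v_j$ for every $j\in[1,d]$. No joint structure of the output is needed, only this one marginal.

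First I would compute $\text{Pr}[C_j\in S]$ by a counting argument over the ${d\choose k}$ subsets. Writing $W=\sum_{j'\in[1,d]}w_{j'}$ (which equals $\sum_{j'}v_{j'}$ since $v$ is a permutation of $\mathbf{w}$), the only data-dependent term is $\sum_{S\ni C_j}\sum_{C_{j'}\in S}v_{j'}$. Candidate $C_j$ lies in all ${d-1\choose k-1}$ subsets that contain it, while every other candidate co-occurs with $C_j$ in exactly ${d-2\choose k-2}$ of them; applying Pascal's identity ${d-1\choose k-1}={d-2\choose k-2}+{d-2\choose k-1}$ this collapses to
\begin{equation*}
\sum_{S\ni C_j}\sum_{C_{j'}\in S}v_{j'}=v_j{d-2\choose k-1}+W{d-2\choose k-2}.
\end{equation*}
The crucial structural point is that $\text{Pr}[C_j\in S]$ is therefore \emph{affine} in $v_j$, of the form $\alpha\,v_j+\beta$ with slope $\alpha=\frac{(e^\epsilon-1)}{(\mathbf{w}_{max}^k-\mathbf{w}_{min}^k)\,\Phi}\cdot{d-2\choose k-1}$ and an intercept $\beta$ depending only on the fixed quantities $W$, $\mathbf{w}_{min}^k$, $\mathbf{w}_{max}^k$, $\Phi$ and not on $j$. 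This affinity is exactly why an unbiased linear decoder can exist.

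With $\text{Pr}[C_j\in S]=\alpha\,v_j+\beta$ in hand, matching $a_k(\alpha v_j+\beta)-b_k=v_j$ for all $j$ splits into two scalar identities: the slope condition $a_k\alpha=1$ and the intercept condition $b_k=a_k\beta$. I would first pin down $\Phi$ from the normalization $\sum_{S\in\mathcal{C}^k}\text{Pr}[S|v]=1$ (here each candidate appears in ${d-1\choose k-1}=\frac{k}{d}{d\choose k}$ subsets, which determines the factor inside $\Phi$), then substitute the stated closed forms of $a_k$ and $b_k$ and simplify using ${d\choose k}/{d-2\choose k-1}=\frac{d(d-1)}{k(d-k)}$. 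The conceptual content is entirely in the counting step above; the anticipated obstacle is purely the algebraic bookkeeping --- carrying the heavy expressions for $a_k$, $b_k$ and $\Phi$ through the two identities and watching the $\mathbf{w}_{min}^k$ and $\mathbf{w}_{max}^k$ contributions cancel so that the slope reduces exactly to $1$ and the intercept exactly to $0$.
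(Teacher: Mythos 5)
Your proposal is correct and follows essentially the same route as the paper's proof: both reduce unbiasedness to computing the marginal inclusion probability $\Pr[C_j\in S]$ by counting that $C_j$ appears in $\binom{d-1}{k-1}$ subsets while each other candidate co-occurs with it in $\binom{d-2}{k-2}$, observe that this probability is affine in $v_j$, and invert via the linear decoder $a_k,b_k$. Your early application of Pascal's identity is only a cosmetic reorganization of the same algebra the paper carries out explicitly.
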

\begin{proof}
Consider the probability that an option $C_j$ shows in the output $S$ when input scored vote is $v$:
\begin{equation*}
\begin{aligned}
&&&\ \ \ \sum_{S\in \mathcal{C}^k\ \text{and}\ C_j\in S} \text{Pr}[S|v]&\\
&&&={d-1 \choose k-1}(\frac{v_{j}-\mathbf{w}_{min}^k}{\mathbf{w}_{max}^k-\mathbf{w}_{min}^k}\cdot \frac{\exp(\epsilon)-1}{\Phi}+\frac{1}{\Phi})&\\
&&&\ \ \ +\sum_{j'\in[1,d]\ \text{and}\ j'\neq j}{d-2 \choose k-2}(\frac{v_{j'}-\mathbf{w}_{min}^k}{\mathbf{w}_{max}^k-\mathbf{w}_{min}^k}\cdot \frac{e^\epsilon-1}{\Phi})&\\
&&&=(\frac{{d-1 \choose k-1}v_{j}+{d-2 \choose k-2}[(\sum_{j'\in[1,d]}\mathbf{w}_{j'})-v_j]-\mathbf{w}_{min}^k}{\mathbf{w}_{max}^k-\mathbf{w}_{min}^k}\cdot \frac{e^\epsilon-1}{\Phi}+\frac{{d-1 \choose k-1}}{\Phi}&\\
&&&=\frac{{d-1 \choose k-1}}{{d \choose k}}\frac{\frac{(d-k)(e^\epsilon-1)}{d-1}v_j+\frac{(k-1)(e^\epsilon-1)}{d-1}\sum_{j'\in[1,d]}\mathbf{w}_{j'}-e^\epsilon \mathbf{w}_{min}^k+\mathbf{w}_{max}^k}{\frac{k}{d}\sum_{j'\in[1,d]}\mathbf{w}_{j'}(e^\epsilon-1)-e^\epsilon \mathbf{w}_{min}^k+\mathbf{w}_{max}^k}.&\\
\end{aligned}
\end{equation*}
Consequently, $v_j=a_k\cdot\text{E}[C_j\in S]-b_k$, where $a_k=[\sum_{j'\in[1,d]}\mathbf{w}_{j'}(e^\epsilon-1)-\frac{d}{k}e^\epsilon \mathbf{w}_{min}^k+\frac{d}{k}\mathbf{w}_{max}^k]\frac{d-1}{(d-k)(e^\epsilon-1)}$, and $b_k=[\frac{(k-1)(e^\epsilon-1)}{d-1}\sum_{j'\in[1,d]}\mathbf{w}_{j'}-e^\epsilon \mathbf{w}_{min}^k+\mathbf{w}_{max}^k]\frac{d-1}{(d-k)(e^\epsilon-1)}$. Hence $\text{E}[a_k\cdot[C_j\in S]-b_k]$ is an unbiased estimation of $v_j$.

\end{proof}

\subsection{Usefulness Analyses}
The estimation error bound of the additive mechanism is given in Theorem \ref{theorem:mseadditive}. The formulation of the bound has a dependence on the score vector of a voting rule. In Borda voting, we have $\sum_{j'\in[1,d]}\hat{w}_{j'}=O(d^2)$ when $\epsilon=O(1)$, hence the mean squared error bound is $O(\frac{d^4}{\epsilon^2})$. As a comparison, the mean squared error bounds of the Laplace mechanism and weighted sampling mechanism are  $O(\frac{d^5}{\epsilon^2})$. 

For better illustration, we depict numerical bound of the additive mechanism for Borda rule in Figure \ref{fig:themse},  along with a comparison with the Laplace and weighed sampling mechanism. The numerical results show average $75\%$ error reduction compared to the Laplace mechanism, and average $40\%$ error reduction compared to the weighted sampling mechanism.

\begin{theorem}\label{theorem:mseadditive}
The mean squared error $\mathbb{E}[|\tilde{\theta}-\theta|_2^2]$ of average score estimator given by the additive mechanism is bounded as follows:
$$\frac{(\sum_{j'\in[1,d]}\hat{w}_{j'})^2-\sum_{j'\in[1,d]}\hat{w}_{j'}^2}{n(e^\epsilon-1)^2},$$
where $\hat{w}_{j}=\mathbf{w}_{j}(e^\epsilon-1)- e^\epsilon \mathbf{w}_d+\mathbf{w}_1.$
\end{theorem}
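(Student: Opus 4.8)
The plan is to use the unbiasedness established in Lemma~\ref{lemma:unbiasedadditive} to reduce the mean squared error to a pure variance computation, and then read off that variance from the Bernoulli-indicator structure of the estimator in Definition~\ref{def:additive}. First I would write $\text{err}_{\text{MSE}}=\mathbb{E}[|\tilde{\theta}-\theta|_2^2]=\sum_{j\in[1,d]}\mathbb{E}[(\tilde{\theta}_j-\theta_j)^2]$ and observe that, since each voter runs the mechanism independently and the estimator is unbiased, every coordinate separates: $\mathbb{E}[(\tilde{\theta}_j-\theta_j)^2]=\mathrm{Var}[\tilde{\theta}_j]=\frac{1}{n^2}\sum_{i}\mathrm{Var}[\tilde{v}^{(i)}_j]$. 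The key structural remark is that only the per-coordinate (diagonal) variances ever enter, so the within-vote covariances between distinct coordinates $\tilde{v}_j,\tilde{v}_{j'}$ never appear and the constraint $|S|=k$ does not force me to track the joint law of $S$.

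Next I would compute the single-vote variance. Since $\tilde{v}_j=a_k\cdot[C_j\in S]-b_k$ with $[C_j\in S]$ a Bernoulli variable of mean $p_j:=\Pr[C_j\in S]$, we get $\mathrm{Var}[\tilde{v}_j]=a_k^2\,p_j(1-p_j)$, hence $\sum_{j}\mathrm{Var}[\tilde{v}_j]=a_k^2\big(\sum_j p_j-\sum_j p_j^2\big)$. Here I would insert two facts: $\sum_j p_j=\mathbb{E}[|S|]=k$ by linearity over the indicators, and $p_j=(v_j+b_k)/a_k$ from the unbiasedness identity $a_k p_j-b_k=v_j$. Substituting the latter gives $\sum_j p_j^2=\tfrac{1}{a_k^2}\sum_j(v_j+b_k)^2=\tfrac{1}{a_k^2}\big(\sum_j v_j^2+2b_k\sum_j v_j+d\,b_k^2\big)$. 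Because every scored vote $v$ is a permutation of $\mathbf{w}$, both $\sum_j v_j^2=\sum_j w_j^2$ and $\sum_j v_j=\sum_j w_j$ are permutation-invariant, so the per-voter variance is identical for all voters and the outer average collapses to $\text{err}_{\text{MSE}}=\frac{1}{n}\big(a_k^2 k-\sum_j w_j^2-2b_k\sum_j w_j-d\,b_k^2\big)$.

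Finally I would specialize to the single-candidate response $k=1$, which is exactly what the notation $\hat{w}_j=w_j(e^\epsilon-1)-e^\epsilon w_d+w_1$ encodes (here $\mathbf{w}^1_{min}=w_d$ and $\mathbf{w}^1_{max}=w_1$). Specializing the constants in Definition~\ref{def:additive} yields the clean forms $a_1=\frac{\sum_j\hat{w}_j}{e^\epsilon-1}$ and $b_1=\frac{w_1-e^\epsilon w_d}{e^\epsilon-1}$. Writing $\hat{w}_j=(e^\epsilon-1)w_j+(w_1-e^\epsilon w_d)$ as an affine image of $w_j$, the correction term factors as $\sum_j w_j^2+2b_1\sum_j w_j+d\,b_1^2=\frac{1}{(e^\epsilon-1)^2}\sum_j\hat{w}_j^2$, while $a_1^2=\frac{(\sum_j\hat{w}_j)^2}{(e^\epsilon-1)^2}$; subtracting and dividing by $n$ gives the stated expression.

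The main obstacle is this last algebraic collapse: one must recognize $\hat{w}_j$ as the common affine reparametrization $(e^\epsilon-1)w_j+(w_1-e^\epsilon w_d)$ so that the messy $\sum_j w_j^2+2b_1\sum_j w_j+d\,b_1^2$ can be matched coefficient-by-coefficient against $\frac{1}{(e^\epsilon-1)^2}\sum_j\hat{w}_j^2$; this is routine but error-prone bookkeeping. I would also address why the result is quoted as a \emph{bound}: for $k=1$ the computation above is in fact an exact identity, so the formula is the achievable error of the $k=1$ instance, whereas for general $k$ the analogous expression retains the extra term $a_k^2 k$, and one would argue that $k=1$ is the favorable design choice.
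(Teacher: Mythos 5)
Your proposal is correct and follows essentially the same route as the paper's proof: reduce the MSE to per-coordinate Bernoulli variances $a_k^2\,p_j(1-p_j)$ via unbiasedness and independence, specialize to $k=1$ where $a_1=\frac{\sum_j\hat{w}_j}{e^\epsilon-1}$ and $p_j=\hat{w}_j/\sum_{j'}\hat{w}_{j'}$ (up to the permutation), and collapse the algebra. Your write-up is in fact somewhat more explicit than the paper's (which asserts the final total-variance formula without showing the $\sum_j p_j^2$ bookkeeping) and your remark on why the statement is a bound rather than an identity matches the paper's brief justification that $k=1$ is near-optimal.
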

\begin{proof}
The parameter $k=1$ is near to optimal for many voting rules except extremal cases of score vector $\mathbf{w}$ (e.g., of plurality voting). Hence to characterize the usefulness performance of additive mechanism, we only need to analyze the case when $k=1$. Given that $a_1=\sum_{j'\in[1,d]}\mathbf{w}_{j'}-\frac{e^\epsilon d}{e^\epsilon-1}\mathbf{w}_d+\frac{d}{e^\epsilon-1}\mathbf{w}_1$, $b_1=-\frac{e^\epsilon}{e^\epsilon-1}\mathbf{w}_d+\frac{1}{e^\epsilon-1}\mathbf{w}_1$, and $\text{Pr}[C_j\in S|v]=\frac{v_j(e^\epsilon-1)- e^\epsilon \mathbf{w}_d+\mathbf{w}_1}{\sum_{j'\in[1,d]}\mathbf{w}_{j'}(e^\epsilon-1)- e^\epsilon d \mathbf{w}_d+d \mathbf{w}_1}$. The variance of Bernoulli variable $[C_j\in S]$ is $\text{Pr}[C_j\in S|v](1-\text{Pr}[C_j\in S|v])$, then the variance of $\tilde{v}_j=a_1 [C_j\in S]-b_1$ is:
$$(a_1)^2\text{Pr}[C_j\in S|v](1-\text{Pr}[C_j\in S|v]).$$
Consequently, the total variance $\text{E}[|\tilde{v}-v|_2^2]$ is $\frac{(\sum_{j'\in[1,d]}\hat{w}_{j'})^2-\sum_{j'\in[1,d]}\hat{w}_{j'}^2}{(e^\epsilon-1)^2}$.
\end{proof}

\subsection{Soundness Analyses}
The adversarial risks of additive mechanism are presented in Theorem \ref{theorem:risksadditive}. When applying parameter $k=1$ for the Borda rule, we have $a_1=O(\frac{d}{\epsilon})$ and $b_1=O(\frac{d}{\epsilon})$, hence the risk bounds of $\text{risk}_{\text{MM}}$ and $\text{risk}_{\text{EM}}$ are both $O(\frac{d^2}{\epsilon})$, the risk bound of $\text{risk}_{\text{DD}}$ is $O(\frac{d}{\epsilon})$. As a comparison, in the weighted sampling mechanism, the risk bounds of $\text{risk}_{\text{MM}}$ and $\text{risk}_{\text{EM}}$ are $O(\frac{d^3}{\epsilon})$, the risk bound  of $\text{risk}_{\text{DD}}$ is $O(\frac{d^3}{\epsilon})$. Figure \ref{fig:theriskem} presents numerical results of these risks in additive mechanism comparing with the Laplace and weighted sampling mechanisms. In most cases, additive mechanism reduces $70\%$ expected magnitude.

\begin{figure}[!t]
	\centering
	\includegraphics[width=83mm]{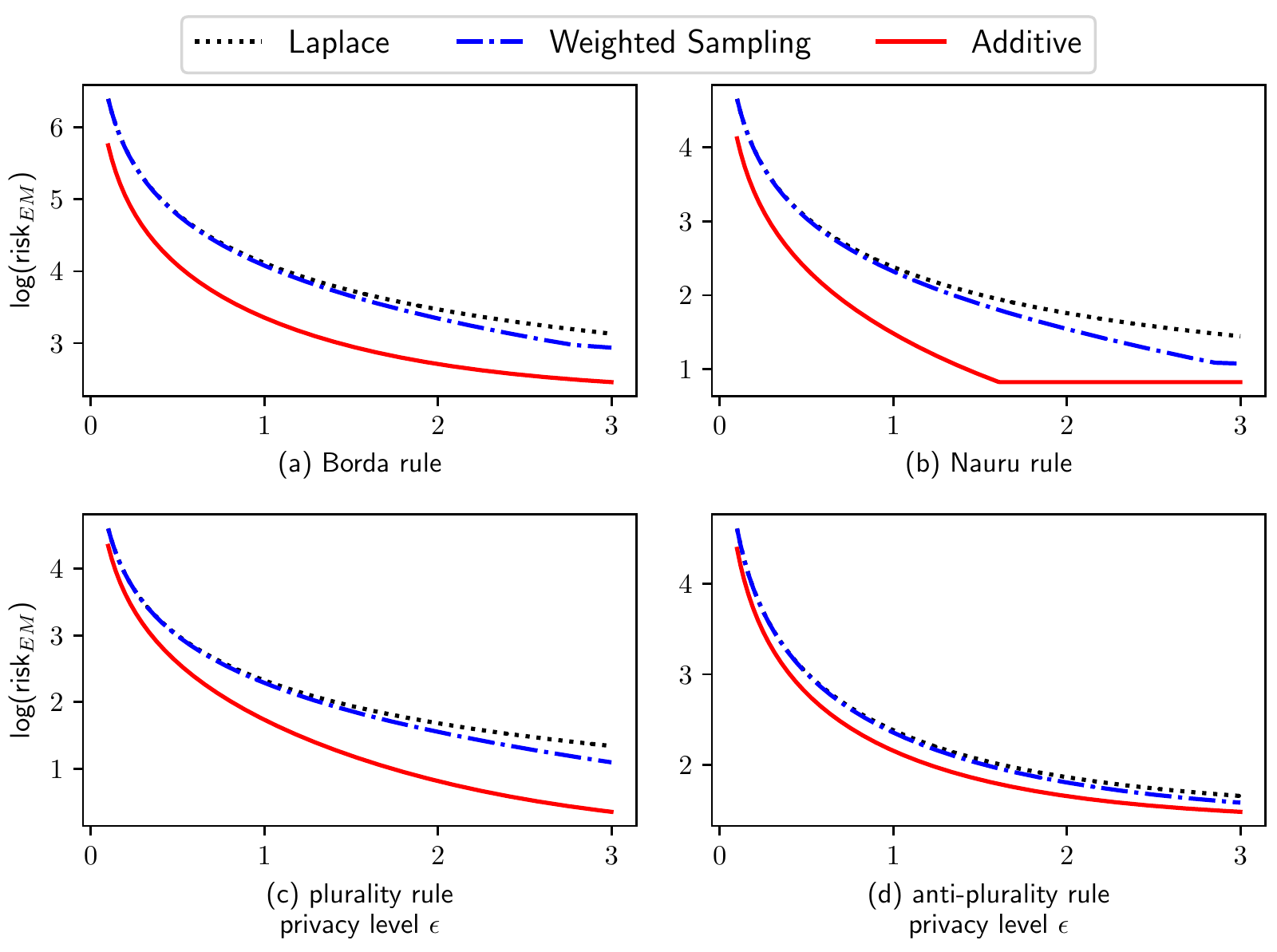}
	\vspace*{-0em}
	\caption{Theoretical expected magnitude risks of Laplace, weighted sampling and additive mechanism with Borda, Nauru, plurality and anti-plurality rules over $5$ candidates.}
	\label{fig:theriskem}
\end{figure}
   
\begin{theorem}\label{theorem:risksadditive}
The manipulation risks of additive mechanism are:
\begin{equation*}
\begin{aligned}
&&&\text{risk}_{\text{MM}} = \frac{k|a_k-b_k|+(d-k)|b_k|}{n};&\\ 
&&&\text{risk}_{\text{EM}}= \frac{k|a_k-b_k|+(d-k)|b_k|}{n};&\\
&&&\text{risk}_{\text{DD}} \leq {2k|a_k|}.&\\
\end{aligned}
\end{equation*}
\end{theorem}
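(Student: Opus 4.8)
The plan is to exploit the rigid two-valued structure of the additive mechanism's output. By Definition \ref{def:additive}, every private view $\tilde{v}$ is determined by a sampled subset $S\in\mathcal{C}^k$ through $\tilde{v}_j=a_k\cdot[C_j\in S]-b_k$. First I would observe that each coordinate takes only one of two values: for the $k$ indices with $C_j\in S$ we have $\tilde{v}_j=a_k-b_k$, while for the remaining $d-k$ indices we have $\tilde{v}_j=-b_k$. Consequently, independently of which subset $S$ is drawn,
$$|\tilde{v}|_1=k\,|a_k-b_k|+(d-k)\,|b_k|.$$

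Since this quantity is the same constant over the entire output space $\mathcal{D}_{\tilde{v}}$, the maximum magnitude and the expected magnitude must coincide. For $\text{risk}_{\text{MM}}$ the maximum of $|\tilde{v}|_1/n$ over $\tilde{v}\in\mathcal{D}_{\tilde{v}}$ is simply this constant divided by $n$; for $\text{risk}_{\text{EM}}$ the $L_1$ norm is deterministic regardless of the randomness in $S$, so its expectation equals the same constant. This rigidity is precisely what makes the two soundness metrics agree for the additive mechanism, in sharp contrast to the Laplace mechanism of Theorem \ref{theorem:riskslaplace}, where $\text{risk}_{\text{MM}}=+\infty$ while $\text{risk}_{\text{EM}}$ stays finite.

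For the domain diameter I would pick any two outputs $S,S'$ with private views $\tilde{v},\tilde{v}'$ and note that the intercept cancels, giving $\tilde{v}_j-\tilde{v}'_j=a_k([C_j\in S]-[C_j\in S'])$. Thus $|\tilde{v}_j-\tilde{v}'_j|$ equals $|a_k|$ exactly when $j$ lies in the symmetric difference $S\triangle S'$ and is zero otherwise, so that $|\tilde{v}-\tilde{v}'|_1=|a_k|\cdot|S\triangle S'|$. Because $|S|=|S'|=k$, the symmetric difference contains at most $|S|+|S'|=2k$ elements, and maximizing over all pairs yields $\text{risk}_{\text{DD}}\leq 2k|a_k|$.

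The calculations are elementary; the single conceptual point is recognizing that the estimator is two-valued, which forces its magnitude to be constant across all possible outputs. No delicate estimates enter, so I do not expect a genuine obstacle — the statement follows almost directly from the form of the estimator, mirroring the ``almost trivial'' remark made for the Laplace soundness bounds.
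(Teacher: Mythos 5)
Your proof is correct and follows essentially the same route as the paper's: both arguments rest on the observation that every private view consists of exactly $k$ coordinates equal to $a_k-b_k$ and $d-k$ coordinates equal to $-b_k$, so $|\tilde v|_1$ is a deterministic constant (making $\text{risk}_{\text{MM}}=\text{risk}_{\text{EM}}$), and both bound $\text{risk}_{\text{DD}}$ via the symmetric difference of two $k$-subsets having at most $2k$ elements. If anything, your write-up is slightly cleaner, since the paper's proof text drops the factor $k$ that appears correctly in the theorem statement.
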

\begin{proof}
For any intermediate result of subset $S\in \mathcal{C}^k$,  the corresponding private view $\tilde{v}=\{\tilde{v}_1,\tilde{v}_2,...,\tilde{v}_d\}$ contains number of $k$ of $a_k-b_k$ and number $d-k$ of $-b_k$, hence both $\text{risk}_{\text{MM}}$ and $\text{risk}_{\text{EM}}$ are $\frac{|a_k-b_k|+(d-1)|b_k|}{n}$. Consider $\text{risk}_{\text{DD}}$, for any paired intermediate results $S,S'\in \mathcal{C}^k$, their corresponding private views differ in at most $2k$ positions, the magnitude of each difference is $|a_k|$, hence the maximum possible total differences are ${2k|a_k|}$.
\end{proof}

\section{Discussion}\label{sec:discussion}
\subsection{Usefulness vs. Indistinguishability}
The privacy budget of $\epsilon$-LDP controls distinguishability in probabilistic outputs, thus limits mutual information between the private view and the scored vote. Vote data, as numerical data with fixed and exclusive ordinal values, could be treated as numerical data (e.g., in Laplace mechanism) or categorical data (e.g., in weighted sampling mechanism), its mean squared error suffers a factor of $\theta(\frac{1}{\epsilon^2})$ when $\epsilon=O(1)$, which is the same as numerical data or categorical data (e.g., lower bounds in \cite{duchi2013local}). Extra factor in mean squared error that is determined by the scored vector of a voting rule and the  concrete design of an $\epsilon$-LDP mechanism. As numerical error bounds in Figure \ref{fig:themse} demonstrated, for most voting rules, the error of weighted sampling mechanism is about $50\%$ of the Laplace mechanism's, while the additive mechanism is about $25\%$.   


\subsection{Soundness vs. Indistinguishability}\label{subsec:soundandprivate}
Since the probabilistic outputs are almost indistinguishable regardless of the value of the manipulated/true vote, a more rigid level of privacy protection in $\epsilon$-LDP has the advantage of limiting an adversary's constructive power,  hence the resulting scores are less led by preferences of the manipulated vote. However, when an adversary could falsely contribute an extra vote to the voting system, a more rigorous level of privacy means a larger magnitude of the private view, thus the expected amount of scores an adversary added to the resulting scores is amplified. As shown in the soundness analyses of Laplace mechanism and our proposed mechanisms, the expected magnitude of private view is linear to $\frac{1}{\epsilon}$, the deconstructive power of every voter (or a possible adversary) gets larger due to the noises injecting for privacy preservation. A desirable property of the additive mechanism is that the maximum possible magnitude of private view equals the expected magnitude, meanwhile the Laplace and weighted sampling mechanisms don't hold. In voting systems for business decision making, a possible counter-measure to the data amplification attack is broadening survey population, increasing number of voters decreases the relative magnitude of one possibly adversarial vote.

Consider the cases of view disguise attack that an adversary directly sends a fraud private view, the ability/power of the adversary is closely related to the domain of the private view, from which the adversary could choose a value to destroy or reform the final result. The private view's domain of Laplace mechanism spreads to  $\mathcal{R}^d$, in the weighted sampling mechanism or the additive mechanism, the domain is reduced to $[-\Theta(\frac{1}{\epsilon}), \Theta(\frac{1}{\epsilon})]^d$. These results suggest that a higher level of privacy preservation  empowers higher ability that an adversary could manipulate the voting result. As counter-measures, imposing a lower level of privacy protection or bringing in more voters could strengthen the soundness of a voting system. Another approach is putting soundness metric into the design of $\epsilon$-LDP mechanism, for example,  risk bounds of the additive mechanism are much better than the Laplace and weighted sampling mechanism's.



Another interesting aspect of soundness is the difference between the centralized differential privacy model and the local differential privacy model. Centralized differential privacy (of the unbounded differential privacy \cite{kifer2011no} notion) assumes all votes have been collected by a trustable database curator, and ensures probability distribution of the final scores of candidates be almost indistinguishable, regardless of whether one single vote is in the database. Consequently, one fraudulent vote can't significantly change the differentially private output in the probabilistic perspective, enforcing a more rigid level of unbounded differential privacy suppresses adversaries' fraud power/income, and is helpful for truthfulness in social welfare maximization \cite{mcsherry2007mechanism}. There is also a notion of differential privacy defined by almost indistinguishable in outputs when one single vote changes value, and is termed bounded differential privacy \cite{kifer2011no}. The $\epsilon$-unbounded differential privacy implies $2\epsilon$-bounded differential privacy, but not vice versa. Local differential privacy is equivalent to the bounded differential privacy defined on a single vote. The $\epsilon$-LDP, unbounded and bounded differential privacy all put limitations on  an adversary's data manipulation power/income, but $\epsilon$-LDP/bounded differential privacy suffers from data fraud.



Inspired by the difference in soundness performance between unbounded differential privacy and bounded (or local) differential privacy, in order to improve soundness against fraudulent votes, the voting counter may force a probability of discard/opt-out probability $\overline{p}$ for every vote, which is similar to data sampling for budget saving in unbounded differential privacy. Consequently, the expected difference a fraudulent vote can make to the final total scores is shrunk by a factor of $\overline{p}$. However, this can't change the expected difference a fraudulent vote can make to final average scores, such as soundness metrics of $risk_{MM}$ and $risk_{EM}$, since the expected number of available voters also shrinks by a factor of $\overline{p}$.

\subsection{Usefulness vs. Soundness}\label{subsec:usefulandsound}
Recall that the weighted sampling mechanism and additive mechanism improve usefulness and soundness metrics simultaneously, compared to the Laplace mechanism. Here we explore interactions between these two performance metrics. Consider the soundness metric of expected magnitude $\text{risk}_{EM}$ and the usefulness metric of mean squared error $\text{err}_{MSE}$, we have:
$$\text{risk}_{EM}\leq\frac{\sqrt{d\cdot n\cdot\text{err}_{MSE}}+\sum_{j\in[1,d]}|\mathbf{w}_j|}{n},$$
which is derived as follows according to convexity of the square root:
\begin{equation*}
\begin{aligned}
&&\frac{\text{E}[|\tilde{v}|_1]}{n}&\leq \frac{\text{E}[|\tilde{v}-v|_1]+|v|_1}{n}\ \leq\ \frac{\text{E}[\sqrt{d\cdot|\tilde{v}-v|_2^2}]+\sum_{j\in[1,d]}|\mathbf{w}_j|}{n}&\\
&&&\leq\ \frac{\sqrt{d\cdot n\cdot\text{E}[|\tilde{\theta}-\theta|_2^2]}+\sum_{j\in[1,d]}|\mathbf{w}_j|}{n}.&\\
\end{aligned}
\end{equation*}
This inequality implies that a mechanism with good usefulness performance usually has good soundness performance too.

In some scenarios that voting administrators pay much attention to soundness performance, they may put hard constraints on $\text{risk}_{DD}$.  A natural question arises, how do soundness constraints affect the usefulness of the voting system? Here we give a negative result that $\epsilon$-LDP mechanism (having unbiased estimator) may not even exist under these constraints. Based on Popoviciu's inequality on variances, we have:
$$\text{risk}_{DD}\geq 2\sqrt{\frac{\text{err}_{MSE}}{n}}.$$
Given that $\text{err}_{MSE}$ goes to infinity as $\epsilon\rightarrow 0$ when the number of voters $n$ is fixed, constraints on $\text{risk}_{DD}$ become unsatisfied. 

  
 

\section{experiments}\label{sec:exp}
We now evaluate the usefulness and soundness performance of weighted sampling and additive mechanism, and compare them with the Laplace mechanism \cite{dwork2006calibrating} and the original sampling-based approach in \cite{qin2016heavy,kawamoto2018differentially}. 

\subsection{Settings}
\ \ \ \ \ \textbf{Datasets.}\ \ In order to thoroughly assess performance of mechanisms in extensive settings, we use synthetic datasets. In each vote aggregation simulation, each candidate $C_j$ is assigned with a uniform random scale $\alpha_j\in[0.0,1.0)$. Each voter's numerical preference $\beta_{(i,j)}$ on candidate $C_j$ is an independent uniform random value $r_{(i,j)}\in [0.0,1.0)$ multiplied by the scale $\alpha_j\in[0.0,1.0)$, then the voter's ranking on candidates is determined by $\beta_{(i,j)}$. In these simulations, the number of candidates $d$ ranges from $4$ to $32$, the number of voters $n$ ranges from $1000$ to $1\ 000\ 000$.

Adversarial votes in the simulation of data amplification attack are uniform-randomly selected from the vote domain $\mathcal{D}_{v}$. The number of adversarial votes $n'$ ranges from $n\cdot 0.1\%$ to $n\cdot 5.0\%$. 

Adversarial private views in the simulation of view disguise attack are generated so that the $2$nd rank candidate $C_{j_2}$ (in the non-adversarial voting result) benefits most. That is, we assume the adversary has the prior knowledge of $1$st and $2$nd ranked candidates, and the adversarial private view $\tilde{v}$ has maximum $\tilde{v}_{j_2}-\tilde{v}_{j_1}$ among the domain of private view. Specifically for the Laplace mechanism that the private view's domain is $[-\infty, +\infty]^d$, we use $95\%$ confidence interval of the Laplace distribution as filtered domain, and assign $\tilde{v}_{j_2}=\log(\frac{1}{1-0.95})\Delta+w_1$, $\tilde{v}_{j_1}=-\log(\frac{1}{1-0.95})\Delta+w_d$. The number of adversarial private views $n''$ in simulations ranges from $n\cdot 0.1\%$ to $n\cdot 5.0\%$. 

Information about simulation parameters is summarized in Table \ref{tab:settings}.  Our experiments focus on the most popular Borda and Nauru voting rules. Every experimental result is the average value of $400$ repeated simulations.

\begin{table}[H]
	\renewcommand{\arraystretch}{1.35}
	\caption{Enumeration of experiment settings, the values in bold format are the default settings.}
	\label{tab:settings}
	\centering
	\begin{tabular}{|c|c|}
		\hline
		\bfseries Parameter & \bfseries Enumerated values\\
		\hline
		voting rule & \textbf{Borda}, Nauru\\
		\hline
		number of candidates $d$ & $4, \mathbf{8}, 16, 32$\\
		\hline
		normal voters $n$ & $1000, \mathbf{10000}, 1000000$\\
		\hline
		adversarial votes $n'$ & $n\cdot 0.1\%,\ \ n\cdot 1.0\%,\ \ n\cdot 5.0\%$\\
		\hline
		adversarial views $n''$ & $n\cdot 0.1\%,\ \ n\cdot 1.0\%,\ \ n\cdot 5.0\%$\\
		\hline
		privacy budget $\epsilon$ &  \small$0.01, 0.1, 0.2, 0.4, 0.8, 1.0, 1.5, 2.0, 3.0$\normalsize\\
		\hline
	\end{tabular}
\end{table}

\textbf{Evaluation Metrics.}\ \ We use usefulness metrics in Section \ref{subsection:metrics} to evaluate the performance of mechanisms in the non-adversarial and adversarial settings. Since the mean squared error and the soundness metrics of mechanisms are theoretically analyzed and numerically compared in former sections, hence their results are omitted.

\begin{figure*}
	\centering
	\includegraphics[width=172mm]{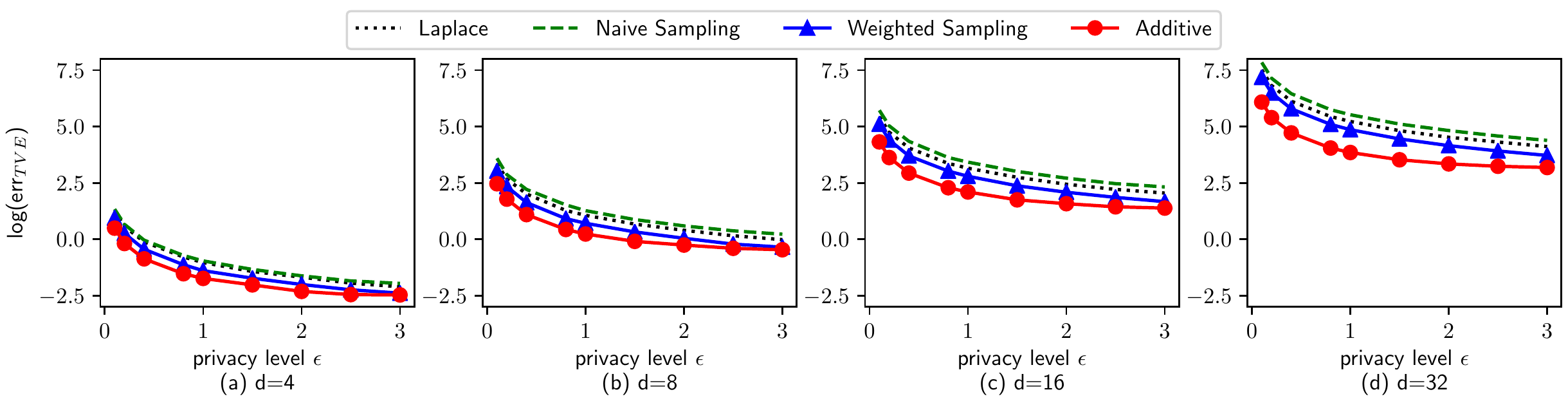}
	\vspace*{-1.5em}
	\caption{Total variation error under Borda rule over $4, 8, 16, 32$ candidates with $10000$ voters.}
	\vspace*{-0.0em}
	\label{fig:tve481632borda}
\end{figure*}

\begin{figure*}
	\centering
	\includegraphics[width=170mm]{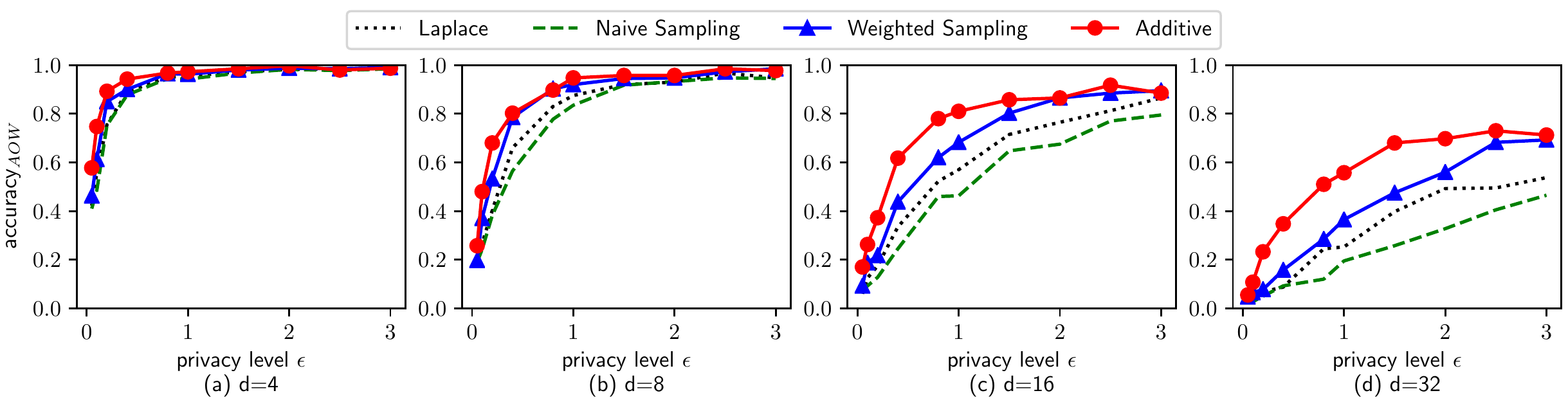}
	\vspace*{-1.5em}
	\caption{Accuracy of winner under Borda rule over $4, 8, 16, 32$ candidates with $10000$ voters.}
	\vspace*{0.5em}
	\label{fig:accuracy481632borda}
\end{figure*}

\begin{figure*}
	\centering
	\includegraphics[width=170mm]{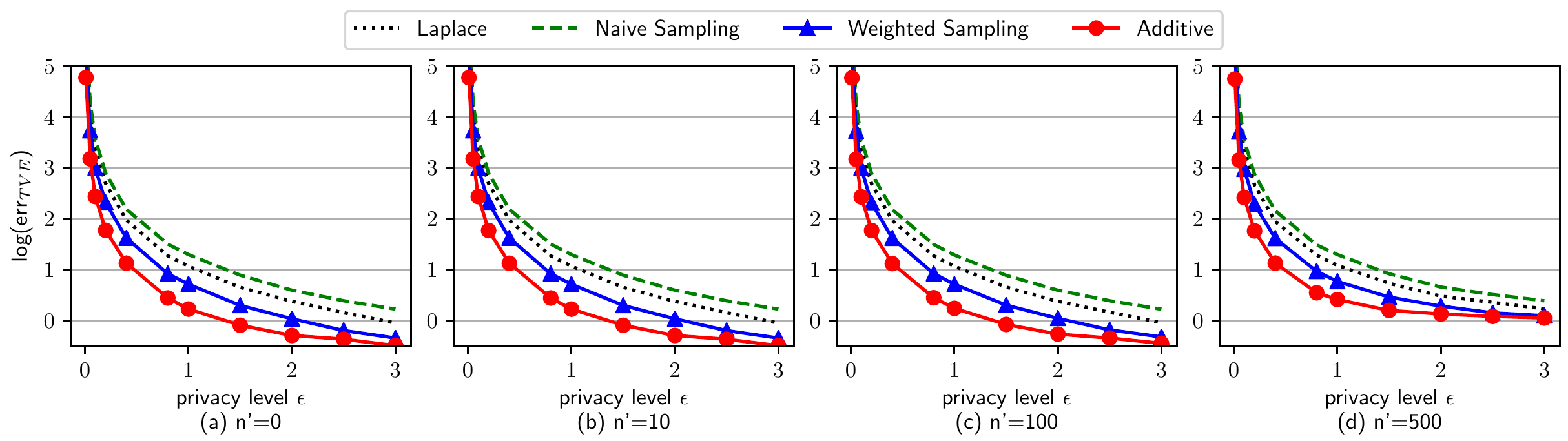}
	\vspace*{-1.5em}
	\caption{Total variation error under Borda rule with $10000$ honest voters and $n'=0, 10, 100, 500$ adversarial votes.}
	\vspace*{-0.0em}
	\label{fig:tvebordana}
\end{figure*}

\begin{figure*}
	\centering
	\includegraphics[width=170mm]{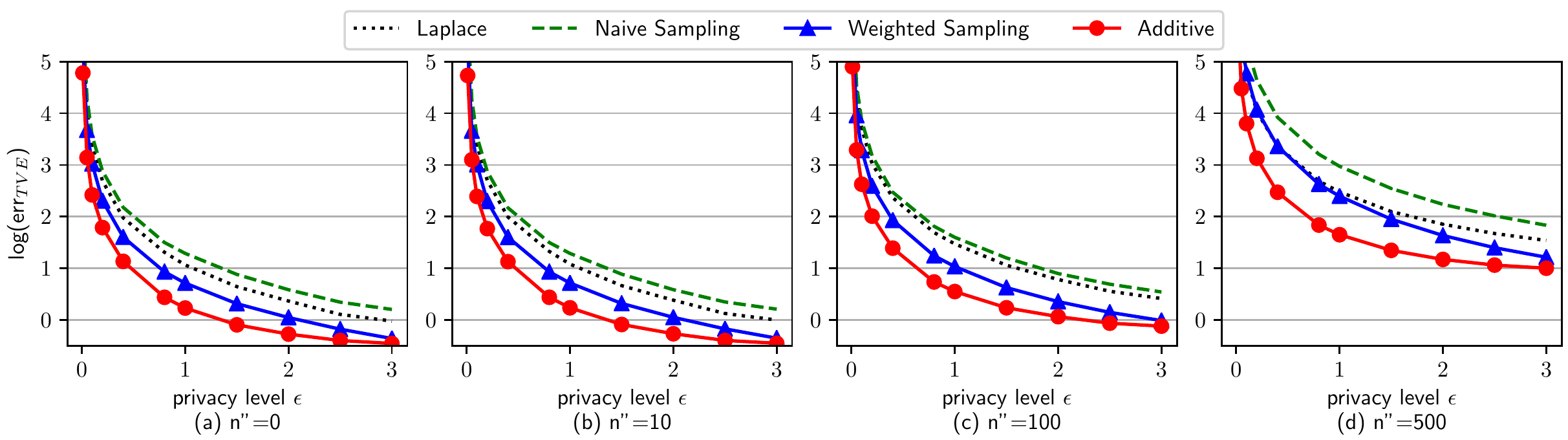}
	\vspace*{-1.5em}
	\caption{Total variation error under Borda rule with $10000$ honest voters and $n'=0, 10, 100, 500$ adversarial private views.}
	\vspace*{-0.0em}
	\label{fig:tvebordand}
\end{figure*}

\subsection{Non-adversarial Results}
\ \ \ \ \ \textbf{Varying number of candidates.}\ \ Simulated with $n=10000$ voters, the total variation error and accuracy of winner results under the Borda rule with varying number of candidates are demonstrated in Figures \ref{fig:tve481632borda} and \ref{fig:accuracy481632borda} respectively (extra results can be found in Appendix \ref{result:candidates}). When compared to the Laplace mechanism, the weighted sampling mechanism averagely reduces $\text{err}_{\text{TVE}}$ by $25\%$, while the additive mechanism averagely reduces $\text{err}_{\text{TVE}}$ by $50\%$. The performance discrepancy between the  weighted sampling and additive mechanism grows with the number of candidates, which confirms our theoretical analyses of mean squared error bounds.


\textbf{Varying number of voters.}\ \ 
Simulated with $d=8$ candidates,  experimental results with $n=1000$ voters are demonstrated in Figures \ref{fig:tve1000} and \ref{fig:accuracy1000}, experimental results with $n=100\ 000$ voters are demonstrated in Figures \ref{fig:tve100000} and \ref{fig:accuracy100000} (additional results can be found in Appendix \ref{result:voters}). Comparing them with results on $n=10000$ voters,  it is observed that increasing the number of voters improves the usefulness performance significantly. The additive mechanism is also practical when there are relatively few voters (e.g., $n=1000$), and  achieve more $80\%$ accuracy when $\epsilon\geq 1.0$.  When the number of voters is $100\ 000$, all mechanisms achieve nearly $100\%$ accuracy of winner even when the privacy budget is low (e.g., $\epsilon<1.0$). 

\begin{figure}
	\centering
	\includegraphics[width=85mm]{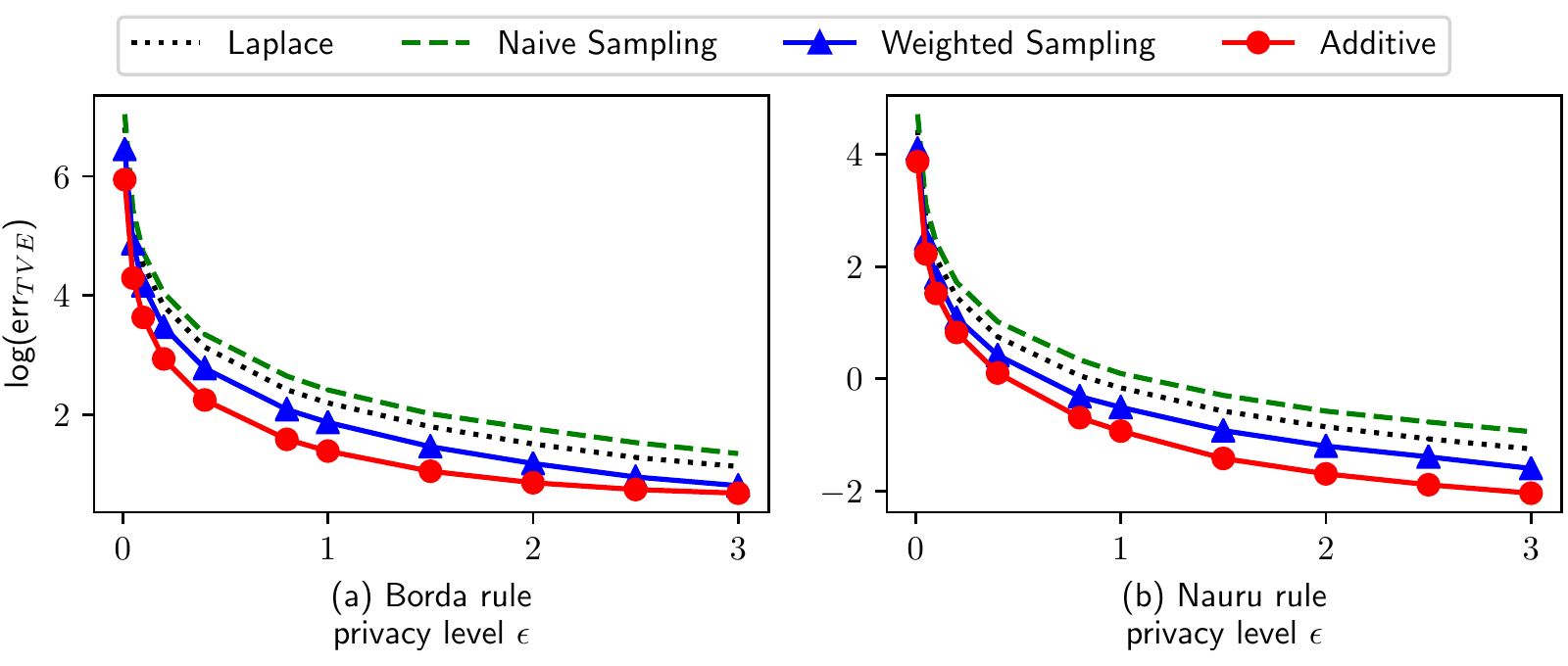}
	\vspace*{-2.7em}
	\caption{Total variation error under Borda and Nauru rules over $8$ candidates with $1000$ voters.}
	\vspace*{-1.0em}
	\label{fig:tve1000}
\end{figure}

\begin{figure}
	\centering
	\includegraphics[width=85mm]{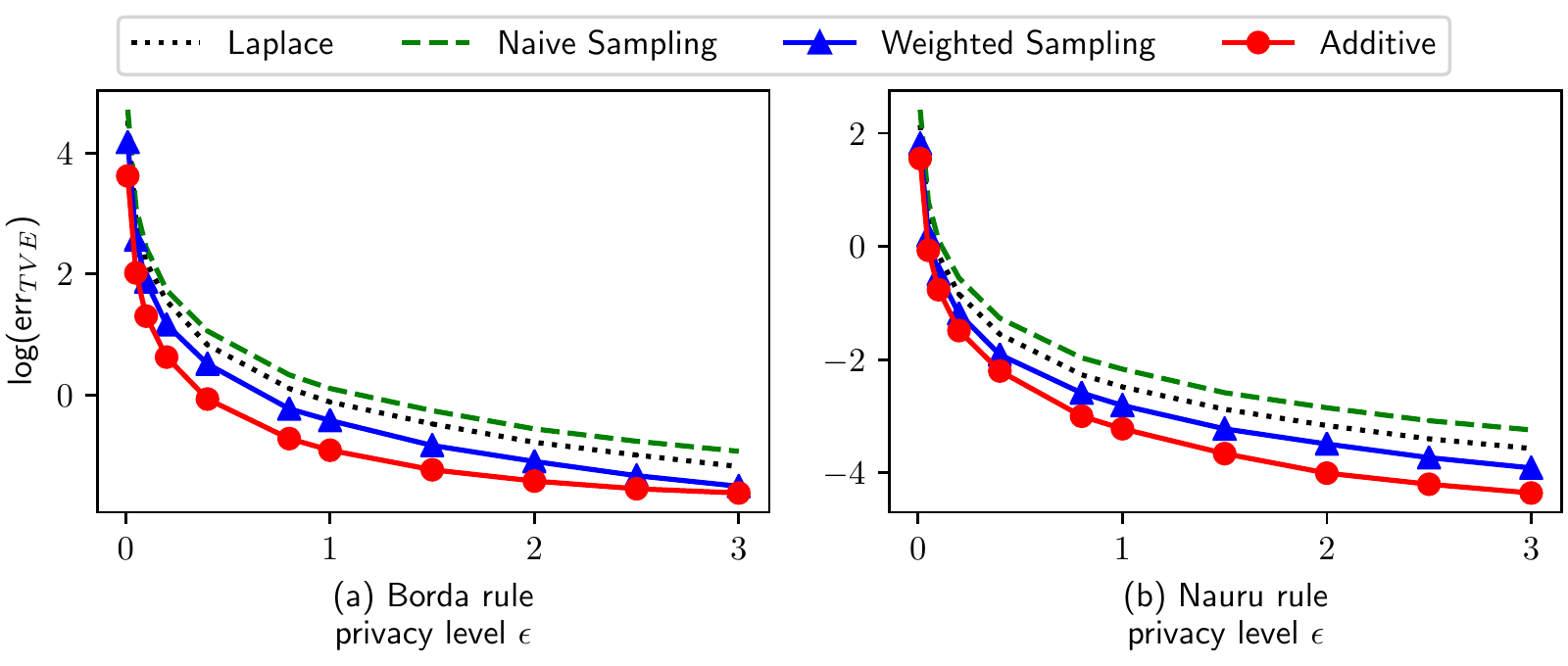}
	\vspace*{-2.7em}
	\caption{Total variation error under Borda and Nauru rules over $8$ candidates with $100000$ voters.}
	\vspace*{-1.1em}
	\label{fig:tve100000}
\end{figure}

\begin{figure}
	\centering
	\includegraphics[width=85mm]{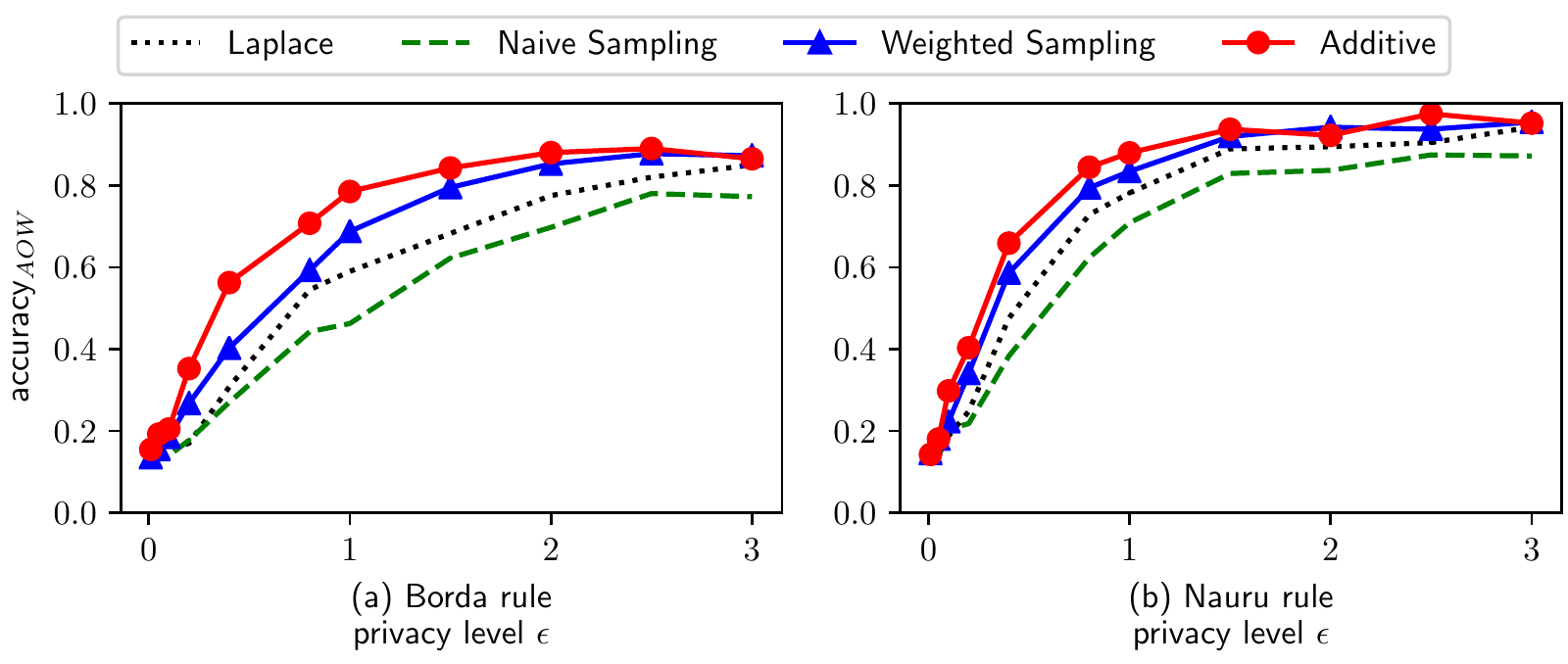}
	\vspace*{-2.7em}
	\caption{Accuracy of winner under Borda and Nauru rules over $8$ candidates with $1000$ voters.}
	\vspace*{-0.7em}
	\label{fig:accuracy1000}
\end{figure}

\begin{figure}
	\centering
	\includegraphics[width=85mm]{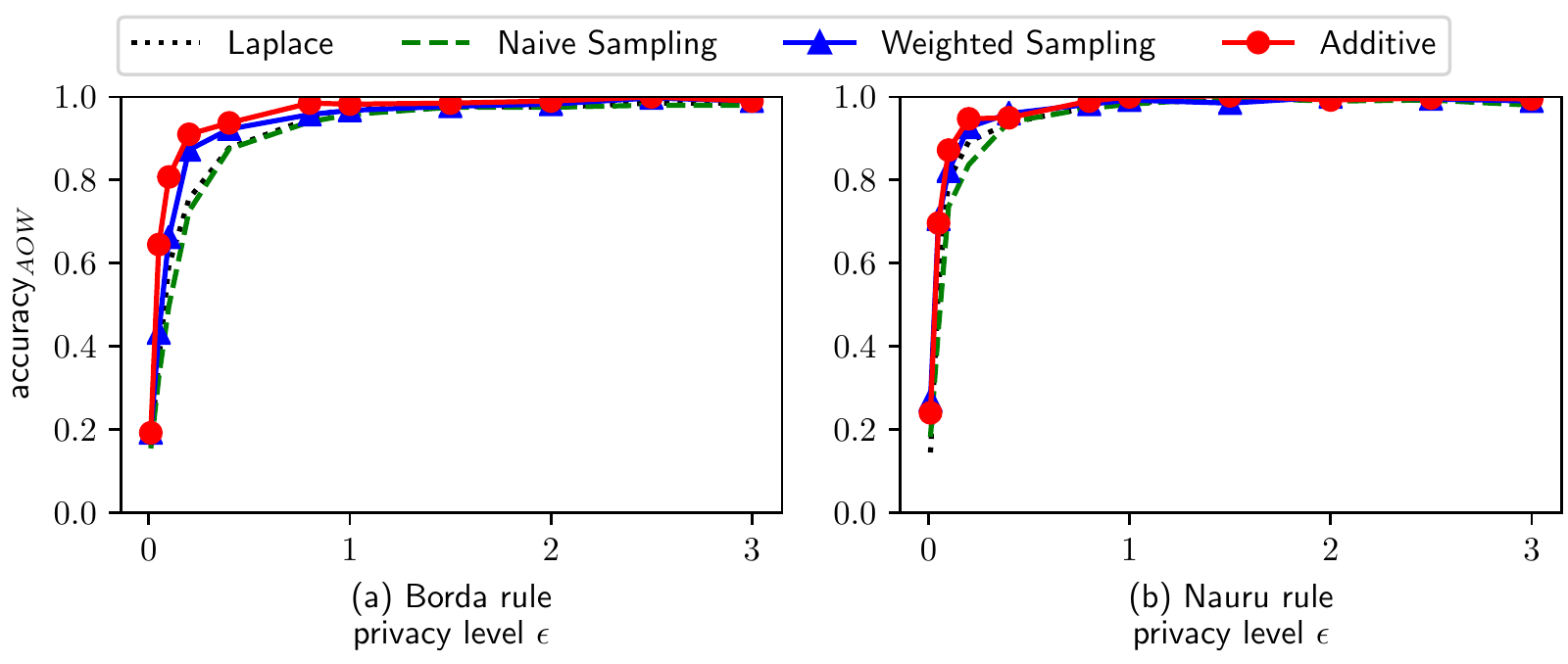}
	\vspace*{-2.7em}
	\caption{Accuracy of winner under Borda and Nauru rules over $8$ candidates with $100000$ voters.}
	\vspace*{-1.0em}
	\label{fig:accuracy100000}
\end{figure}

\subsection{Data Amplification Attack}
Simulated with $d=8$ candidates and $n=10000$ benign voters, the total variation error $\text{err}_{\text{TVE}}$ results under Borda rule with extra $n'=0, 10, 100, 500$ adversarial votes are presented in Figure \ref{fig:tvebordana} (and also in Appendix \ref{result:na}). Results show that less than $1\%$ adversarial votes won't have effective impacts on the voting result, but more than $5\%$ adversarial votes will significantly harm the usefulness of the result. The weighted sampling and the additive mechanisms outperform the Laplace mechanism in all adversarial settings with fraudulent votes.  

\subsection{View Disguise Attack}
Simulated with $d=8$ candidates and $n=10000$ benign voters, the total variation error $\text{err}_{\text{TVE}}$ results under Borda rule with extra $n''=0, 10, 100, 500$ adversarial private views are presented in Figure \ref{fig:tvebordand}  (and also in Appendix \ref{result:nd}). Results show that less than $0.1\%$ adversarial votes won't have effective impacts on the voting result, but more than $1\%$ adversarial votes will significantly decrease the usefulness of the result. The weighted sampling and the additive mechanisms outperform the Laplace mechanism in all adversarial settings with disguised private views. Compared with results under data amplification attacks,  the voting result is more sensitive to view disguise attack.


\section{conclusion}\label{section:conclusion}
Considering adversarial behaviors existing in real-world local private data aggregation systems, this work pays attention to both the usefulness and soundness aspects of privacy preserving mechanisms.  Adversarial behaviors tailed for the local privacy setting are classified into data amplification attack and view disguise attack, which are then quantitatively measured by their manipulation power over the aggregation result. In the context of vote aggregation, two optimized mechanisms: weighted sampling mechanism and additive mechanism, are proposed to improve usefulness and soundness upon the naive Laplace mechanism. Besides theoretical analyses showing a factor of $d$ (or $d^2$) reduction in estimation error bounds and manipulation risk bounds for the Borda voting rule,  their performance improvements are further validated by extensive experiments in both non-adversarial and adversarial scenarios. This work also discusses subtle relations among usefulness, soundness and indistinguishability, and calls for further researches solving dilemmas/conflicts between these fundamental requirements of practical local private data aggregation systems.


%
\bibliographystyle{ACM-Reference-Format}
\clearpage
\bibliography{refs}

%
\clearpage
\section{Appendices}
\subsection{Proof of Lemma \ref{lemma:unbiasedsampling}}\label{proof:unbiasedsampling}
\begin{proof}
We first prove that the numerical vector $\frac{(\sqrt{\exp(\epsilon)}+1)\cdot\tilde{B}-1}{\sqrt{\exp(\epsilon)}-1}$ is an unbiased estimation of the binary vector $B$. Considering an element $\tilde{B}_j$ in $\tilde{B}$, we have:
\begin{equation}\label{eq:unbiasedrandomizer}
\begin{aligned}
&&&\mathbb{E}[\frac{(\sqrt{\exp(\epsilon)}+1)\cdot\tilde{B}_j-1}{\sqrt{\exp(\epsilon)}-1}]&&\\
&&=&\frac{\sqrt{\exp(\epsilon)}}{\sqrt{\exp(\epsilon)}+1}\cdot \frac{(\sqrt{\exp(\epsilon)}+1)\cdot B_j-1}{\sqrt{\exp(\epsilon)}-1}+&\\
&&&+\frac{1}{\sqrt{\exp(\epsilon)}+1}\cdot \frac{-(\sqrt{\exp(\epsilon)}+1)\cdot B_j+\sqrt{\exp(\epsilon)}}{\sqrt{\exp(\epsilon)}-1}&\\
&&=&\frac{(\sqrt{\exp(\epsilon)}-1)\cdot(\sqrt{\exp(\epsilon)}+1)\cdot B_j}{(\sqrt{\exp(\epsilon)}+1)\cdot (\sqrt{\exp(\epsilon)}-1)}&\\
&&=&B_j.&
\end{aligned}
\end{equation}

Next, we want to prove that the vector $B\cdot\frac{\mathbf{w}_{j^*}-c}{\mathbf{m}_j}+c$ is an unbiased estimation of the scored vote $v_j$ defined by $\pi$ as $v_{\pi_j}=\mathbf{w}_j$. Denote a vector $B$ having only the ${j^*}$-th bit being $1$ as $B^{[{j^*}]}$, by the definition of the vector $B$ at line $9$ and $10$, we have:
$$\sum_{{j^*}\in [1,d]} \mathbf{w}_{j^*}\cdot B^{[\pi_{j^*}]}=\sum_{{j^*}\in [1,d]} v_{\pi_{j^*}}\cdot B^{[\pi_{j^*}]}=v.$$
Consequently, we have:
\begin{equation}\label{eq:unbiasedsampling}
\begin{aligned}
&&&\mathbb{E}[B\cdot\frac{\mathbf{w}_{j^*}-c}{\mathbf{m}_{j^*}}+c]&\\
&&=&\sum_{j^*=[1,d]}\mathbf{m}_{j^*}\cdot (B^{[\pi_{j^*}]}\cdot\frac{\mathbf{w}_{j^*}-c}{\mathbf{m}_{j^*}}+c)&\\
&&=&(\sum_{{j^*}\in [1,d]} \mathbf{w}_{j^*}\cdot B^{[\pi_{j^*}]})-(\sum_{{j^*}\in [1,d]}B^{[\pi_{j^*}]}\cdot c)+c&\\
&&=&v-c+c\ \ =v.&\\
\end{aligned}
\end{equation}
Combining unbiasedness results of the randomized response subprocedures in Equation \ref{eq:unbiasedrandomizer} and the weighted sampling sub-procedure in Equation \ref{eq:unbiasedrandomizer}, we conclude that the $\tilde{v}_j$ at line $19$ is an unbiased estimation of $v_j$.     
\end{proof}

\subsection{Proof of Lemma \ref{lemma:msesampling}}\label{proof:msesampling}
\begin{proof}
According to the score estimator's definition in Equation \ref{eq:estimator} and the independence of each $\tilde{v}$ and the unbiasedness of $\tilde{v}$ in Lemma \ref{eq:unbiasedsampling}, we have:
\begin{equation}\label{eq:splitmse}
\begin{aligned}
&&&\mathbb{E}[|\tilde{\theta}-\theta|_2^2]=\mathbb{E}[|\frac{1}{n}\sum_{i\in[1,n]}\tilde{v}^{(i)}-\frac{1}{n}\sum_{i\in[1,n]}v^{(i)}|_2^2]&\\
&&=&\frac{1}{n^2}\sum_{i\in[1,n]}\mathbb{E}[|\tilde{v}^{(i)}-v^{(i)}|_2^2]\ = \frac{1}{n^2}\sum_{j\in[1,d]}\mathbb{E}[(\tilde{v}_{\pi_{j}})^2]-(v_{\pi_{j}})^2.&\\
\end{aligned}
\end{equation}  

Consider the score estimator $\tilde{v}_{\pi_{j}}$ of the $j$-th rank candidate $C_{\pi_{j}}$ in a vote $\pi$, according to sampling strategy, the binary value $B_{\pi_{j}}$ has probability $m_j$ of being $1$ (which happens when $j^*==j$) and has probability $1-m_j$ of being $0$. Further according to the rule of binary randomized response on $B_{\pi_{j}}$, the randomized bit $\hat{B}_{\pi_{j}}$ has probability $\frac{\sqrt{\exp(\epsilon)}}{\sqrt{\exp(\epsilon)}+1}$ of being $1$ when $j^*==j$ and has probability $\frac{1}{\sqrt{\exp(\epsilon)}+1}$ of being $1$ when $j^*\neq j$. Separately considering the random rank $j^*$, we have
\begin{equation*}
\begin{aligned}
&&&\mathbb{E}[(\tilde{v}_{\pi_{j}})^2]&\\
&&=&\sum_{j^*\in[1,d]}m_{j^*}(\frac{[j^*=j]\sqrt{e^\epsilon}}{\sqrt{e^\epsilon}+1}+\frac{[j^*\neq j]}{\sqrt{e^\epsilon}+1}) (\frac{\sqrt{e^\epsilon}}{\sqrt{e^\epsilon}-1}\cdot\frac{\mathbf{w}_{j^*}-c}{\mathbf{m}_{j^*}}+c)^2&\\
&&&+\sum_{j^*\in[1,d]}m_{j^*}(\frac{[j^*=j]}{\sqrt{e^\epsilon}+1}+\frac{[j^*\neq j]\sqrt{e^\epsilon}}{\sqrt{e^\epsilon}+1}) (\frac{-1}{\sqrt{e^\epsilon}-1}\cdot\frac{\mathbf{w}_{j^*}-c}{\mathbf{m}_{j^*}}+c)^2,&
\end{aligned}
\end{equation*}  
summarizing over $j\in[1,d]$, we then have:
\begin{equation}\label{eq:summse}
\begin{aligned}
&&&\sum_{j\in[1,d]}\mathbb{E}[(\tilde{v}_{\pi_{j}})^2]&\\
&&=&\sum_{j^*\in[1,d]}m_{j^*}\frac{\sqrt{e^\epsilon}+d-1}{\sqrt{e^\epsilon}+1} (\frac{\sqrt{e^\epsilon}}{\sqrt{e^\epsilon}-1}\cdot\frac{\mathbf{w}_{j^*}-c}{\mathbf{m}_{j^*}}+c)^2&\\
&&&+\sum_{j^*\in[1,d]}m_{j^*}\frac{1+(d-1)\sqrt{e^\epsilon}}{\sqrt{e^\epsilon}+1} (\frac{-1}{\sqrt{e^\epsilon}-1}\cdot\frac{\mathbf{w}_{j^*}-c}{\mathbf{m}_{j^*}}+c)^2&\\
&&=&\frac{e^\epsilon+d\cdot\sqrt{e^\epsilon}-2\cdot \sqrt{e^\epsilon}+1}{(\sqrt{e^\epsilon}-1)^2}\sum_{j^*\in[1,d]}\frac{(\mathbf{w}_{j^*}-c)^2}{\mathbf{m}_{j^*}}+\sum_{j^*\in[1,d]}\mathbf{w}_{j^*}^2.&\\
\end{aligned}
\end{equation}  

Combining results of Equation \ref{eq:summse} and Equation \ref{eq:splitmse}, we have:
\begin{equation}
\begin{aligned}
&&&\mathbb{E}[|\tilde{\theta}-\theta|_2^2]=\frac{1}{n}(1+\frac{d\sqrt{e^\epsilon}}{(\sqrt{e^\epsilon}-1)^2})\sum_{j\in[1,d]}\frac{(\mathbf{w}_{j}-c)^2}{\mathbf{m}_{j}}.&\\
\end{aligned}
\end{equation}  
\end{proof}

\subsection{Lemma \ref{lemma:riskssampling} on risk bounds of weighted sampling mechanism}\label{proof:riskssampling}

\begin{lemma}\label{lemma:riskssampling}
The manipulation risks of weighted sampling mechanism with sampling masses $\mathbf{m}$ and intercept constant $c$ are:
\small
\begin{equation*}
\begin{aligned}
&&&\text{risk}_{\text{MM}}=\frac{d}{n}\max_{j\in[1,d]}\max[{|\frac{-1}{\sqrt{e^\epsilon}-1}\frac{w_{j}-c}{m_{j}}+c|}, {|\frac{\sqrt{e^\epsilon}}{\sqrt{e^\epsilon}-1}\frac{w_{j}-c}{m_{j}}+c|}];&\\ 
&&&\text{risk}_{\text{EM}}=\frac{\sum_{j\in[1,d]}\mathbf{m_j}[(\sqrt{e^\epsilon}+d-1)t_j+(\sqrt{e^\epsilon}(d-1)+1)f_j]}{n\cdot(\sqrt{e^\epsilon}+1)};&\\
&&&\text{risk}_{\text{DD}}=d(\max_{j^*\in[1,d],\hat{B}_j\in[0,1]} g_{j^*,\hat{B}_j}-\min_{j^*\in[1,d],\hat{B}_j\in[0,1]} g_{j^*,\hat{B}_j}).&\\
\end{aligned}
\end{equation*}
\normalsize
Where $t_j$ denotes $|\frac{\sqrt{e^\epsilon}}{\sqrt{e^\epsilon}-1}\cdot\frac{w_{j}-c}{m_{j}}+c|$, $f_j$ denotes $|\frac{-1}{\sqrt{e^\epsilon}-1}\cdot\frac{w_{j}-c}{m_{j}}+c|$\ and\ $g_{j^*,\hat{B}_j}$ denotes $\frac{[\hat{B}_j=1]\sqrt{e^\epsilon}-[\hat{B}_j=0]}{\sqrt{e^\epsilon}-1}\cdot\frac{w_{j^*}-c}{m_{j^*}}$.
\end{lemma}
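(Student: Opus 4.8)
The plan is to exploit the two-stage structure of Algorithm~\ref{alg:sampling}: conditioned on the randomly chosen rank $j^*$ (drawn with probability $m_{j^*}$ and carrying no dependence on $\pi$), each coordinate of the private view has the closed form $\tilde{v}_j = g_{j^*,\hat{B}_j} + c$, where $\hat{B}_j\in\{0,1\}$ is the randomized-response output bit and $g_{j^*,\hat{B}_j}$ is precisely the quantity defined in the statement. Hence every $\tilde{v}_j$ takes only two possible values: the $t$-type value $\frac{\sqrt{e^\epsilon}}{\sqrt{e^\epsilon}-1}\frac{w_{j^*}-c}{m_{j^*}}+c$ when $\hat{B}_j=1$ and the $f$-type value $\frac{-1}{\sqrt{e^\epsilon}-1}\frac{w_{j^*}-c}{m_{j^*}}+c$ when $\hat{B}_j=0$. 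The decisive structural fact is that all $d$ coordinates of one output share the same rank $j^*$, while their bits flip independently, so each soundness metric reduces to a finite optimization over $j^*$ and the bit pattern.

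For $\text{risk}_{\text{MM}}$ I would maximize $|\tilde{v}|_1/n = \frac{1}{n}\sum_j|\tilde{v}_j|$. With $j^*$ fixed and every bit free, the sum is maximized by picking, coordinatewise, whichever of the two magnitudes $t_{j^*},f_{j^*}$ is larger; since this choice is identical across all coordinates it yields a factor $d$, and a final maximization over $j^*$ gives the stated $\frac{d}{n}\max_{j}\max[f_j,t_j]$. For $\text{risk}_{\text{EM}}$ I would compute $\mathbb{E}[|\tilde{v}|_1]$ by conditioning on $j^*$: exactly one underlying bit started at $1$ (position $\pi_{j^*}$) and the other $d-1$ started at $0$, so under flip probability $\frac{1}{\sqrt{e^\epsilon}+1}$ the expected number of resulting $1$-bits is $\frac{\sqrt{e^\epsilon}+d-1}{\sqrt{e^\epsilon}+1}$ and of $0$-bits is $\frac{\sqrt{e^\epsilon}(d-1)+1}{\sqrt{e^\epsilon}+1}$. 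Weighting these counts by the conditional magnitudes $t_{j^*},f_{j^*}$, averaging over $j^*$ with weights $m_{j^*}$, and dividing by $n$ reproduces the claimed expression.

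For $\text{risk}_{\text{DD}}$ the essential new feature is that the two views $\tilde{v},\tilde{v}'$ need not share a rank, so one may use $j^*$ and the other $j'^*$. Writing $\tilde{v}_j-\tilde{v}'_j = g_{j^*,\hat{B}_j}-g_{j'^*,\hat{B}'_j}$ (the intercepts $c$ cancel), each coordinate difference is maximized independently over all four index choices, and the largest absolute difference within the value set $\{g_{j^*,\hat{B}}\}$ equals its range $\max g - \min g$. As the maximizing configuration is the same in every coordinate, the $\ell_1$ diameter is exactly $d(\max_{j^*,\hat{B}}g_{j^*,\hat{B}}-\min_{j^*,\hat{B}}g_{j^*,\hat{B}})$, matching the statement.

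The step demanding most care is correctly tracking the coupling across coordinates: for $\text{risk}_{\text{MM}}$ and $\text{risk}_{\text{EM}}$ all coordinates are tied to a single $j^*$ so the per-coordinate freedom is only over bits, whereas for $\text{risk}_{\text{DD}}$ the two outputs may select distinct ranks, which is exactly what enlarges the diameter to the full range of $g$ rather than a within-rank spread. I would also verify realizability of the extremal configurations — since randomized response assigns positive probability to all $2^d$ bit patterns for every $j^*$, each claimed supremum is attained — so that the $\max$ and $\min$ in the metrics are genuine rather than merely suprema.
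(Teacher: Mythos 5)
Your proposal is correct and follows essentially the same route as the paper's proof: conditioning on the shared rank $j^*$, reducing each metric to a finite optimization (or expectation) over the bit pattern of the randomized response, and for $\text{risk}_{\text{DD}}$ allowing the two views to carry distinct ranks so the per-coordinate spread becomes the full range of $g$. Your added remark that the extremal bit patterns occur with positive probability (so the maxima are attained) is a small refinement the paper leaves implicit.
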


\begin{proof}
The proof contains three parts, each part deals with one of the risks in the theorem.

\noindent \textbf{Part $1$} on $\text{risk}_{MM}$:\ \ Recall that the maximum is taken over all possible $j^*\in[1,d]$ and $\hat{B}\in[0,1]^d$. For a given rank $j^*$, apparently the maximum is achieved when $\hat{B}$ is either $[0]^d$ or $[1]^d$, hence we have the result. Additionally, when the intercept value $c$ is no less than $0$, the results can be trimmed to $\frac{d}{n}\max_{j\in[1,d]}{|\frac{\sqrt{e^\epsilon}}{\sqrt{e^\epsilon}-1}\frac{w_{j}-c}{m_{j}}+c|}$. 

\noindent \textbf{Part $2$} on $\text{risk}_{EM}$:\ \ 
Consider the conditional expection $\mathbb{E}[\frac{|\tilde{v}|_1}{n}|j*]$ given rank $j^*$, by the randomization subprocedure of binary randomized response, the randomized vector $\hat{B}=[0,1]^d$ expectedly has $\frac{\sqrt{e^\epsilon}+d-1}{\sqrt{e^\epsilon}+1}$ ones and $\frac{\sqrt{e^\epsilon}(d-1)+1}{\sqrt{e^\epsilon}+1}$ zeros. When $\hat{B}_j$ is $1$, one element $|\tilde{v}_j|$ is $|\frac{\sqrt{e^\epsilon}}{\sqrt{e^\epsilon}-1}\cdot\frac{w_{j^*}-c}{m_{j^*}}+c|$; when $\hat{B}_j$ is $0$, one element $|\tilde{v}_j|$ is $|\frac{-1}{\sqrt{e^\epsilon}-1}\cdot\frac{w_{j}-c}{m_{j}}+c|$. Consequently we have the result.

\noindent \textbf{Part $3$} on $\text{risk}_{DD}$:\ \ Consider one element $j$, we have $\max |\tilde{v}_j-\tilde{v}'_j|_1$ as follows:
\small
$$\max |\frac{[\hat{B}_j=1]\sqrt{e^\epsilon}-[\hat{B}_j=0]}{\sqrt{e^\epsilon}-1}\cdot\frac{w_{j^*}-c}{m_{j^*}}-\frac{[\hat{B}'_j=1]\sqrt{e^\epsilon}-[\hat{B}'_j=0]}{\sqrt{e^\epsilon}-1}\cdot\frac{w_{j^+}-c}{m_{j^+}}|,$$
\normalsize
for any $j^*,j^+\in[1,d],\hat{B}_j,\hat{B}'_j\in[0,1]$. Due to symmetry, the former formula is equivalent to:
$\max_{j^*\in[1,d],\hat{B}_j\in[0,1]} \frac{[\hat{B}_j=1]\sqrt{e^\epsilon}-[\hat{B}_j=0]}{\sqrt{e^\epsilon}-1}\cdot\frac{w_{j^*}-c}{m_{j^*}}-\min_{j^*\in[1,d],\hat{B}_j\in[0,1]} \frac{[\hat{B}_j=1]\sqrt{e^\epsilon}-[\hat{B}_j=0]}{\sqrt{e^\epsilon}-1}\cdot\frac{w_{j^*}-c}{m_{j^*}}$, hence we have the result.
\end{proof}

\subsection{Proof of Theorem \ref{theorem:risksoptsampling}}\label{proof:risksoptsampling}
\begin{proof}
The proof contains three parts, each part deals with one of the risks in the theorem.

\noindent \textbf{Part $1$} on $\text{risk}_{MM}$:\ \ With the given sampling masses $\mathbf{m}$ and interception value $c$, we can derive that $\frac{w_{j}-c}{m_{j}}$ is either $\Omega_{\mathbf{w}}$, $-\Omega_{\mathbf{w}}$ or $0$, then we have the $\text{risk}_{MM}$ as follows:
\small
$$\frac{d}{n}\max[{|\frac{-\sqrt{e^\epsilon}\Omega_{\mathbf{w}}}{\sqrt{e^\epsilon}-1}+c|}, {|\frac{\sqrt{e^\epsilon}\Omega_{\mathbf{w}}}{\sqrt{e^\epsilon}-1}+c|}, {|\frac{\Omega_{\mathbf{w}}}{\sqrt{e^\epsilon}-1}+c|, {|\frac{-\Omega_{\mathbf{w}}}{\sqrt{e^\epsilon}-1}+c|}}].$$
\normalsize 
Since $\sqrt{e^\epsilon}>1$, hence $\max[{|\frac{-\sqrt{e^\epsilon}\Omega_{\mathbf{w}}}{\sqrt{e^\epsilon}-1}+c|}, {|\frac{\sqrt{e^\epsilon}\Omega_{\mathbf{w}}}{\sqrt{e^\epsilon}-1}+c|}]$ is no less than $\max[{|\frac{\Omega_{\mathbf{w}}}{\sqrt{e^\epsilon}-1}+c|, {|\frac{-\Omega_{\mathbf{w}}}{\sqrt{e^\epsilon}-1}+c|}}]$, consequently we have the final results.

\noindent \textbf{Part $2$} on $\text{risk}_{EM}$:\ \ 
When $w_j>c$, the $|\frac{\sqrt{e^\epsilon}}{\sqrt{e^\epsilon}-1}\cdot\frac{w_{j}-c}{m_{j}}+c|$ equals to $t'_+$, and the  $|\frac{-1}{\sqrt{e^\epsilon}-1}\cdot\frac{w_{j}-c}{m_{j}}+c|$ equals to the $f'_+$;  When $w_j<c$, the $|\frac{\sqrt{e^\epsilon}}{\sqrt{e^\epsilon}-1}\cdot\frac{w_{j}-c}{m_{j}}+c|$ equals to $t'_-$, and the $|\frac{-1}{\sqrt{e^\epsilon}-1}\cdot\frac{w_{j}-c}{m_{j}}+c|$ equals to $f'_-$. Consequently we have the final results.

\noindent \textbf{Part $3$} on $\text{risk}_{DD}$:\ \ Recall that $g_{j^*,\hat{B}_j}=\frac{[\hat{B}_j=1]\sqrt{e^\epsilon}-[\hat{B}_j=0]}{\sqrt{e^\epsilon}-1}\cdot\frac{w_{j^*}-c}{m_{j^*}}$, where   $\frac{w_{j}-c}{m_{j}}$ is either $\Omega_{\mathbf{w}}$, $-\Omega_{\mathbf{w}}$ or $0$. When $\Omega_{\mathbf{w}}>0$, the value of $\frac{w_{j}-c}{m_{j}}$ enumerates $[\Omega_{\mathbf{w}},-\Omega_{\mathbf{w}}]$, further because $\sqrt{e^\epsilon}>1$, we have:
$$\max_{j^*\in[1,d],\hat{B}_j\in[0,1]} g_{j^*,\hat{B}_j}=\frac{\sqrt{e^\epsilon}}{\sqrt{e^\epsilon}-1}\Omega_{\mathbf{w}};$$
$$\min_{j^*\in[1,d],\hat{B}_j\in[0,1]} g_{j^*,\hat{B}_j}=-\frac{\sqrt{e^\epsilon}}{\sqrt{e^\epsilon}-1}\Omega_{\mathbf{w}}.$$
Consequently we have the final results.
\end{proof}

\subsection{Proof of Theorem \ref{theorem:ldpadditive}}\label{proof:ldpadditive}
\begin{proof}
Since $\tilde{v}$ is mapped from $S$, to prove the private view $\tilde{v}$ satisfies $\epsilon$-LDP, it's enough to show that the intermediate view $S$ satisfies $\epsilon$-LDP. 

Firstly we need to prove $\text{Pr}[S|v]$ is a valid probability distribution, that is $\text{Pr}[S|v]\geq 0.0$ and $\sum_{S\in\mathcal{C}^k}\text{Pr}[S|v]=1.0$ hold for any input $v\mathcal{D}_{v}$. Since $\sum_{C_{j'}\in S}v_{j'}\geq \mathbf{w}_{min}^k$, we have $\Phi>0$ and hence $\text{Pr}[S|v]\geq 0.0$. Now consider $\sum_{S\in\mathcal{C}^k}\text{Pr}[S|v]$, we have:
\begin{equation*}
\begin{aligned}
&&&\ \ \frac{{d \choose k}}{\Phi}+\sum_{S\in\mathcal{C}^k}\sum_{C_{j'}\in S}\frac{v_{j'}-\mathbf{w}_{min}^k}{\mathbf{w}_{max}^k-\mathbf{w}_{min}^k}\cdot \frac{\exp(\epsilon)-1}{\Phi}&\\
&&&=\frac{{d \choose k}}{\Phi}+{d-1 \choose k-1}\sum_{C_{j'}\in\mathcal{C}}\frac{v_{j'}-\mathbf{w}_{min}^k}{\mathbf{w}_{max}^k-\mathbf{w}_{min}^k}\cdot \frac{\exp(\epsilon)-1}{\Phi}&\\
&&&=\frac{{d \choose k}}{\Phi}+{d-1 \choose k-1}\frac{\sum_{j\in[1,d]}w_{j}-\mathbf{w}_{min}^k}{\mathbf{w}_{max}^k-\mathbf{w}_{min}^k}\cdot \frac{\exp(\epsilon)-1}{\Phi}&\\
&&&=\frac{1}{{\Phi}}{d\choose k}\cdot(\frac{\frac{k}{d}\sum_{j\in[1,d]}\mathbf{w}_j-\mathbf{w}_{min}^k}{\mathbf{w}_{max}^k-\mathbf{w}_{min}^k}\cdot (e^\epsilon-1)+1)\ =\ 1.&\\
\end{aligned}
\end{equation*}

Secondly for any paired inputs $v,v'\in \mathcal{D}_{v}$ and any output values $S\in \mathcal{C}^k$, we have:
\begin{equation*}
\begin{aligned}
&&&\ \ \frac{\text{Pr}[S|v]}{\text{Pr}[S|v']}\leq \frac{\max_{S\in \mathcal{C}^k} \text{Pr}[S|v]}{\min_{S\in \mathcal{C}^k} \text{Pr}[S|v']}&\\
&&&\leq (\frac{\mathbf{w}_{max}^k-\mathbf{w}_{min}^k}{\mathbf{w}_{max}^k-\mathbf{w}_{min}^k}\cdot \frac{e^\epsilon-1}{\Phi}+\frac{1}{\Phi})/(\frac{\mathbf{w}_{min}^k-\mathbf{w}_{min}^k}{\mathbf{w}_{max}^k-\mathbf{w}_{min}^k}\cdot \frac{e^\epsilon-1}{\Phi}+\frac{1}{\Phi})&\\
&&&\leq \exp(\epsilon).
\end{aligned}
\end{equation*}
\end{proof}

\subsection{Algorithm of additive mechanism}\label{appalg:additive}
The algorithm implementation \ref{alg:additive} for additive mechanism (in Definition \ref{def:additive}) contains a core procedure: $additive\_select,$ which recursively select a top ranking position  $j^*$ from remaining positions $[1,d]$ (see Algorithm \ref{alg:additiveselect}). The relative weights $z_j$ in the algorithm are proportional to the probability that the candidate $C_j$ will show in the private view $S$.
\begin{algorithm}
    \renewcommand\baselinestretch{1.0}\selectfont
    \caption{Additive mechanism}
    \label{alg:additive}
    \begin{algorithmic}[1]
        \Require A vote $\pi$, privacy budget $\epsilon$,voting rule's score vector $\mathbf{w}$ and parameter $k$.
        \Ensure An unbiased private view $\tilde{v}\in \mathbb{R}^d$ that satisfies $\epsilon$-LDP.
        \State{$\rhd$ Select $k$ ranking positions}
		\For{$j\in[1,d]$}
			\State{$\rhd$ Compute weights of presence for a ranking position}
			\State{$z_j \gets \frac{w_j-w^k_{min}/k}{w^k_{max}-w^k_{min}}\cdot (e^\epsilon-1)+\frac{1}{k}$}
		\EndFor        
        \State{$T \gets additive\_select(d, k, \mathbf{z})$}
        \State{$\rhd$ Deriving unbiased estimator}
        \State{$S \gets \{\pi_j\ |\  j\in T\}$}
        \For{$j \in [1,d]$}
        	\State{$\tilde{v}_j \gets [C_j\in S]\cdot a_k-b_k$}
        \EndFor
        \\\Return{$\tilde{v}=\{\tilde{v}_1,\tilde{v}_2,...,\tilde{v}_d\}$}
    \end{algorithmic}
\end{algorithm}

\begin{algorithm}[H]
    \renewcommand\baselinestretch{1.0}\selectfont
    \caption{additive\_select$(d, k, \mathbf{w})$}
    \label{alg:additiveselect}
    \begin{algorithmic}[1]
    	\Require The number of positions $d$, parameter $k$, and positions' weights $\mathbf{z}$.
        \Ensure $k$ ranking positions $T\subseteq [1,d]$.
        \State{$\rhd$ Compute probabilities of $p_j=\text{Pr}[min(T)=j]$}
        \For{$j \in [1,d-k+1]$}
        	\State{$p_j \gets {d-j \choose k-1}\cdot(z_j+(\sum_{j'\in[j+1,d]}z_{j'}-z_j)\frac{k-1}{d-j})$}
        \EndFor
        \State{$\rhd$ Select a minimum rank $j^*$}
        \State{$j^*\gets 0$}
        \While{$r \geq 0.0$}
            \State{$j^*\gets j^*+1$}
            \State{$r\gets r-\frac{p_{j^*}}{\sum_{j\in[1,d]}p_{j^*}}$}
        \EndWhile
        \For{$j\in [j^*+1,d]$}
        	\State{$z'_{j-j^*}=w_j+\frac{z_{j^*}}{k-1}$}
        \EndFor
         \State{$\rhd$ Recursively select $k-1$ ranking positions}
        \State{$T' = additive\_select(d-j^*, k-1, \mathbf{z}')$}
        \\\Return{$T=\{j^*\}\cup \{j+j^*\ |\ j\in T'\}$}
        \end{algorithmic}
\end{algorithm}

\subsection{Additional experimental results of varying number of voters}\label{result:voters}
Results of maximum absolute error and loss of winner error under the Borda rule with $n=1000$ votes are demonstrated in Figures \ref{fig:mae1000} and \ref{fig:low1000} respectively. The maximum absolute error and loss of winner error results under the Borda rule with $n=100000$ votes are demonstrated in Figures \ref{fig:mae100000} and \ref{fig:low100000} respectively.

\begin{figure}[h]
	\centering
	\includegraphics[width=85mm]{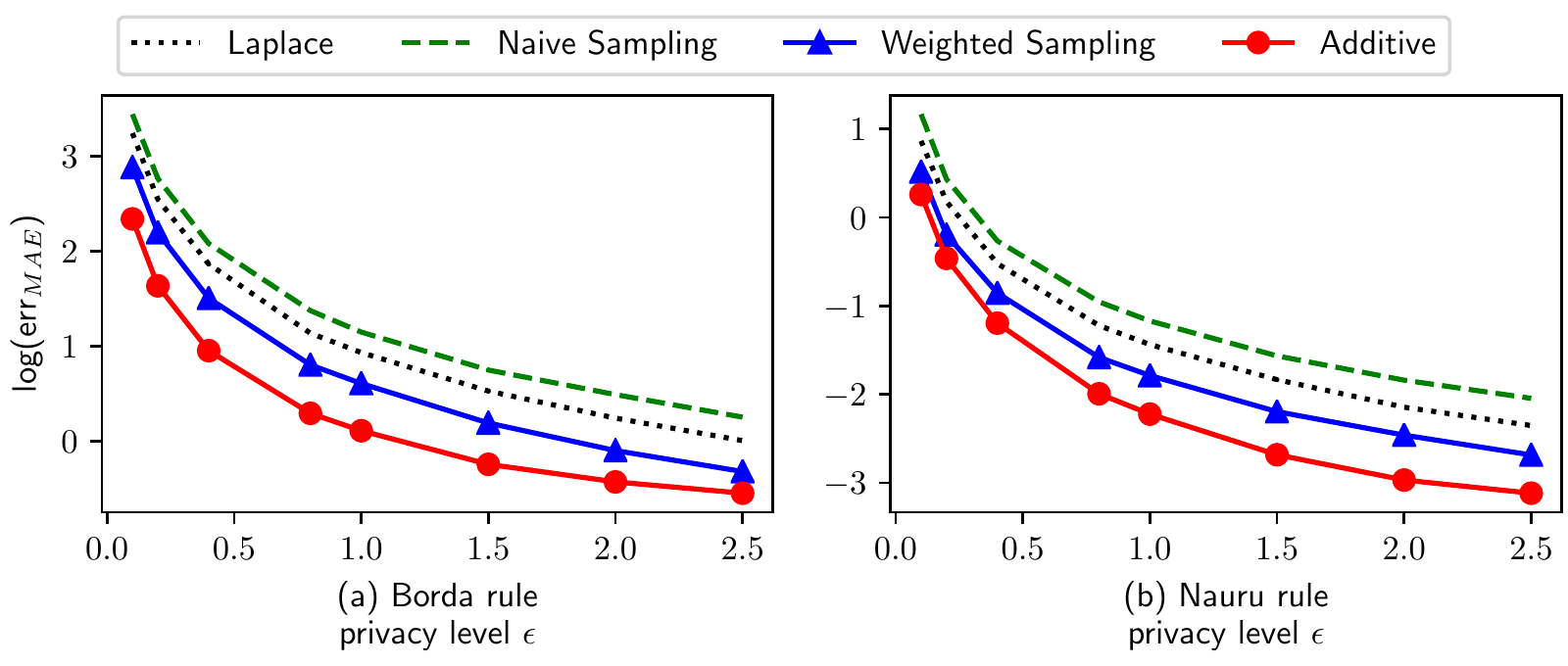}
	\vspace*{-2.0em}
	\caption{Maximum absolute error under Borda and Nauru rules over $8$ candidates with $1000$ voters.}
	\vspace*{-0.5em}
	\label{fig:mae1000}
\end{figure}

\begin{figure}[h]
	\centering
	\includegraphics[width=85mm]{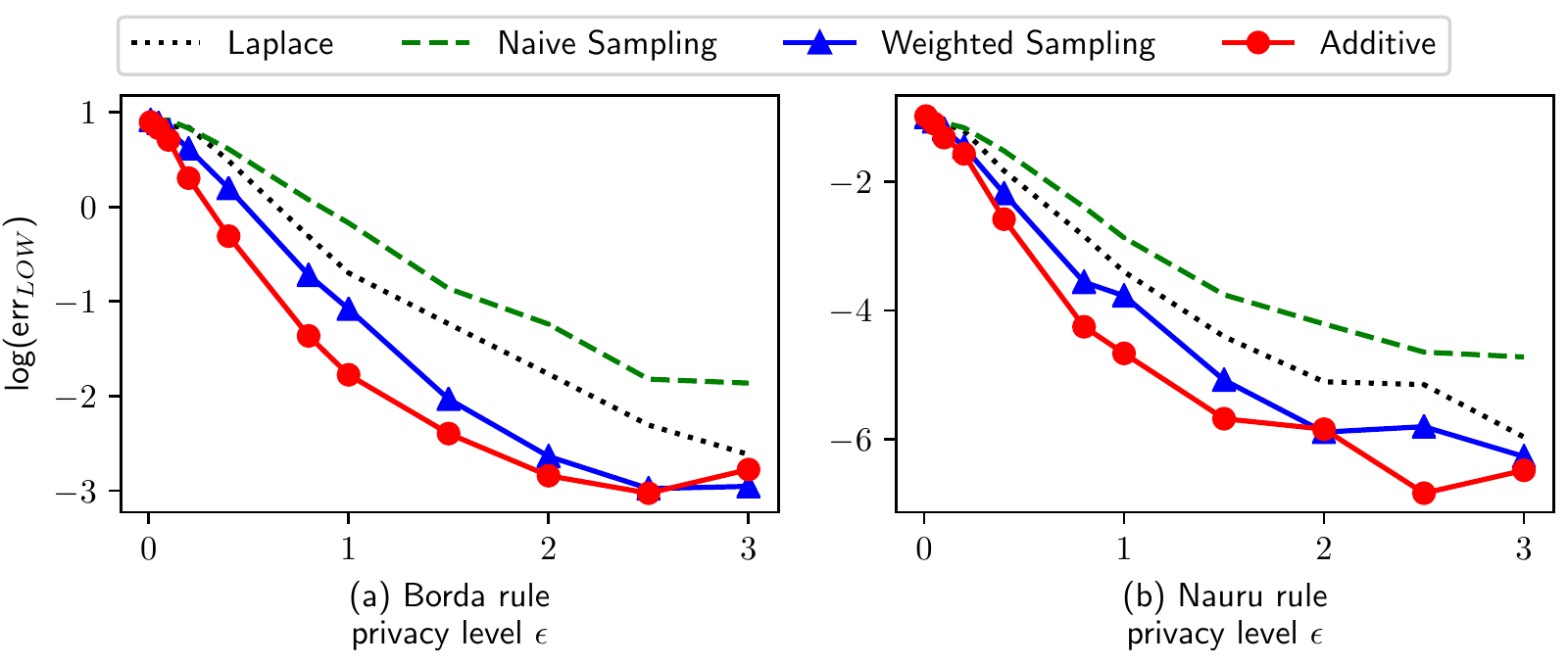}
	\vspace*{-2.0em}
	\caption{Loss of winner error under Borda and Nauru rules over $8$ candidates with $1000$ voters.}
	\vspace*{-0.5em}
	\label{fig:low1000}
\end{figure}

\begin{figure}[h]
	\centering
	\includegraphics[width=85mm]{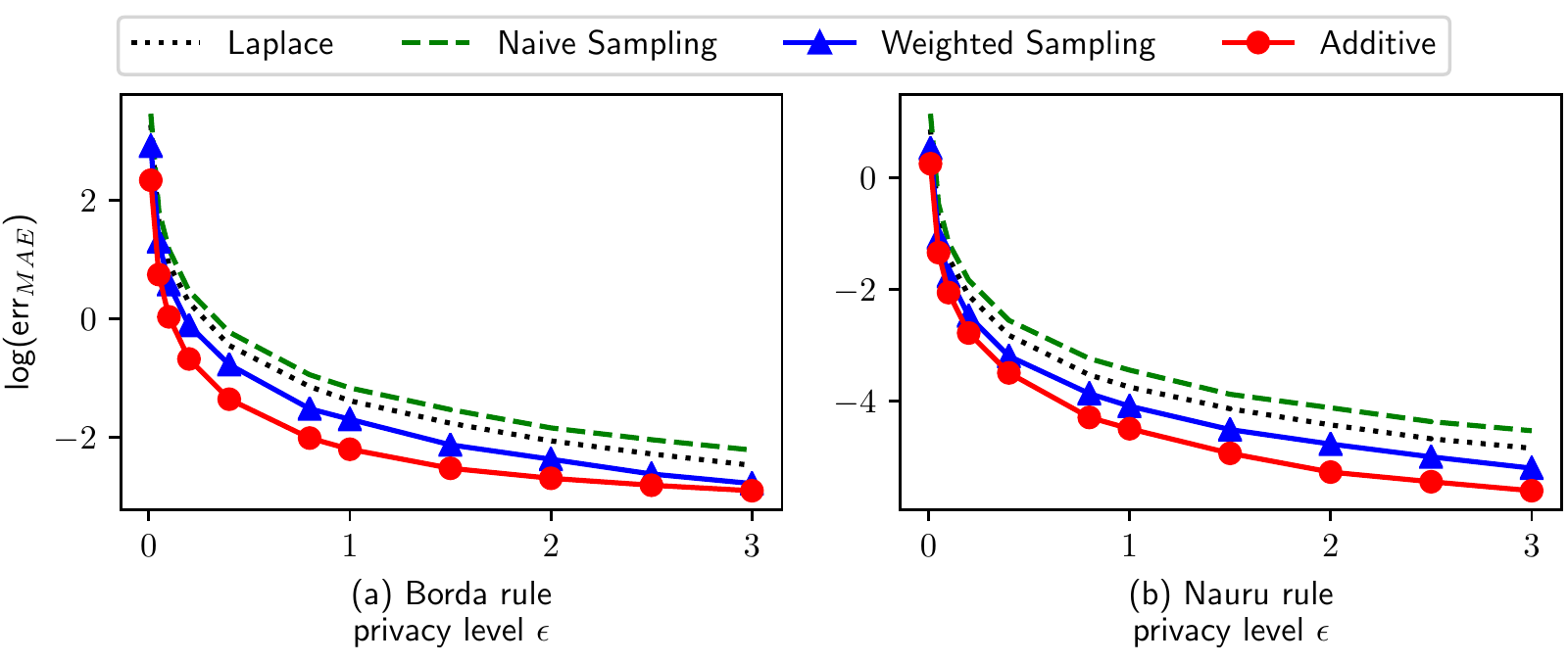}
	\vspace*{-2.0em}
	\caption{Maximum absolute error under Borda and Nauru rules over $8$ candidates with $100000$ voters.}
	\vspace*{-0.5em}
	\label{fig:mae100000}
\end{figure}

\begin{figure}[h]
	\centering
	\includegraphics[width=85mm]{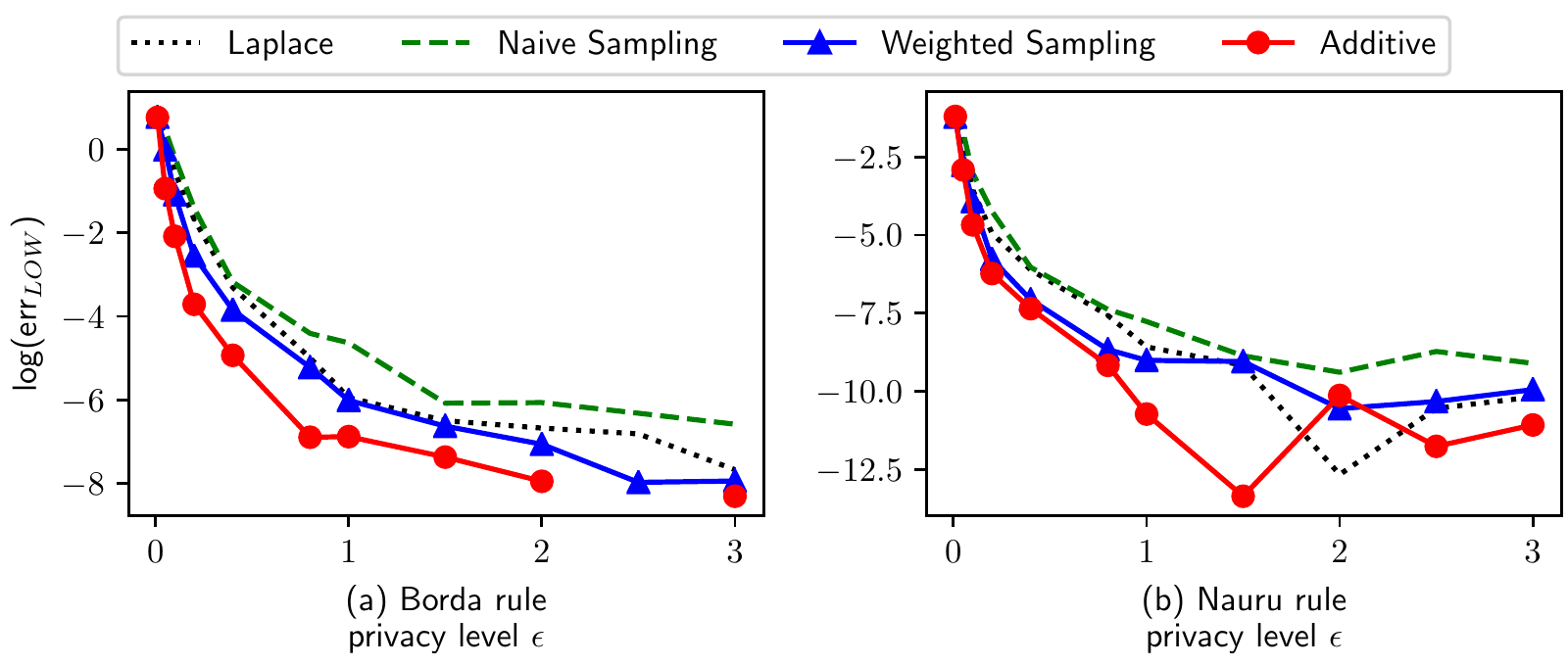}
	\vspace*{-2.0em}
	\caption{Loss of winner error under Borda and Nauru rules over $8$ candidates with $100000$ voters.}
	\vspace*{-0.5em}
	\label{fig:low100000}
\end{figure}


\subsection{Additional experimental results of varying number of candidates}\label{result:candidates}
Results of maximum absolute error and loss of winner error under the Borda rule with varying number of candidates are demonstrated in Figures \ref{fig:mae481632borda} and \ref{fig:low481632borda} respectively. The total variation error, maximum absolute error, accuracy of winner and loss of winner error results under the Nauru rule with varying number of candidates are demonstrated in Figures \ref{fig:tve481632nauru}, \ref{fig:mae481632nauru}, \ref{fig:accuracy481632nauru} and \ref{fig:low481632nauru} respectively.

\begin{figure*}
	\centering
	\includegraphics[width=175mm]{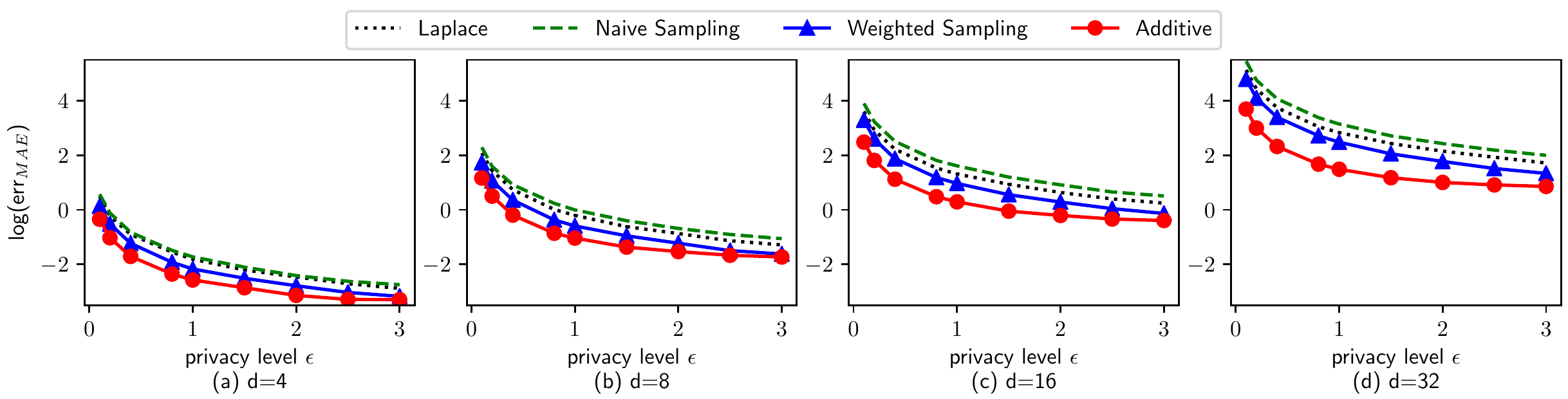}
	\vspace*{-1.5em}
	\caption{Maximum absolute error under Borda rule over $4, 8, 16, 32$ candidates with $10000$ voters.}
	\vspace*{0.5em}
	\label{fig:mae481632borda}
\end{figure*}

\begin{figure*}
	\centering
	\includegraphics[width=175mm]{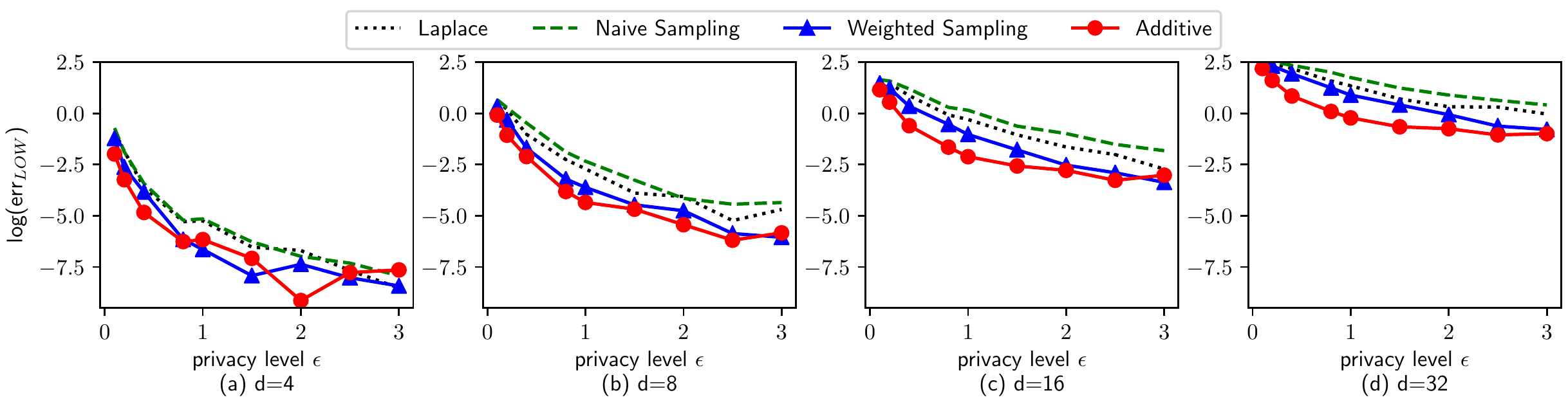}
	\vspace*{-1.5em}
	\caption{Loss of winner error under Borda rule over $4, 8, 16, 32$ candidates with $10000$ voters.}
	\vspace*{0.5em}
	\label{fig:low481632borda}
\end{figure*}

\begin{figure*}
	\centering
	\includegraphics[width=170mm]{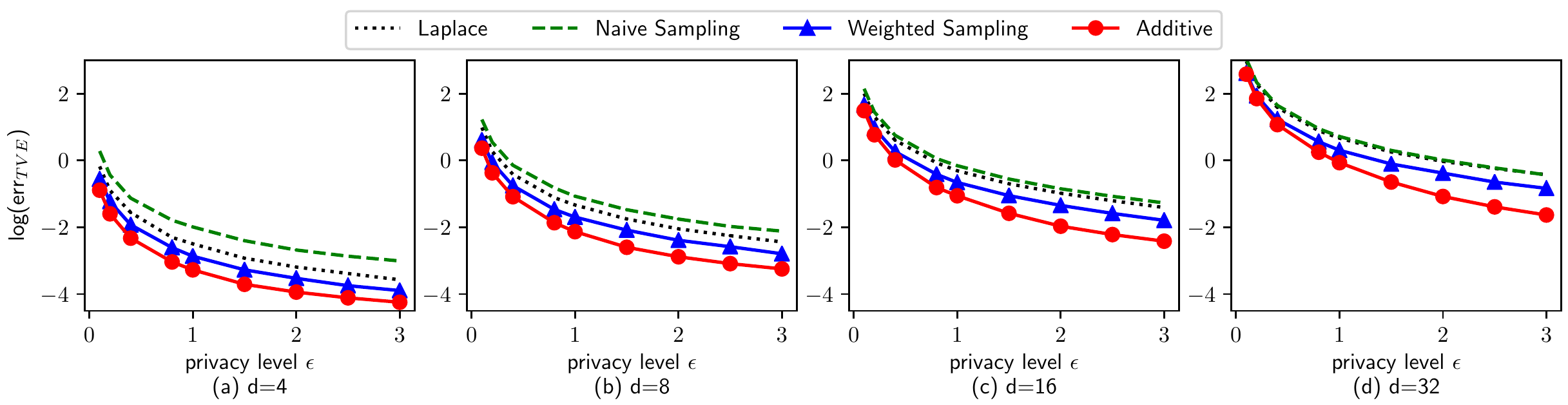}
	\vspace*{-1.5em}
	\caption{Total variation error under Nauru rule over $4, 8, 16, 32$ candidates with $10000$ voters.}
	\vspace*{-0.5em}
	\label{fig:tve481632nauru}
\end{figure*}

\begin{figure*}
	\centering
	\includegraphics[width=175mm]{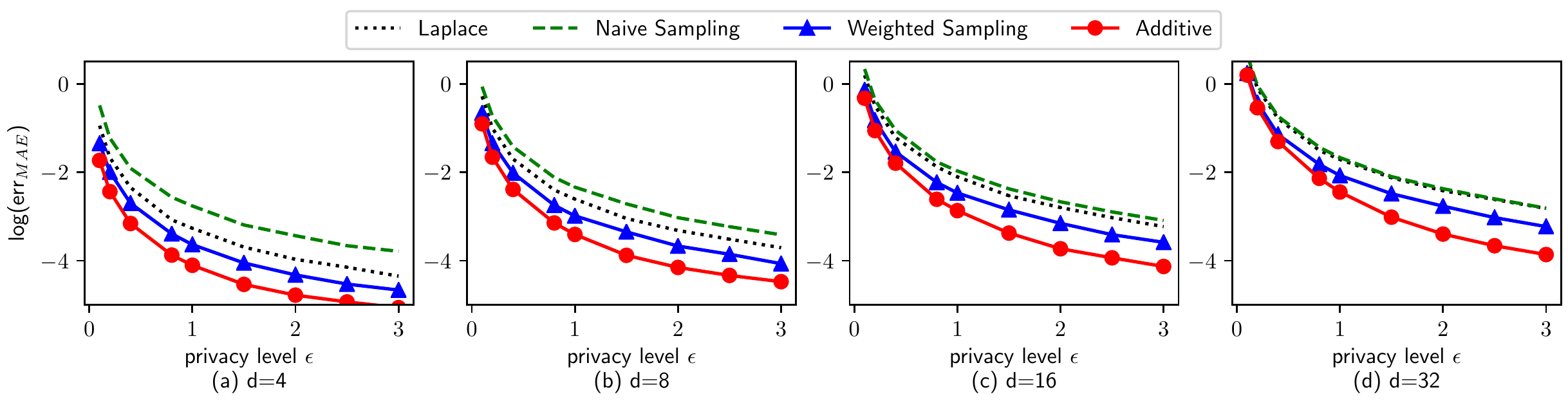}
	\vspace*{-1.5em}
	\caption{Maximum absolute error under Nauru rule over $4, 8, 16, 32$ candidates with $10000$ voters.}
	\vspace*{0.5em}
	\label{fig:mae481632nauru}
\end{figure*}

\begin{figure*}
	\centering
	\includegraphics[width=170mm]{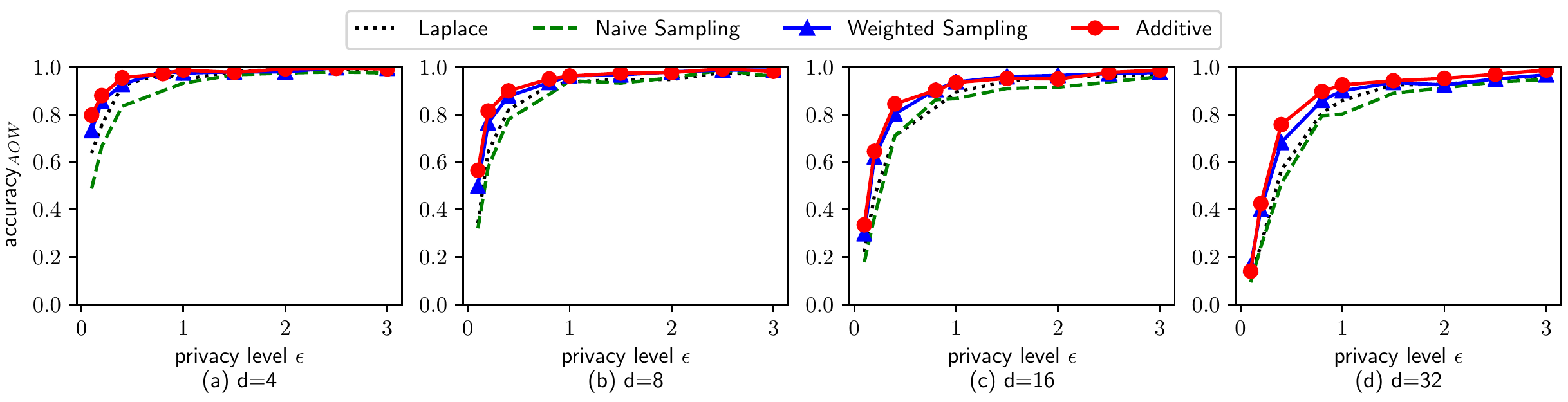}
	\vspace*{-1.5em}
	\caption{Accuracy of winner under Nauru rule over $4, 8, 16, 32$ candidates with $10000$ voters.}
	\vspace*{-0.5em}
	\label{fig:accuracy481632nauru}
\end{figure*}

\begin{figure*}
	\centering
	\includegraphics[width=175mm]{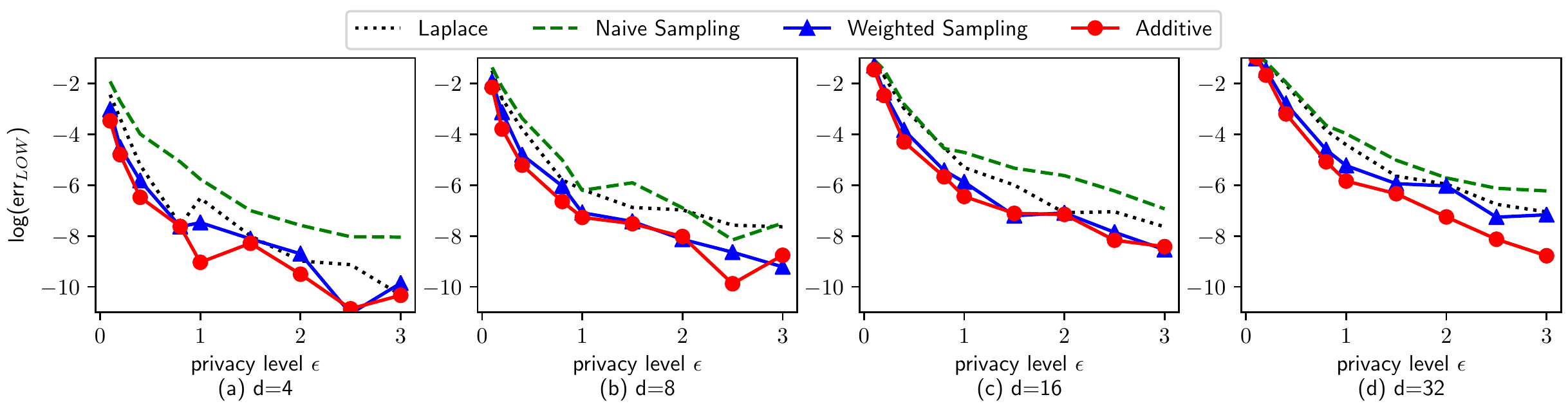}
	\vspace*{-1.5em}
	\caption{Loss of winner error under Nauru rule over $4, 8, 16, 32$ candidates with $10000$ voters.}
	\vspace*{0.5em}
	\label{fig:low481632nauru}
\end{figure*}

\subsection{Additional experimental results of data amplification attacks}\label{result:na}
Results of maximum absolute error, accuracy of winner and loss of winner error under the Nauru rule with varying number of adversarial votes are demonstrated in Figures  \ref{fig:maebordana}, \ref{fig:accuracybordana} and \ref{fig:lowbordana} respectively.
The total variation error, maximum absolute error, accuracy of winner and loss of winner error results under the Nauru rule with varying number of adversarial votes are demonstrated in Figures \ref{fig:tvenauruna}, \ref{fig:maenauruna}, \ref{fig:accuracynauruna} and \ref{fig:lownauruna} respectively.
\begin{figure*}[h]
	\centering
	\includegraphics[width=170mm]{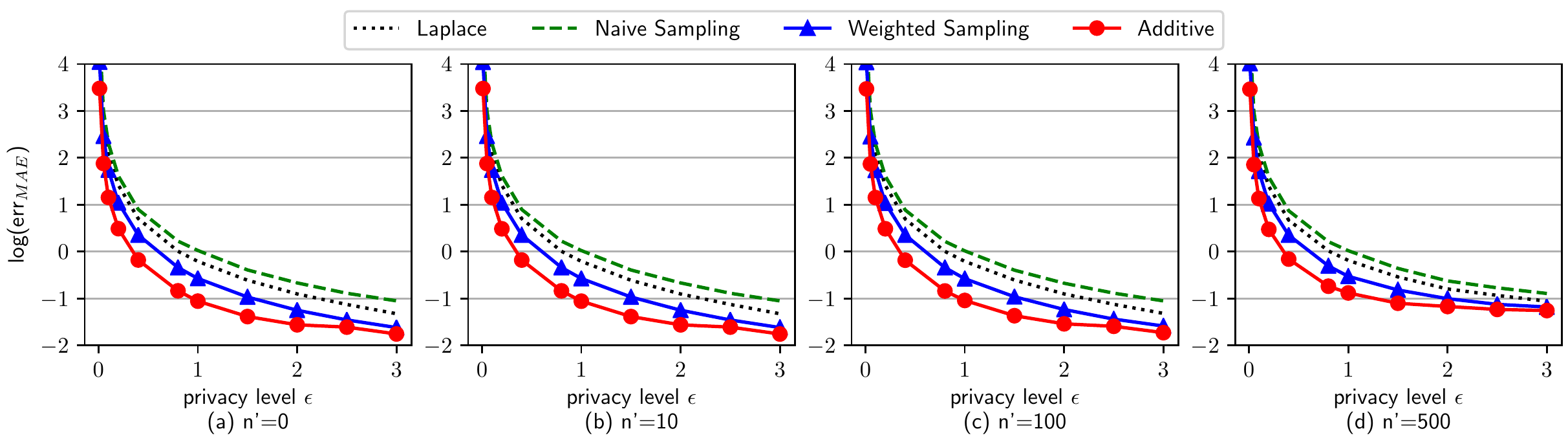}
	\vspace*{-1.5em}
	\caption{Maximum absolute error under Borda rule  with $10000$ honest voters and $n'=0, 10, 100, 500$ adversarial votes}
	\vspace*{-0.5em}
	\label{fig:maebordana}
\end{figure*}

\begin{figure*}[h]
	\centering
	\includegraphics[width=170mm]{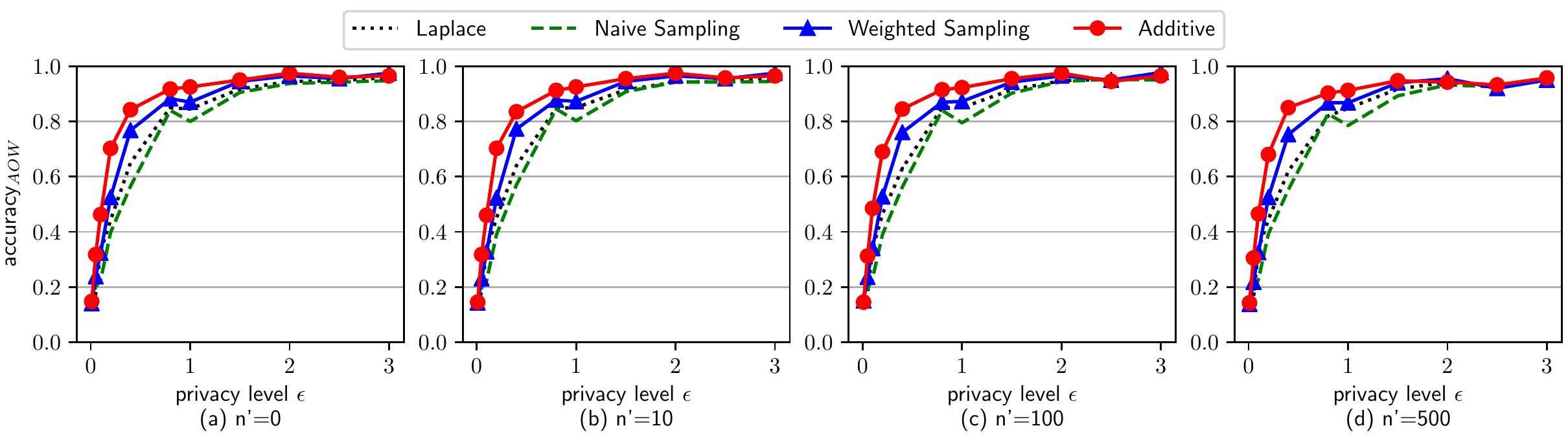}
	\vspace*{-1.5em}
	\caption{Accuracy of winner under Borda rule  with $10000$ honest voters and $n'=0, 10, 100, 500$ adversarial votes.}
	\vspace*{-0.5em}
	\label{fig:accuracybordana}
\end{figure*}

\begin{figure*}[h]
	\centering
	\includegraphics[width=170mm]{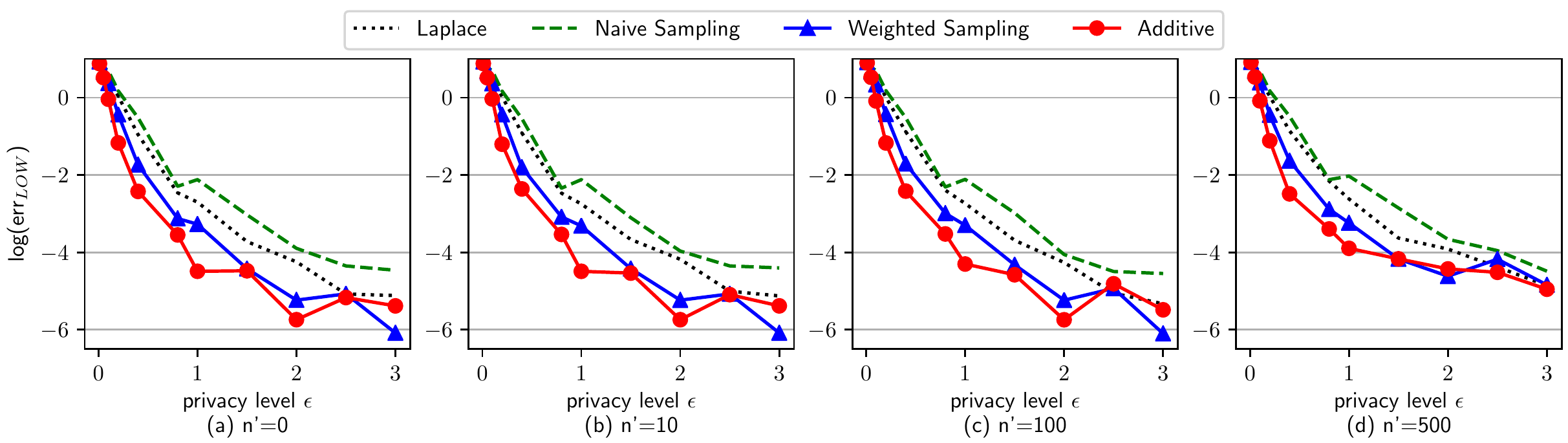}
	\vspace*{-1.5em}
	\caption{Loss of winner error under Borda rule  with $10000$ honest voters and $n'=0, 10, 100, 500$ adversarial votes}
	\vspace*{-0.5em}
	\label{fig:lowbordana}
\end{figure*}


\begin{figure*}[h]
	\centering
	\includegraphics[width=170mm]{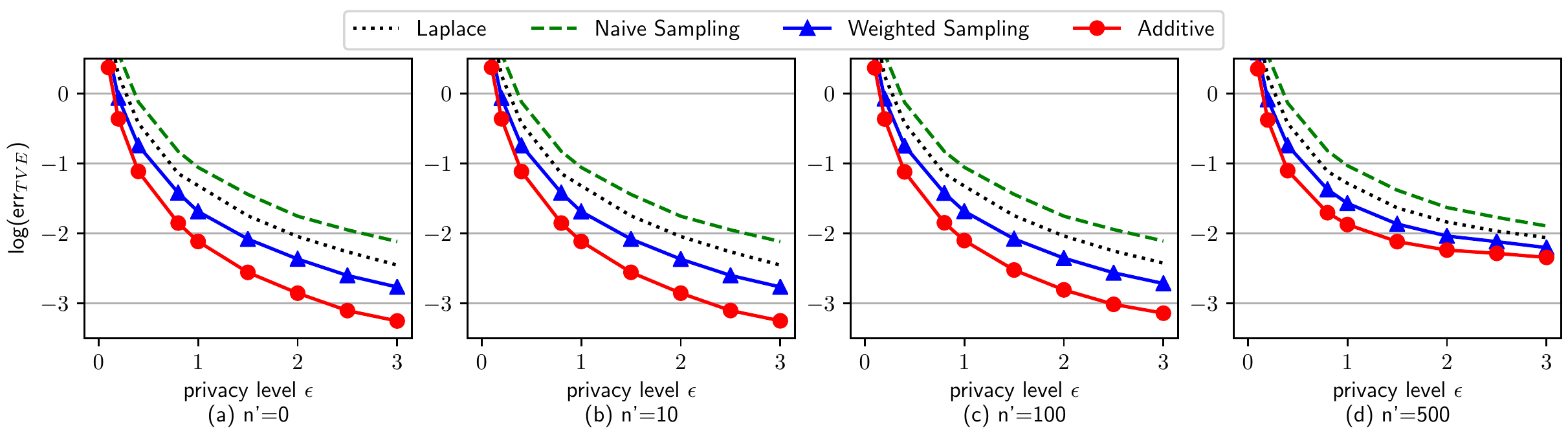}
	\vspace*{-1.5em}
	\caption{Total variation error under Nauru rule with $10000$ honest voters and $n'=0, 10, 100, 500$ adversarial votes.}
	\vspace*{-0.5em}
	\label{fig:tvenauruna}
\end{figure*}

\begin{figure*}[h]
	\centering
	\includegraphics[width=170mm]{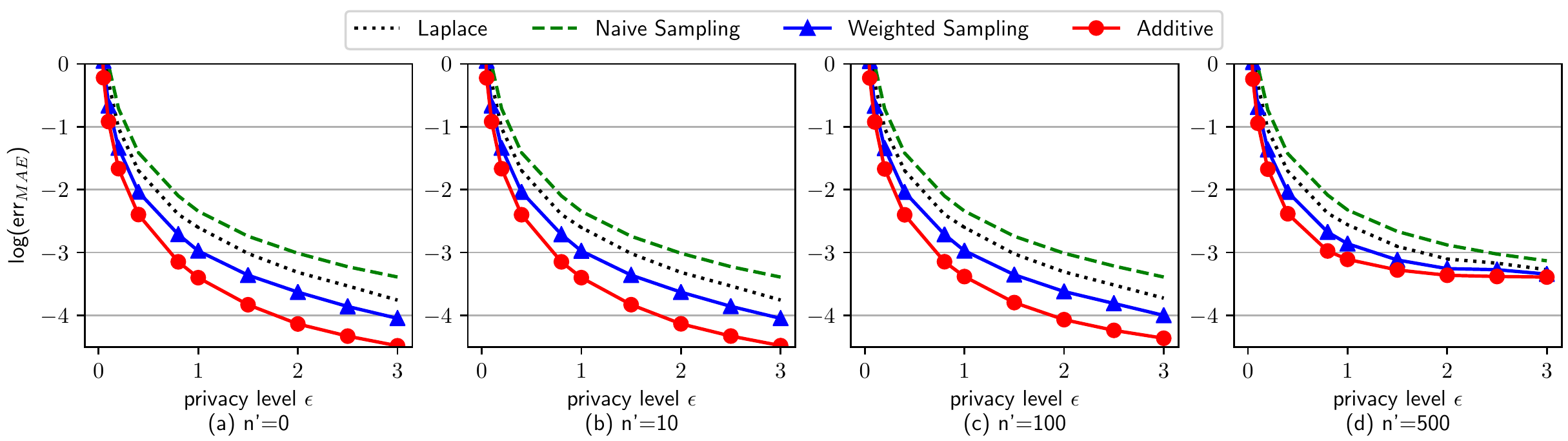}
	\vspace*{-1.5em}
	\caption{Maximum absolute error under Nauru rule with $10000$ honest voters and $n'=0, 10, 100, 500$ adversarial votes}
	\vspace*{-0.5em}
	\label{fig:maenauruna}
\end{figure*}

\begin{figure*}[h]
	\centering
	\includegraphics[width=170mm]{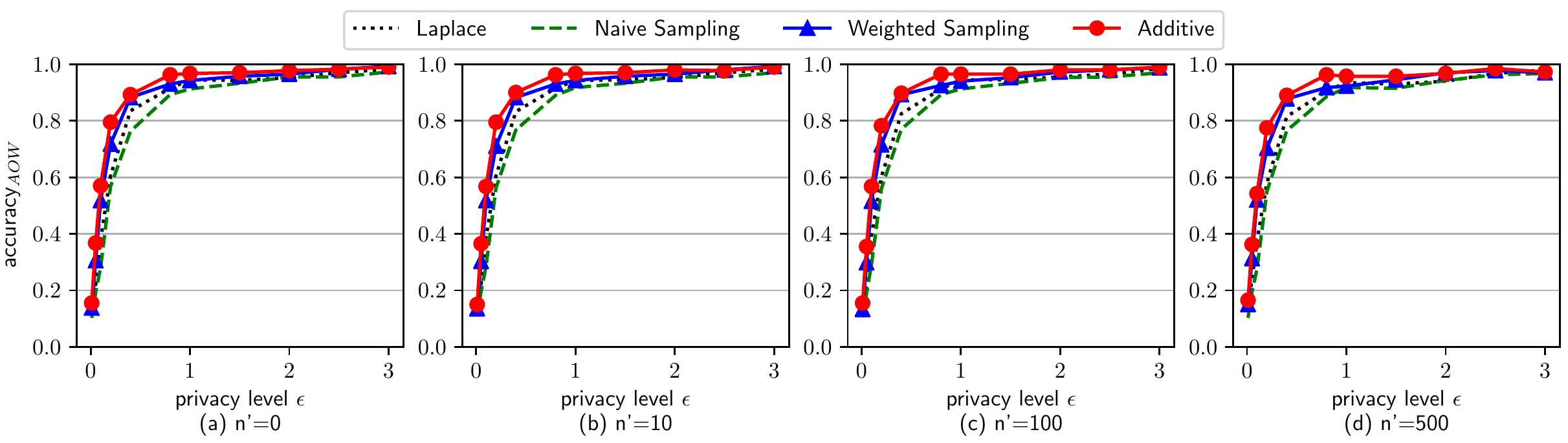}
	\vspace*{-1.5em}
	\caption{Accuracy of winner under Nauru rule with $10000$ honest voters and $n'=0, 10, 100, 500$ adversarial votes.}
	\vspace*{-0.5em}
	\label{fig:accuracynauruna}
\end{figure*}

\begin{figure*}[h]
	\centering
	\includegraphics[width=170mm]{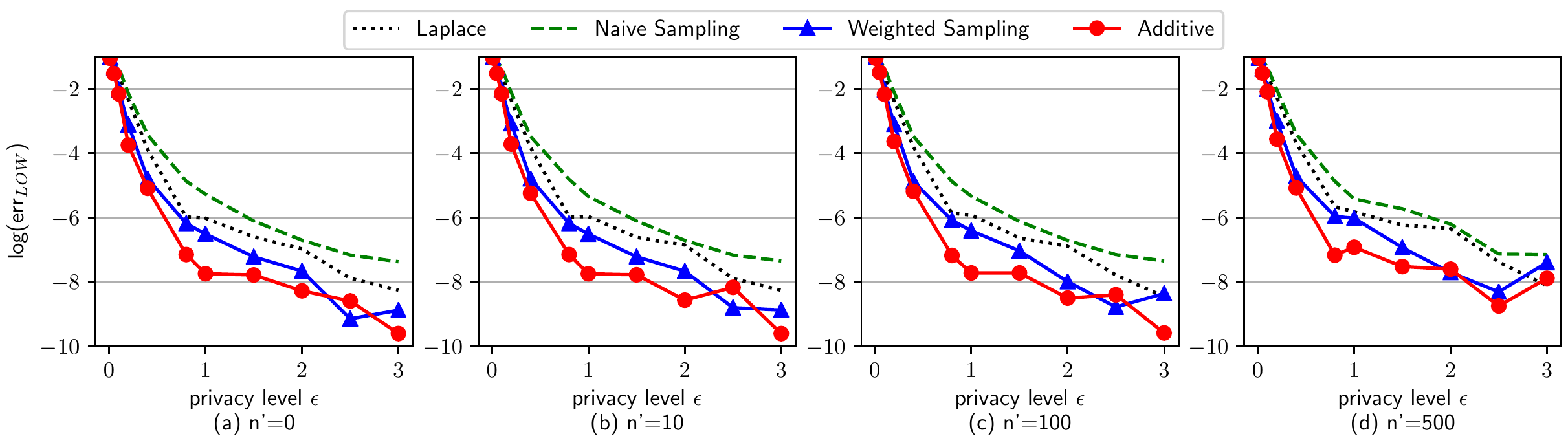}
	\vspace*{-1.5em}
	\caption{Loss of winner error under Nauru rule  with $10000$ honest voters and $n'=0, 10, 100, 500$ adversarial votes}
	\vspace*{-0.5em}
	\label{fig:lownauruna}
\end{figure*}


\subsection{Additional experimental results of view disguise attacks}\label{result:nd}
Results of maximum absolute error, accuracy of winner and loss of winner error under the Nauru rule with varying number of adversarial private views are demonstrated in Figures \ref{fig:maebordand},\ref{fig:accuracybordand}  and \ref{fig:lowbordand} respectively.
The total variation error, maximum absolute error, accuracy of winner and loss of winner error results under the Nauru rule with varying number of adversarial votes are demonstrated in Figures \ref{fig:tvenaurund}, \ref{fig:maenaurund}, \ref{fig:accuracynaurund} and \ref{fig:lownaurund} respectively.

\begin{figure*}[h]
	\centering
	\includegraphics[width=170mm]{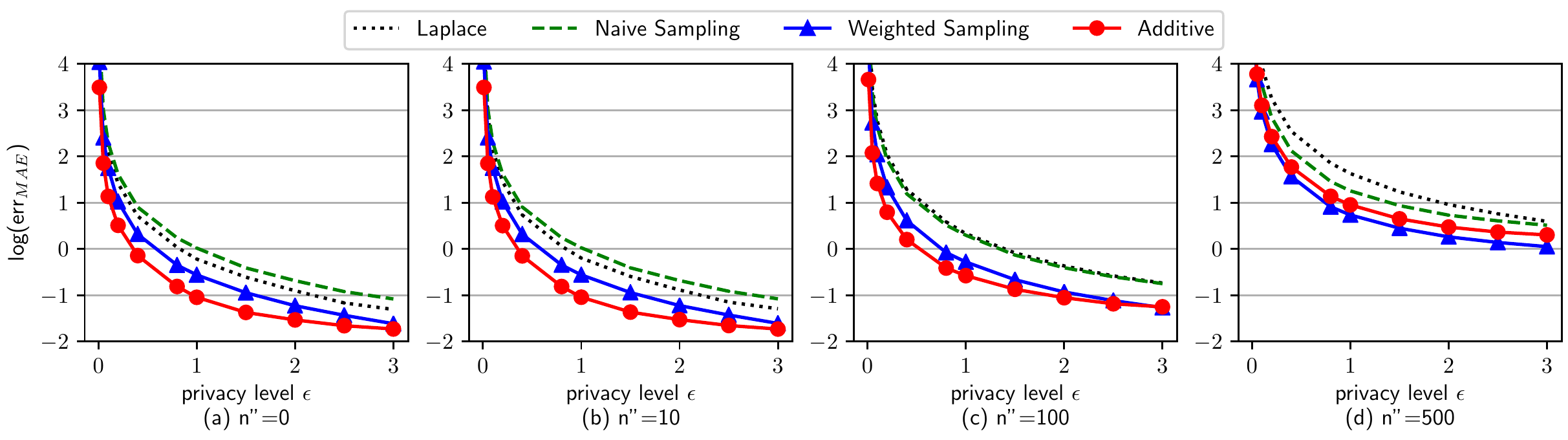}
	\vspace*{-1.5em}
	\caption{Maximum absolute error under Borda rule with $10000$ honest voters and $n'=0, 10, 100, 500$ adversarial private views.}
	\vspace*{-0.5em}
	\label{fig:maebordand}
\end{figure*}

\begin{figure*}
	\centering
	\includegraphics[width=170mm]{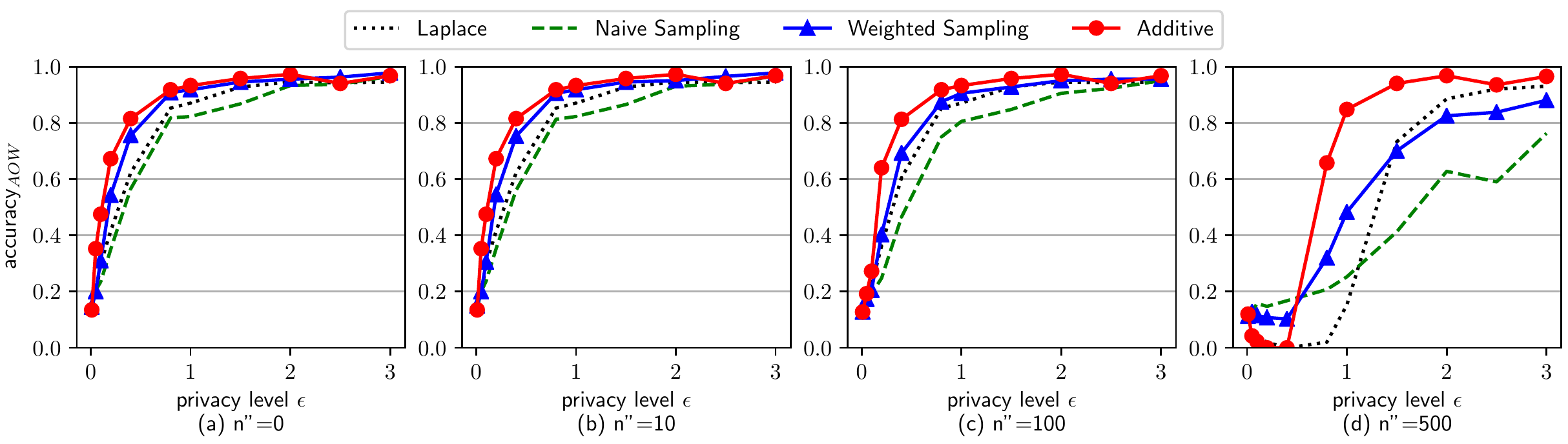}
	\vspace*{-1.5em}
	\caption{Accuracy of winner under Borda rule with $10000$ honest voters and $n'=0, 10, 100, 500$ adversarial private views.}
	\vspace*{-0.5em}
	\label{fig:accuracybordand}
\end{figure*}

\begin{figure*}[h]
	\centering
	\includegraphics[width=170mm]{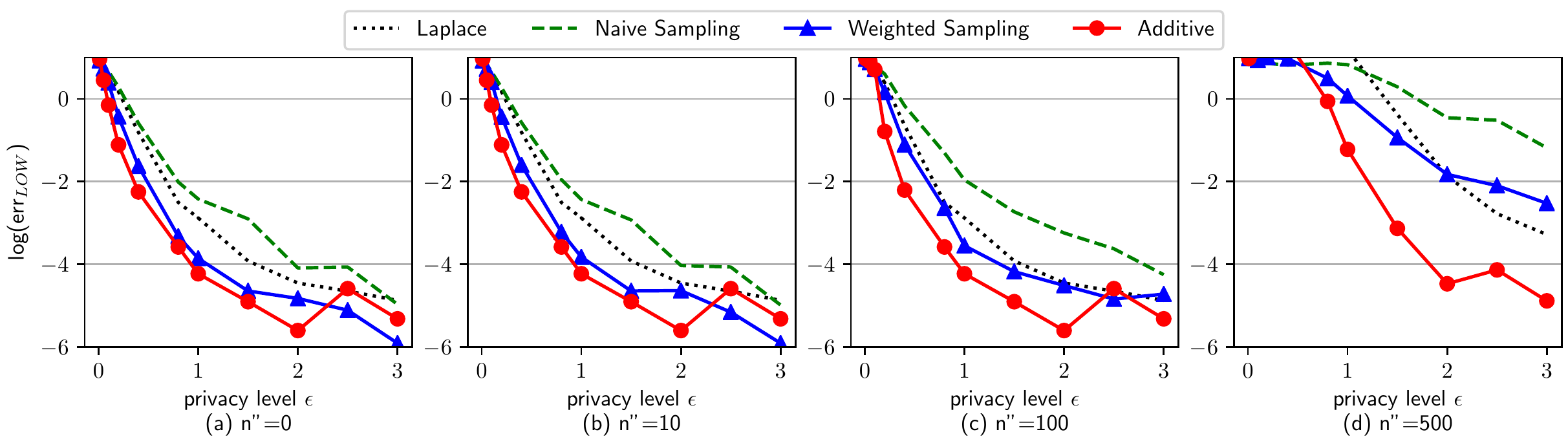}
	\vspace*{-1.5em}
	\caption{Loss of winner error under Borda rule  with $10000$ honest voters and $n'=0, 10, 100, 500$ adversarial private views.}
	\vspace*{-0.5em}
	\label{fig:lowbordand}
\end{figure*}


\begin{figure*}[h]
	\centering
	\includegraphics[width=170mm]{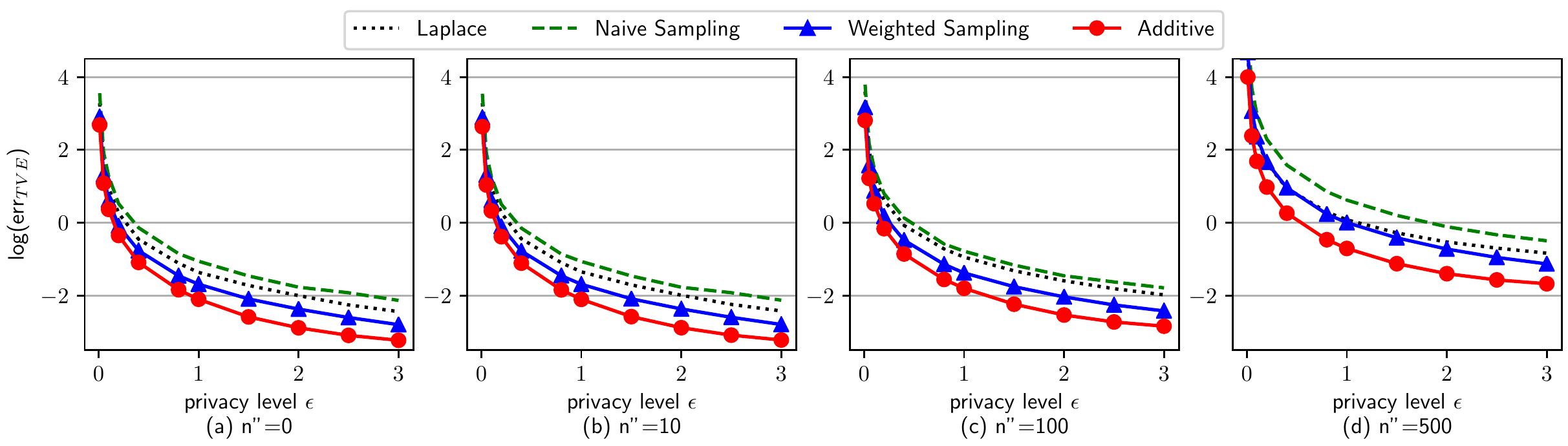}
	\vspace*{-1.5em}
	\caption{Total variation error under Nauru rule with $10000$ honest voters and $n'=0, 10, 100, 500$ adversarial private views.}
	\vspace*{-0.5em}
	\label{fig:tvenaurund}
\end{figure*}

\begin{figure*}[h]
	\centering
	\includegraphics[width=170mm]{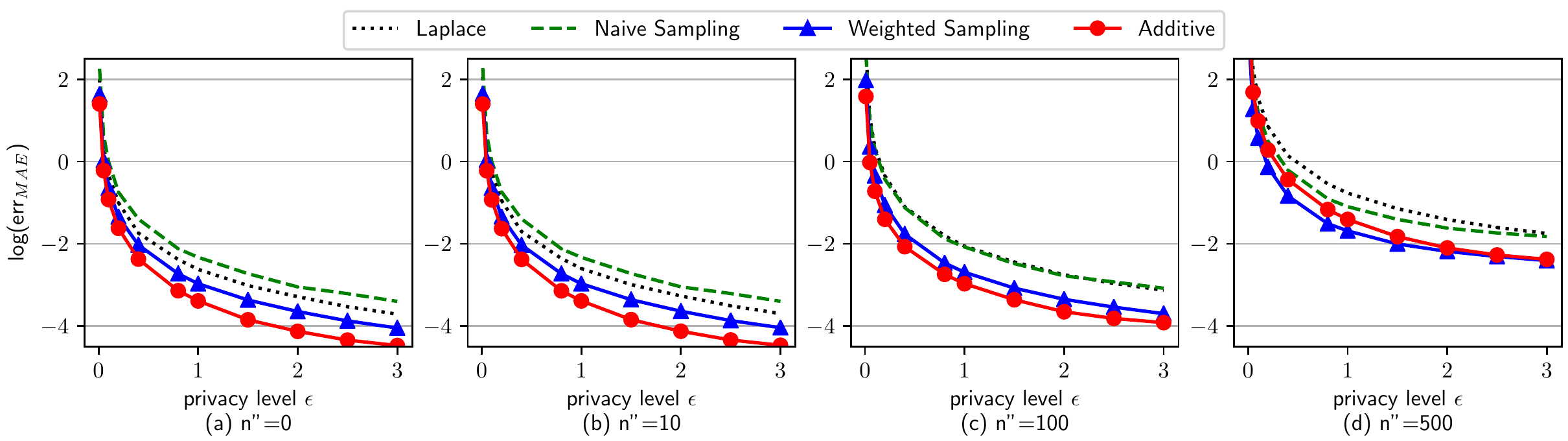}
	\vspace*{-1.5em}
	\caption{Maximum absolute error under Nauru rule with $10000$ honest voters and $n'=0, 10, 100, 500$ adversarial private views.}
	\vspace*{-0.5em}
	\label{fig:maenaurund}
\end{figure*}

\begin{figure*}
	\centering
	\includegraphics[width=170mm]{accuracy_u10000_d8_nauru_na.pdf}
	\vspace*{-1.5em}
	\caption{Accuracy of winner under Nauru rule with $10000$ honest voters and $n'=0, 10, 100, 500$ adversarial private views.}
	\vspace*{-0.5em}
	\label{fig:accuracynaurund}
\end{figure*}

\begin{figure*}[h]
	\centering
	\includegraphics[width=170mm]{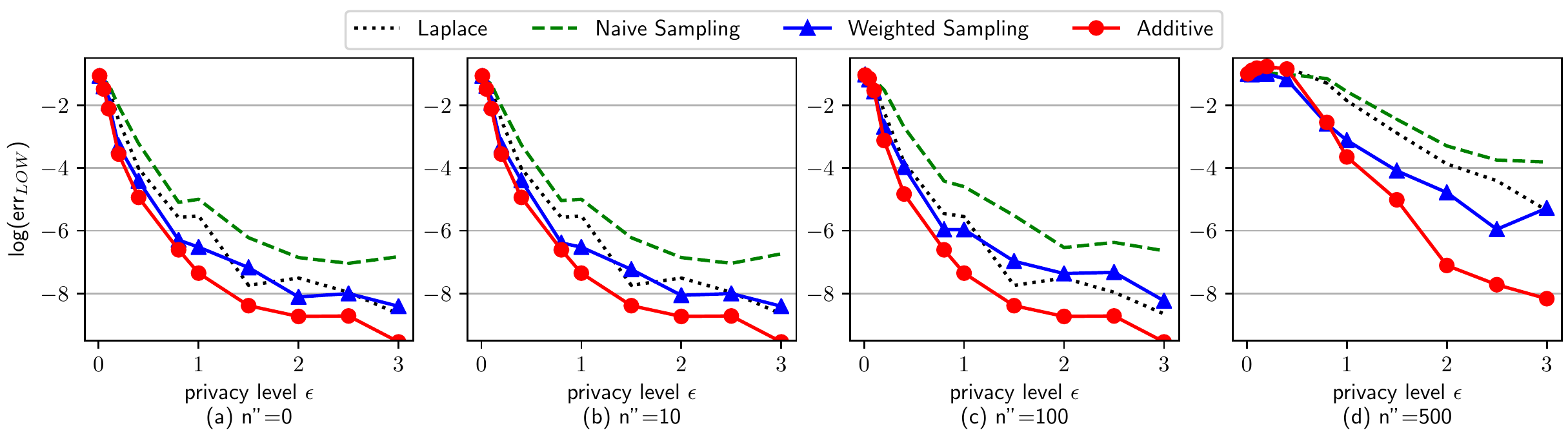}
	\vspace*{-1.5em}
	\caption{Loss of winner error under Nauru rule  with $10000$ honest voters and $n'=0, 10, 100, 500$ adversarial private views.}
	\vspace*{-0.5em}
	\label{fig:lownaurund}
\end{figure*}



\end{document}